\tikzstyle{player1}=[draw,rounded rectangle, minimum size=5mm]
\tikzstyle{player2}=[draw,rectangle,minimum size=5mm]
\tikzstyle{widget}=[draw,ellipse,dashed,minimum size=6mm]
\tikzset{every loop/.style={looseness=7}}
\renewcommand\geq{\geqslant}
\renewcommand\leq{\leqslant}
\newcommand\Z{\mathbb Z}
\newcommand\Zbar{\Z_{\infty}}
\newcommand\N{\mathbb N}
\DeclareMathOperator*{\argmin}{\arg\!\min}
\DeclareMathOperator*{\argmax}{\arg\!\max}
\newcommand\tuple[1]{\langle #1 \rangle}
\def\bar#1{\ensuremath{\overline{#1}}}
\newcommand\NPcoNP{\ensuremath{\mathrm{NP}\cap\mathrm{co\text{-}NP}}\xspace}
\newcommand\MinPl{\mathsf{Min}}
\newcommand\MaxPl{\mathsf{Max}}
\newcommand\weightedarena{\mathcal A}
\newcommand\vertices{V}
\newcommand\minvertices{\vertices_{\MinPl}}
\newcommand\maxvertices{\vertices_{\MaxPl}}
\newcommand\edges{E}
\newcommand\finalvertices{T}
\newcommand\edgeweights{\omega}
\newcommand\game{\mathcal G}
\newcommand\Payoff{\mathbf{P}}
\ProvideDocumentCommand{\gameEx}{o}{\IfNoValueTF{#1}{\tuple{\vertices,\edges,\edgeweights,\Payoff}}{\tuple{\vertices,\edges,\edgeweights,#1}}} 
\newcommand\Val{\textnormal{\textsf{Val}}}
\ProvideDocumentCommand{\RPayoff}{o}{\IfNoValueTF{#1}{\mathbf{RP}}{#1\text{-}\mathbf{RP}}}
\newcommand\arena{\weightedarena}
\newcommand\graphEx{\tuple{\vertices,\edges,\edgeweights}}
\newcommand\strategy{\sigma}
\newcommand\minstrategy{\strategy_{\MinPl}}
\newcommand\maxstrategy{\strategy_{\MaxPl}}
\newcommand\outcomes{\mathsf{Play}}
\newcommand\Value{\mathsf{Val}}
\newcommand\uppervalue{\overline{\Value}}
\newcommand\lowervalue{\underline{\Value}}
\newcommand\Attr{\mathsf{Attr}}
\newcommand\pricestrategy{\Value}
\newcommand\Win{\mathsf{Win}}
\newcommand\updatefunction{\mathsf{up}}
\newcommand\decisionfunction{\mathsf{dec}}
\newcommand\memory{\mathsf{mem}}
\newcommand\MP{\textnormal{\textbf{MP}}}
\newcommand\TP{\textnormal{\textbf{TP}}}
\ProvideDocumentCommand{\MCR}{o}{\IfNoValueTF{#1}{\mathbf{MCR}}{#1\text{-}\mathbf{MCR}}}
\ProvideDocumentCommand{\RDP}{o}{\IfNoValueTF{#1}{\mathbf{RDP}}{#1\text{-}\mathbf{RDP}}}
\ProvideDocumentCommand{\RMP}{o}{\IfNoValueTF{#1}{\mathbf{RMP}}{#1\text{-}\mathbf{RMP}}}
\newcommand\boundedWeight[1]{\MCR^{\leq #1}}
\newcommand\boundeduppervalue[1]{\overline{\Value}^{\leq #1}}
\newcommand\bupval[1]{\boundeduppervalue{#1}}
\newcommand{\vbar}{\overline{v}}
\newcommand\vleq{\preccurlyeq}
\newcommand\vgeq{\succcurlyeq}
\newcommand\operator{\mathcal F}
\newcommand\decomposition{\mathsf{c}}
\newcommand\proj{\mathsf{proj}}
\newcommand\interior{\textnormal{\texttt{in}}}
\newcommand\exterior{\textnormal{\texttt{ex}}}
\newcommand\target{\textnormal{\texttt{t}}}
\newcommand\operatorBis{\mathcal H}
\newcommand\resp{respectively}
\newcommand\she{he\xspace}
\title{To Reach or not to Reach? \hspace{5cm} Efficient Algorithms for
  Total-Payoff Games\footnote{The research leading to these results
    has received funding from the European Union Seventh Framework
    Programme (FP7/2007-2013) under Grant Agreement n°601148
    (CASSTING).}}
\titlerunning{To Reach or not to Reach? Efficient Algorithms for
  Total-Payoff Games}
\author[1]{Thomas Brihaye}
\author[2]{Gilles Geeraerts}
\author[1]{Axel Haddad}
\author[2]{Benjamin Monmege} 
\affil[1]{Universit\'e de Mons, Belgium, \texttt{thomas.brihaye,axel.haddad@umons.ac.be}}
\affil[2]{Universit\'e libre de Bruxelles, Belgium, \texttt{gigeerae,benjamin.monmege@ulb.ac.be}}
\authorrunning{T.\,Brihaye, G.\,Geeraerts, A.\,Haddad and B.\,Monmege}
\subjclass{D.2.4 Software/Program Verification, F.3.1 Specifying and
  Verifying and Reasoning about Programs}
\keywords{Games on graphs; Reachability; Quantitative games; Value
  iteration}
\newtheorem{proposition}[theorem]{Proposition}
\begin{document}

\maketitle

\begin{abstract}
  Quantitative games are two-player zero-sum games played on directed
  weighted graphs. Total-payoff games---that can be seen as a
  refinement of the well-studied mean-payoff games---are the variant
  where the payoff of a play is computed as the sum of the
  weights. Our aim is to describe the first pseudo-polynomial time
  algorithm for total-payoff games in the presence of arbitrary
  weights. It consists of a non-trivial application of the value
  iteration paradigm. Indeed, it requires to study, as a milestone, a
  refinement of these games, called min-cost reachability games, where
  we add a reachability objective to one of the players. For these
  games, we give an efficient value iteration algorithm to compute the
  values and optimal strategies (when they exist), that runs in
  pseudo-polynomial time. We also propose heuristics to speed up the
  computations.
\end{abstract}

\section{Introduction}

\emph{Games played on graphs} are nowadays a well-studied and
well-established model for the computer-aided design of computer
systems, as they enable \emph{automatic synthesis} of systems that are
\emph{correct-by-construction}. Of particular interest are
\emph{quantitative games}, that allow one to model precisely
\emph{quantitative} parameters of the system, such as energy
consumption. In this setting, the game is played by two players on a
directed weighted graph, where the edge weights model, for instance, a
cost or a reward associated to the moves of the players. Each vertex
of the graph belongs to one of the two players who compete by moving a
token along the graph edges, thereby forming an infinite path called a
\emph{play}. With each play is associated a real-valued \emph{payoff}
computed from the sequence of edge weights along the play. The
traditional payoffs that have been considered in the literature
include total-payoff~\cite{GimZie04}, mean-payoff~\cite{EhrMyc79} and
discounted-payoff~\cite{ZwiPat96}. In this quantitative setting, one
player aims at maximising the payoff while the other tries to minimise
it. So one wants to compute, for each player, the best payoff that
\she can guarantee from each vertex, and the associated optimal
strategies (i.e., that guarantee the optimal payoff no matter how the
adversary is playing).

Such quantitative games have been extensively studied in the
literature. Their associated decision problems (\textit{is the value
  of a given vertex above a given threshold?}) are known to be in
\NPcoNP. Mean-payoff games have arguably been best studied from the
algorithmic point of view. A landmark is Zwick and Paterson's
pseudo-polynomial time (i.e., polynomial in the weighted graph when
weights are encoded in unary) algorithm \cite{ZwiPat96}, using the
\emph{value iteration} paradigm that consists in computing a sequence
of vectors of values that converges towards the optimal values of the
vertices. After a fixed, pseudo-polynomial, number of steps, the
computed values are precise enough to deduce the actual values of all
vertices. Better pseudo-polynomial time algorithms have later been
proposed, e.g., in \cite{BjoVor07,BriCha11,ComRiz15}, also achieving
sub-exponential expected running time by means of randomisation.

In this paper, we focus on \emph{total-payoff games}. Given an
infinite play $\pi$, we denote by $\pi[k]$ the prefix of $\pi$ of
length $k$, and by $\TP(\pi[k])$ the (finite) sum of all edge weights
along this prefix. The \emph{total-payoff} of $\pi$, $\TP(\pi)$, is
the inferior limit of all those sums, i.e.,
$\TP(\pi)=\liminf_{k\to \infty} \TP(\pi[k])$.  Compared to mean-payoff
(and discounted-payoff) games, the literature on total-payoff games is
less extensive. Gimbert and Zielonka have shown~\cite{GimZie04} that
optimal memoryless strategies always exist for both players and the
best algorithm to compute the values runs in exponential time
\cite{GawSei09}, and consists in iteratively improving
strategies. Other related works include \emph{energy games} where one
player tries to optimise its energy consumption (computed again as a
sum), keeping the energy level always above 0 (which makes difficult
to apply techniques solving those games in the case of total-payoff);
and a probabilistic variant of total-payoff games, where the weights are
restricted to be non-negative \cite{CheFor13}. Yet, we argue that the
total-payoff objective is interesting as a \emph{refinement} of the
mean-payoff. Indeed, recall first that the total-payoff is finite if
and only if the mean-payoff is null. Then, the computation of the
total-payoff enables a finer, two-stage analysis of a game~$\game$:
\begin{inparaenum}[$(i)$]
\item compute the mean payoff $\MP(\game)$;
\item subtract $\MP(\game)$ from all edge weights, and scale the
  resulting weights if necessary to obtain integers. At that point,
  one has obtained a new game $\game'$ with null mean-payoff;
\item compute $\TP(\game')$ to \emph{quantify the amount of
    fluctuation around the mean-payoff} of the original game.
\end{inparaenum}
Unfortunately, so far, no efficient (i.e., pseudo-polynomial time)
algorithms for total-payoff games have been proposed, and
straightforward adaptations of Zwick and Paterson's value iteration
algorithm for mean-payoff do not work, as we demonstrate at the end of
Section~\ref{sec:quant-games}. In the present article, we fill in this
gap by introducing the first pseudo-polynomial time algorithm for
computing the values in total-payoff games.

Our solution is a non-trivial value iteration algorithm that proceeds
through nested fixed points (see Algorithm~\ref{algo:value-iter-TPO}).
A play of a total-payoff game is infinite by essence. We transform the
game so that one of the players (the minimiser) must ensure a
\emph{reachability objective}: we assume that the game ends once this
reachability objective has been met. The intuition behind this
transformation, that stems from the use of an inferior limit in the
definition of the total-payoff, is as follows: in any play $\pi$ whose
total-payoff is \emph{finite}, there is a position $\ell$ in the play
after which all the partial sums $\TP(\pi[i])$ (with $i\geq \ell$)
will be larger than or equal to the total-payoff $\TP(\pi)$ of $\pi$,
and infinitely often both will be equal. For example, consider the
game depicted in \figurename~\ref{fig:tp}(a), where the maximiser
player (henceforth called $\MaxPl$) plays with the round vertices and
the minimiser ($\MinPl$) with the square vertices. For both players,
the optimal value when playing from $v_1$ is $2$, and the play
$\pi=v_1 v_2 v_3\ v_4 v_5\ v_4 v_3\ (v_4 v_5)^\omega$ reaches this
value (i.e., $\TP(\pi)=2$). Moreover, for all $k\geq 7$:
$\TP(\pi[k])\geq \TP(\pi)$, and infinitely many prefixes ($\pi[8]$,
$\pi[10]$, $\pi[12]$, $\ldots$) have a total-payoff of $2$, as shown
in \figurename~\ref{fig:tp}(b).

Based on this observation, we transform a total-payoff game $\game$,
into a new game that has \emph{the same value as the original
  total-payoff game} but incorporates a reachability objective for
$\MinPl$. Intuitively, in this new game, we allow a new action for
$\MinPl$: after each play prefix $\pi[k]$, \she can ask to \emph{stop
  the game}, in which case the payoff of the play is the payoff
$\TP(\pi[k])$ of the prefix. However, allowing $\MinPl$ to stop the
game at any moment would not allow to obtain the same value as in the
original total-payoff game: for instance, in the example of
\figurename~\ref{fig:tp}(a), $\MinPl$ could secure value $1$ by asking
to stop after $\pi[2]$, which is strictly smaller that the actual
total-payoff ($2$) of the whole play $\pi$. So, we allow $\MaxPl$ to
\emph{veto} to stop the game, in which case both must go on
playing. Again, allowing $\MaxPl$ to turn down all of $\MinPl$'s
requests would be unfair, so we parametrise the game with a natural
number $K$, which is the maximal number of vetoes that $\MaxPl$ can
play (and we denote by $\game^K$ the resulting game). For the
\emph{play} depicted in \figurename~\ref{fig:tp}(b), letting $K=3$ is
sufficient: trying to obtain a better payoff than the optimal,
$\MinPl$ could request to stop after $\pi[0]$, $\pi[2]$ and $\pi[6]$,
and $\MaxPl$ can veto these three requests. After that, $\MaxPl$ can
safely accept the next request of $\MinPl$, since the total payoff of
all prefixes $\pi[k]$ with $k\geq 6$ are larger than or equal to
$\TP(\pi)=2$. Our key technical contribution is to show that \emph{for
  all total-payoff games, there exists a finite, pseudo-polynomial,
  value of $K$ such that the values in $\game^K$ and $\game$ coincide}
(assuming all values are finite in $\game$: we treat the $+\infty$ and
$-\infty$ values separately). Now, assume that, when $\MaxPl$ accepts
to stop the game (possibly because \she has exhausted the maximal
number $K$ of vetoes), the game moves to a \emph{target state}, and
stops. By doing so, we effectively reduce the computation of the
values in the total-payoff game $\game$ to the computation of the
values in the total-payoff game $\game^K$ \emph{with an additional
  reachability objective} (the target state) for $\MinPl$.

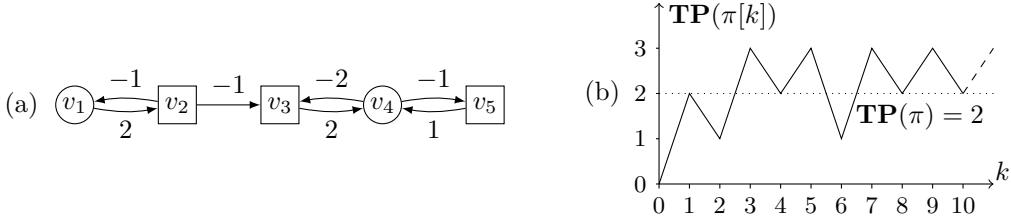
\begin{figure}[tbp]
  \centering
  \raisebox{1cm}{\begin{tikzpicture}[node distance=1.35cm,auto,->,>=latex]
  
    \node[player1](4){\makebox[0mm][c]{$v_4$}}; 
    \node[player2](3)[left of=4]{\makebox[0mm][c]{$v_3$}}; 
    \node[player2](5)[right of=4]{\makebox[0mm][c]{$v_5$}};
    \node[player2](2)[left of=3]{\makebox[0mm][c]{$v_2$}}; 
    \node[player1](1)[left of=2]{\makebox[0mm][c]{$v_1$}}; 
    
    \path 
    (4) edge[bend right=10] node[above]{$-2$} (3) 
        edge[bend left=10] node[above]{$-1$} (5)
    (3) edge[bend right=10] node[below]{$2$} (4)
    (5) edge[bend left=10] node[below]{$1$} (4);
	\path (1) edge[bend right=10] node[below]{$2$} (2);
	\path (2) edge[bend right=10] node[above]{$-1$} (1);
	\path (2) edge node[above]{$-1$} (3);
	
	\node[left of=1, node distance=.7cm]{(a)};
  \end{tikzpicture}}
  \hspace*{.7cm}
  \begin{tikzpicture}[xscale = 0.4,yscale = 0.6]
 
\draw[->] (0, -0.1) -- (0,4);
\draw[->] (-0.1,0) -- (11,0);
\draw (1,0) -- (1,-0.1);
\draw (2,0) -- (2,-0.1);
\draw (3,0) -- (3,-0.1);
\draw (4,0) -- (4,-0.1);
\draw (5,0) -- (5,-0.1);
\draw (6,0) -- (6,-0.1);
\draw (7,0) -- (7,-0.1);
\draw (8,0) -- (8,-0.1);
\draw (9,0) -- (9,-0.1);
\draw (10,0) -- (10,-0.1);

\draw (0,1) -- (-0.1,1);
\draw (0,2) -- (-0.1,2);
\draw (0,3) -- (-0.1,3);

\draw (0,0) -- (1,2) -- (2,1) --
		(3,3) -- (4,2) -- (5,3) -- 
		(6,1) -- (7,3) -- (8,2) --
		(9,3) -- (10,2);
\draw[dashed] (10,2) -- (11,3);
\draw[dotted] (0,2) -- (11,2);
\node[below left] at (11,2) {$\TP(\pi) = 2$};

\node[left] at (-0.1,0) {\small $0$};
\node[left] at (-0.1,1) {\small $1$};
\node[left](n) at (-0.1,2) {\small $2$};
\node[left] at (-0.1,3) {\small $3$};
\node[below right] at (0,4.2) {$\TP(\pi[k])$};
\node[below] at (11.3,.7) {$k$};

\node[below] at (0,-0.1) {\small $0$};
\node[below] at (1,-0.1) {\small $1$};
\node[below] at (2,-0.1) {\small $2$};
\node[below] at (3,-0.1) {\small $3$};
\node[below] at (4,-0.1) {\small $4$};
\node[below] at (5,-0.1) {\small $5$};
\node[below] at (6,-0.1) {\small $6$};
\node[below] at (7,-0.1) {\small $7$};
\node[below] at (8,-0.1) {\small $8$};
\node[below] at (9,-0.1) {\small $9$};
\node[below] at (10,-0.1) {\small $10$};

 \node[left of= n, node distance=.5cm]{(b)};
\end{tikzpicture}
\caption{(a) A total-payoff game, and (b) the evolution of the partial
  sums in $\pi$.}
  \label{fig:tp}
\end{figure}

In the following, such refined total-payoff games---where $\MinPl$
\emph{must} reach a designated target vertex---will be called
\emph{min-cost reachability games}. Failing to reach the target
vertices is the worst situation for $\MinPl$, so the payoff of all
plays that do not reach the target is $+\infty$, irrespective of the
weights along the play. Otherwise, the payoff of a play is the sum of
the weights up to the first occurrence of the target. As such, this
problem nicely generalises the classical shortest path problem in a
weighted graph. In the one-player setting (considering the point of
view of $\MinPl$ for instance), this problem can be solved in
polynomial time by Dijkstra's and Floyd-Warshall's algorithms when the
weights are non-negative and arbitrary, respectively. In
\cite{KhaBor08}, Khachiyan \textit{et al.}  propose an extension of
Dijkstra's algorithm to handle the two-player, non-negative weights
case. However, in our more general setting (two players, arbitrary
weights), this problem has, as far as we know, not been studied as
such, except that the associated decision problem is known to be in
\NPcoNP \cite{FilGen12}.  A pseudo-polynomial time algorithm to solve
a very close problem, called the \emph{longest shortest path problem}
has been introduced by Bj\"orklund and Vorobyov \cite{BjoVor07} to
eventually solve mean-payoff games. However, because of this peculiar
context of mean-payoff games, their definition of the length of a path
differs from our definition of the payoff and their algorithm cannot
be easily adapted to solve our min-cost reachability problem. Thus, as
a second contribution, we show that a value iteration algorithm
enables us to compute in pseudo-polynomial time the values of a
min-cost reachability game. We believe that min-cost reachability
games bear their own potential theoretical and practical
applications\footnote{An example of practical application would be to
  perform controller synthesis taking into account energy
  consumption. On the other hand, the problem of computing the values
  in certain classes of priced timed games has recently been reduced
  to computing the values in min-cost reachability games
  \cite{BGKMMT14a}.}.  Those games are discussed in
Section~\ref{sec:reachability-objectives}. In addition to the
pseudo-polynomial time algorithm to compute the values, we show how to
compute optimal strategies for both players and characterise them:
there is always a memoryless strategy for the maximiser player, but we
exhibit an example (see \figurename~\ref{fig:Weighted-game}$(a)$)
where the minimiser player needs (finite) memory. Those results on
min-cost reachability games are exploited in
Section~\ref{sec:solving-total-payoff} where we introduce and prove
correct our efficient algorithm for total-payoff games.

Finally, we briefly present our implementation in
Section~\ref{sec:experiments}, using as a core the numerical
model-checker PRISM. This allows us to describe some heuristics able
to improve the practical performances of our algorithms for
total-payoff games and min-cost reachability games on certain
subclasses of graphs. More technical
explanations and full proofs may be found in an extended version of
this article \cite{BGHM14}.

\section{Quantitative games with arbitrary weights}
\label{sec:quant-games}

We denote by $\Z$ the set of integers, and
$\Zbar=\Z\cup\{-\infty,+\infty\}$. The set of vectors indexed by
$\vertices$ with values in $S$ is denoted by $S^\vertices$.  We let
$\vleq$ be the pointwise order over $\Zbar^\vertices$, where
$x\vleq y$ if and only if $x(v)\leq y(v)$ for all $v\in \vertices$.

We consider two-player turn-based games on weighted graphs and denote
the two \emph{players} by $\MaxPl$ and $\MinPl$.  A \emph{weighted
  graph} is a tuple $\graphEx$ where
$\vertices=\maxvertices\uplus \minvertices$ is a finite set of
vertices partitioned into the sets $\maxvertices$ and $\minvertices$
of $\MaxPl$ and $\MinPl$ respectively,
$\edges\subseteq \vertices\times \vertices$ is a set of \emph{directed
  edges}, $\edgeweights\colon \edges \to \Z$ is the \emph{weight
  function}, associating an integer weight with each edge. In our
drawings, $\MaxPl$ vertices are depicted by circles; $\MinPl$ vertices
by boxes. For every vertex $v\in\vertices$, the set of successors of
$v$ by $\edges$ is denoted by
$\edges(v) = \{v'\in\vertices\mid (v,v')\in \edges\}$. Without loss of
generality, we assume that every graph is deadlock-free, i.e., for all
vertices~$v$, $\edges(v)\neq\emptyset$. Finally, throughout this
article, we let $W=\max_{(v,v')\in\edges}|\edgeweights(v,v')|$ be the
greatest edge weight (in absolute value) in the game graph. A
\emph{finite play} is a finite sequence of vertices
$\pi=v_0v_1\cdots v_k$ such that for all $0\leq i<k$,
$(v_i,v_{i+1})\in \edges$. A \emph{play} is an infinite sequence of
vertices $\pi = v_0v_1\cdots$ such that every finite prefix
$v_0\cdots v_k$, denoted by $\pi[k]$, is a finite play.

The total-payoff of a finite play $\pi=v_0 v_1 \cdots v_k$ is obtained
by summing up the weights along $\pi$, i.e.,
$\TP(\pi) = \sum_{i=0}^{k-1} \edgeweights(v_i,v_{i+1})$. In the
following, we sometimes rely on the mean-payoff to obtain information
about total-payoff objectives. The \emph{mean-payoff} computes the
average weight of~$\pi$, i.e., if $k\geq 1$,
$\MP(\pi) = \frac{1}{k}\sum_{i=0}^{k-1} \edgeweights(v_i,v_{i+1})$,
and $\MP(\pi)=0$ when $k=0$. These definitions are lifted
to infinite plays as follows. The total-payoff of a play $\pi$ is
given by $\TP(\pi) = \liminf_{k\to \infty} \TP(\pi[k])$.\footnote{Our
  results can easily be extended by substituting a $\limsup$ for the
  $\liminf$. The $\liminf$ is more natural since we adopt the point of
  view of the maximiser $\MaxPl$, hence the $\liminf$ is the
  \emph{worst} partial sum seen infinitely often.}  Similarly, the
mean-payoff of a play $\pi$ is given by
$\MP(\pi) = \liminf_{k\to \infty} \MP(\pi[k])$. A weighted graph
equipped with these payoffs is called a \emph{total-payoff game} or a
\emph{mean-payoff game}, respectively.

A \emph{strategy} for $\MaxPl$ (\resp, $\MinPl$) in a game
$\game=\gameEx$ (with $\Payoff$ one of the previous payoffs), is a
mapping $\strategy\colon \vertices^* \maxvertices \to \vertices$
(\resp, $\strategy\colon \vertices^* \minvertices \to \vertices$) such
that for all sequences $\pi= v_0\cdots v_k$ with $v_k\in \maxvertices$
(\resp, $v_k\in \minvertices$), $(v_k,\strategy(\pi))\in \edges$. A
play or finite play $\pi = v_0v_1\cdots$ conforms to a strategy
$\strategy$ of $\MaxPl$ (\resp, $\MinPl$) if for all $k$ such that
$v_k\in \maxvertices$ (\resp, $v_k\in\minvertices$),
$v_{k+1} = \strategy(\pi[k])$.  A strategy $\strategy$ is
\emph{memoryless} if for all finite plays $\pi, \pi'$, we have
$\strategy(\pi v)=\strategy(\pi' v)$ for all $v$. A strategy
$\strategy$ is said to be \emph{finite-memory} if it can be encoded in
a deterministic Moore machine,
$\tuple{M,m_0,\updatefunction,\decisionfunction}$, where $M$ is a
finite set representing the memory of the strategy, with an initial
memory content $m_0\in M$,
$\updatefunction\colon M\times\vertices \to M$ is a memory-update
function, and
$\decisionfunction\colon M\times \vertices \to \vertices$ a decision
function such that for every finite play $\pi$ and vertex $v$,
$\strategy(\pi v)=\decisionfunction(\memory(\pi v),v)$ where
$\memory(\pi)$ is defined by induction on the length of the finite
play $\pi$ as follows: $\memory(v_0)=m_0$, and
$\memory(\pi v)=\updatefunction(\memory(\pi),v)$.  We say that $|M|$
is the \emph{size} of the strategy.

For all strategies $\maxstrategy$ and $\minstrategy$, for all vertices
$v$, we let $\outcomes(v,\maxstrategy,\minstrategy)$ be the outcome of
$\maxstrategy$ and $\minstrategy$, defined as the unique play
conforming to $\maxstrategy$ and $\minstrategy$ and starting in~$v$.
Naturally, the objective of $\MaxPl$ is to maximise its payoff. In
this model of zero-sum game, $\MinPl$ then wants to minimise the
payoff of $\MaxPl$. Formally, we let $\Val_\game(v,\maxstrategy)$ and
$\Val_\game(v,\minstrategy)$ be the respective values of the
strategies, defined as (recall that $\Payoff$ is either $\TP$ or
$\MP$):
$\Val_\game(v,\maxstrategy) = \inf_{\minstrategy}
\Payoff(\outcomes(v,\maxstrategy,\minstrategy))$
and
$\Val_\game(v,\minstrategy) = \sup_{\maxstrategy}
\Payoff(\outcomes(v,\maxstrategy,\minstrategy))$.
Finally, for all vertices $v$, we let
$\lowervalue_\game(v) = \sup_{\maxstrategy}
\Val_\game(v,\maxstrategy)$
and
$\uppervalue_\game(v) = \inf_{\minstrategy}
\Val_\game(v,\minstrategy)$
be the \emph{lower} and \emph{upper values} of $v$ respectively.  We
may easily show that $\lowervalue_\game\vleq \uppervalue_\game$. We
say that strategies $\maxstrategy^*$ of $\MaxPl$ and $\minstrategy^*$
of $\MinPl$ are optimal if, for all vertices $v$:
$\Val_\game(v,\maxstrategy^*)=\lowervalue_\game(v)$ and
$\Val_\game(v,\minstrategy^*)=\uppervalue_\game(v)$ respectively.
We say that a game $\game$ is \emph{determined} if for all
vertices~$v$, its lower and upper values are equal. In that case, we
write $\Val_\game(v)=\lowervalue_\game(v)=\uppervalue_\game(v)$, and
refer to it as the \emph{value} of~$v$. If the game is clear from the
context, we may drop the index $\game$ of all previous values.
Mean-payoff and total-payoff games are known to be determined, with
the existence of optimal memoryless strategies
\cite{ZwiPat96,GimZie04}.

Total-payoff games have been mainly considered as a refinement of
mean-payoff games~\cite{GimZie04}. Indeed, if the mean-payoff value of
a game is positive (\resp, negative), its total-payoff value is
necessarily $+\infty$ (\resp, $-\infty$). When the mean-payoff value
is $0$ however, the total-payoff is necessarily different from
$+\infty$ and $-\infty$, hence total-payoff games are particularly
useful in this case. Deciding whether the total-payoff value of a
vertex is positive can be achieved in \NPcoNP. In \cite{GawSei09}, the
complexity is refined to UP~$\cap$~co-UP, and values are shown to be
effectively computable solving nested fixed point equations with a
strategy iteration algorithm working in exponential time in the worst
case.

Our aim is to give a pseudo-polynomial algorithm solving total-payoff
games. In many cases, (e.g., mean-payoff games), a successful way to
obtain such an efficient algorithm is the \emph{value iteration
  paradigm}.  Intuitively, value iteration algorithms compute
successive approximations $x_0, x_1, \ldots, x_i, \ldots$ of the game
value by restricting the number of turns that the players are allowed
to play: $x_i$ is the vector of optimal values achievable when the
players play at most $i$ turns. The sequence of values is computed by
means of an operator $\operator$, letting $v_{i+1}=\operator(v_i)$ for
all $i$. Good properties (Scott-continuity and monotonicity) of
$\operator$ ensure convergence towards its smallest or greatest fixed
point (depending on the value of $x_0$), which, in some cases, happens
to be the value of the game. Let us briefly explain why such a simple
approach fails with total-payoff games. In our case, the operator
$\operator$ is such that
$\operator(x)(v)=\max_{v'\in E(v)} \edgeweights (v,v') + x(v')$ for
all $v\in \maxvertices$ and
$\operator(x)(v)=\min_{v'\in E(v)} \edgeweights (v,v') + x(v')$ for
all $v\in \minvertices$. This definition matches the intuition that
$x_i$ are optimal values after $i$ turns.

Then, consider the example of \figurename~\ref{fig:tp}(a), limited to
vertices $\{v_3,v_4,v_5\}$ for simplicity. Observe that there are two
simple cycles with weight~$0$, hence the total-payoff value of this
game is finite. $\MaxPl$ has the choice between cycling into one of
these two cycles. It is easy to check that $\MaxPl$'s optimal choice
is to enforce the cycle between $v_4$ and $v_5$, securing a payoff of
$-1$ from $v_4$ (because of the $\liminf$ definition of $\TP$). Hence,
the values of $x_3$, $x_4$ and $x_5$ are respectively $1$, $-1$
and~$0$. In this game, we have
$\operator(x_3,x_4,x_5) = \big(2+x_4,\max (-2+x_3,-1+x_5),1+x_4\big)$,
and the vector $(1,-1,0)$ is indeed a fixed point
of~$\operator$. However, it is neither the greatest nor the smallest
fixed point of $\operator$, since \emph{if} $x$ is a fixed point of
$\operator$, \emph{then} $x+(a,a,a)$ is also a fixed point, for all
constant $a\in \Z$.  If we try to initialise the value iteration
algorithm with value $(0,0,0)$, which could seem a reasonable choice,
the sequence of computed vectors is: $(0,0,0)$, $(2,-1,1)$, $(1,0,0)$,
$(2,-1,1)$, $(1,0,0)$, $\ldots$ that is not stationary, and does not
even contain $(1,-1,0)$. Thus, it seems difficult to compute the
actual game values with an iterative algorithm relying on the
$\operator$ operator, as in the case of mean-payoff games.\footnote{In
  the context of stochastic models like Markov decision processes,
  Strauch \cite{Str66} already noticed that in the presence of
  arbitrary weights, the value iteration algorithm does not
  necessarily converge towards the accurate value: see
  \cite[Ex.~7.3.3]{Put94} for a detailed explanation.} Notice that, in
the previous example, the Zwick and Paterson's algorithm
\cite{ZwiPat96} to solve mean-payoff games would easily conclude from
the sequence above, since the vectors of interest are then the one
divided by the length of the current sequence, i.e., $(0,0,0)$,
$(1,-0.5,0.5)$, $(0.33,0,0)$, $(0.5,-0.25,0.25)$, $(0.2,0,0)$,
$\ldots$ indeed converging towards $(0,0,0)$, the mean-payoff values
of this game.

Instead, as explained in the introduction, we propose a different
approach that consists in reducing total-payoff games to min-cost
reachability games where $\MinPl$ must enforce a reachability
objective on top of his optimisation objective. The aim of the next
section is to study these games, and we reduce total-payoff games to
them in Section~\ref{sec:solving-total-payoff}.

\section{Min-cost reachability games}
\label{sec:reachability-objectives}

In this section, we consider \emph{min-cost reachability games} (MCR
games for short), a variant of total-payoff games where one player has
a reachability objective that \she must fulfil first, before
optimising his quantitative objective. Without loss of generality, we
assign the reachability objective to player $\MinPl$, as this will
make our reduction from total-payoff games easier to explain. Hence,
when the target is not reached along a path, its payoff shall be the
worst possible for $\MinPl$, i.e., $+\infty$.  Formally, an MCR game
is played on a weighted graph $\graphEx$ equipped with a target set of
vertices $\finalvertices\subseteq \vertices$. The payoff
$\MCR[\finalvertices](\pi)$ of a play $\pi=v_0v_1\ldots$ is given by
$\MCR[\finalvertices](\pi)=+\infty$ if the play avoids
$\finalvertices$, i.e., if for all $k\geq 0$,
$v_k\notin\finalvertices$, and $\MCR[\finalvertices](\pi)=\TP(\pi[k])$
if $k$ is the least position in $\pi$ such that
$v_k\in\finalvertices$. Lower and upper values are then defined as in
Section~\ref{sec:quant-games}. By an indirect consequence of Martin's
theorem~\cite{Mar75}, we can show that MCR games are also determined.
Optimal strategies may however not exist, as we will see later.

As an example, consider the MCR game played on the weighted graph of
\figurename~\ref{fig:Weighted-game}$(a)$, where $W$ is a positive
integer and $v_3$ is the target.
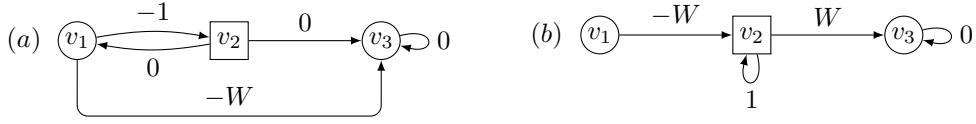
\begin{figure}[tbp]
\begin{center}
  \begin{tikzpicture}[node distance=2cm,auto,->,>=latex]
      \node[player1](1){\makebox[0mm][c]{$v_1$}}; 
      \node[player2](2)[right of=1]{\makebox[0mm][c]{$v_2$}}; 
      \node[player1](3)[right of=2]{\makebox[0mm][c]{$v_3$}};
      
      \draw[rounded corners]
      (1) -- (0, -1) -- node[above]{$-W$} (4, -1) -- (3);

      \path 
      (1) edge[bend left=10] node[above]{$-1$} (2) 
      (2) edge[bend left=10] node[below]{$0$} (1)
          edge node[above]{$0$} (3)
      (3) edge[loop right] node[right]{$0$} (3);

      \node()[left of=1,node distance=.7cm] {$(a)$};
    \end{tikzpicture}
    \qquad 
    \begin{tikzpicture}[node distance=2cm,auto,->,>=latex]
    \node[player1](1){\makebox[0mm][c]{$v_1$}}; 
    \node[player2](2)[right of=1]{\makebox[0mm][c]{$v_2$}}; 
    \node[player1](3)[right of=2]{\makebox[0mm][c]{$v_3$}};
    
    \path 
    (1) edge node[above]{$-W$} (2) 
    (2) edge[loop below] node[below]{$1$} (2)
        edge node[above]{$W$} (3)
    (3) edge[loop right] node[right]{$0$} (3);

    \node()[left of=1,node distance=.7cm] {$(b)$};
  \end{tikzpicture}
\end{center}
\caption{Two weighted graphs}
\label{fig:Weighted-game}
\end{figure}
We claim that the values of vertices $v_1$ and $v_2$ are both
$-W$. Indeed, consider the following strategy for $\MinPl$: during
each of the first $W$ visits to $v_2$ (if any), go to $v_1$; else, go
to $v_3$. Clearly, this strategy ensures that the target will
eventually be reached, and that either
\begin{inparaenum}[$(i)$]
\item edge $(v_1,v_3)$ (with weight $-W$) will eventually be
  traversed; or
\item edge $(v_1,v_2)$ (with weight $-1$) will be traversed at least
  $W$ times.
\end{inparaenum}
Hence, in all plays following this strategy, the payoff will be at
most $-W$. This strategy allows $\MinPl$ to secure $-W$, but \she
cannot ensure a lower payoff, since $\MaxPl$ always has the
opportunity to take the edge $(v_1,v_3)$ (with weight $-W$) instead of
cycling between $v_1$ and $v_2$. Hence, $\MaxPl$'s optimal choice is
to follow the edge $(v_1,v_3)$ as soon as $v_1$ is reached, securing a
payoff of $-W$. The $\MinPl$ strategy we have just given is optimal,
and there is \emph{no optimal memoryless strategy} for $\MinPl$.
Indeed, always playing $(v_2,v_3)$ does not ensure a payoff $\leq -W$;
and, always playing $(v_2,v_1)$ does not guarantee to reach the
target, and this strategy has thus value $+\infty$.

Let us note that Bj\"orklund and Vorobyov introduce in \cite{BjoVor07}
the \emph{longest shortest path problem} (LSP for short) and propose a
pseudo-polynomial time algorithm to solve it. However, their
definition has several subtle but important differences to ours, such
as definition of the payoff of a play (equivalently, the length of a
path). As an example, in the game of
\figurename~\ref{fig:Weighted-game}(a), the play
$\pi=(v_1 v_2)^\omega$ (that never reaches the target) has length
$-\infty$ in their setting, while, in our setting,
$\MCR[\{v_3\}](\pi)=+\infty$. Moreover, even if a pre-treatment would
hypothetically allow one to use the LSP algorithm to solve MCR games,
our solution is simpler to implement with the same worst-case
complexity and heuristics only applicable to our value iteration
solution. We now present our contributions for MCR games:

\begin{theorem}\label{thm:optimal-strategy}
  Let $\game = \gameEx[\MCR[T]]$ be an MCR game.
  \begin{enumerate}
  \item For $v\in V$, deciding whether $\Value(v)=+\infty$ can be
    done in polynomial time.
  \item For $v\in V$, deciding whether $\Value(v)=-\infty$ is as hard
    as mean-payoff, in \NPcoNP and can be achieved in
    pseudo-polynomial time.
  \item If $\Value(v)\neq -\infty$ for all vertices $v\in V$, then
    both players have optimal strategies. Moreover, $\MaxPl$ always
    has a memoryless optimal strategy, while $\MinPl$ may require
    finite (pseudo-polynomial) memory in his optimal strategy.
  \item Computing all values $\Value(v)$ (for $v\in V$), as well as
    optimal strategies (if they exist) for both players, can be done
    in (pseudo-polynomial) time $O(|\vertices|^2 |\edges| W)$.
  \end{enumerate}
\end{theorem}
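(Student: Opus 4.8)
The plan is to obtain all four items from a single \emph{value iteration} based on the operator $\operator\colon\Zbar^\vertices\to\Zbar^\vertices$ defined by $\operator(x)(v)=0$ for $v\in T$, by $\operator(x)(v)=\max_{v'\in\edges(v)}\bigl(\edgeweights(v,v')+x(v')\bigr)$ for $v\in\maxvertices\setminus T$, and by $\operator(x)(v)=\min_{v'\in\edges(v)}\bigl(\edgeweights(v,v')+x(v')\bigr)$ for $v\in\minvertices\setminus T$. Starting from $x_0$ with $x_0(v)=0$ on $T$ and $x_0(v)=+\infty$ elsewhere, I set $x_{i+1}=\operator(x_i)$. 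The first step is to pin down the semantics of the iterates: $x_i(v)$ is the value of the variant of the game in which $\MinPl$ must reach $T$ within $i$ steps (payoff $+\infty$ otherwise). This is a routine induction on $i$, and it immediately yields that $\operator$ is monotone and that $(x_i)_i$ is non-increasing (allowing more steps can only help $\MinPl$). A separate (also inductive) determinacy argument then shows $\lim_i x_i=\Value$ pointwise in $\Zbar$. Item~1 is now immediate: $\Value(v)=+\infty$ exactly when $\MinPl$ cannot force reaching $T$, i.e.\ $v\notin\Attr_{\MinPl}(T)$; this attractor is computed by the standard backward fixed point in time $O(|\edges|)$. Since a $\MaxPl$ vertex belongs to $\Attr_{\MinPl}(T)$ only if all its successors do, the attractor is closed under $\MaxPl$'s moves, so one checks by induction that $x_i(v)=+\infty$ for all $i$ precisely on $\vertices\setminus\Attr_{\MinPl}(T)$; the iteration can henceforth be run on the sub-game induced by $\Attr_{\MinPl}(T)$, where every value is $<+\infty$.

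For item~2, the key observation is that, on $\Attr_{\MinPl}(T)$, a value equals $-\infty$ iff $\MinPl$ can force a negative cycle that \she can still leave towards $T$: iterating that cycle makes the accumulated weight diverge. This is exactly a mean-payoff condition, so membership in $\NPcoNP$ and pseudo-polynomiality follow from the corresponding bounds for mean-payoff games (combined with the $\MinPl$-attractor closure), while the matching hardness is obtained by a reduction that attaches to a mean-payoff game a uniformly $\MinPl$-reachable target, turning ``$\MinPl$ secures mean-payoff $<0$'' into ``value $=-\infty$''. Inside the value iteration itself I would detect these vertices by a \emph{threshold}: finite values provably lie in $[-(|\vertices|-1)W,(|\vertices|-1)W]$, and since $x_i\vgeq\Value$, any coordinate with $x_i(v)<-(|\vertices|-1)W$ necessarily has value $-\infty$; conversely such a coordinate is crossed after finitely many steps.

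The threshold also drives item~4. After discarding the $\pm\infty$ vertices, all remaining values are integers confined to $[-(|\vertices|-1)W,(|\vertices|-1)W]$. Each non-stationary iteration strictly decreases at least one coordinate by at least one unit, and along this window a coordinate can decrease at most $O(|\vertices|W)$ times; hence the longest strictly decreasing chain in the product order has length $O(|\vertices|^2 W)$, so the iteration stabilises after $O(|\vertices|^2 W)$ steps, each costing $O(|\edges|)$, for a total of $O(|\vertices|^2|\edges|W)$. For item~3, $\MaxPl$ reads off a memoryless optimal strategy from the fixed point $\Value=\operator(\Value)$ by always choosing an argmax successor. For $\MinPl$, the iterates guide the strategy: at a vertex with ``remaining budget'' $j$, play the move that was optimal for $x_j$; this needs only a counter bounded by the number of iterations, i.e.\ pseudo-polynomial memory, and the graph of \figurename~\ref{fig:Weighted-game}$(a)$, where $\MinPl$ must count the $W$ visits of $v_2$, shows that memory is genuinely required.

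The step I expect to be the main obstacle is item~4's combination of \emph{exact} convergence with a pseudo-polynomial bound, together with the soundness of the threshold. One must prove simultaneously that the finite values are confined to $[-(|\vertices|-1)W,(|\vertices|-1)W]$ (so that crossing the threshold is a sound certificate of $-\infty$) and that, once confined, the iterates \emph{reach} the exact value rather than merely approaching it. Both rest on the fact that, whenever $\Value(v)$ is finite, $\MinPl$ has a strategy reaching $T$ along plays whose partial sums stay inside that window—equivalently, that the optimal finite value is witnessed by a play of bounded shape. Establishing this boundedness, and reconciling it with the interaction between the $\liminf$ of the partial sums and the reachability requirement built into the payoff, is the delicate part; the $\MinPl$ finite-memory construction of item~3 is essentially a constructive repackaging of the same bound.
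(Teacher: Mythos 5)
Your proposal is correct and follows essentially the same route as the paper: the attractor construction for the $+\infty$ vertices, the equivalence with negative mean-payoff value (plus a reduction in the other direction) for the $-\infty$ vertices, the value iteration $x_i=\operator(x_{i-1})$ interpreted as the $i$-step-bounded game converging to the greatest fixed point $\Value$, the threshold $-(|\vertices|-1)W$ justified by confining finite values to $[-(|\vertices|-1)W,(|\vertices|-1)W]$, the $O(|\vertices|^2 W)$ chain-length bound for termination, the argmax memoryless strategy for $\MaxPl$, and the counter-indexed strategy extracted from the iterates for $\MinPl$. You also correctly single out the two delicate steps that the paper devotes its main lemmas to (the paper settles exact convergence via a K\"onig's-lemma argument on the tree of plays against an optimal $\MinPl$ strategy, and the window bound via the attractor-indexed analysis of the first $|\vertices|$ iterates).
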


To prove the first item it suffices to notice that vertices with value
$+\infty$ are exactly those from which $\MinPl$ cannot reach the
target.  Therefore the problem reduces to deciding the winner in a
classical reachability game, that can be solved in polynomial
time~\cite{Tho95}, using the classical \emph{attractor} construction:
in vertices of value $+\infty$, $\MinPl$ may play indifferently, while
$\MaxPl$ has an optimal memoryless strategy consisting in avoiding the
attractor. 

To prove the second item, it suffices first to notice that vertices
with value $-\infty$ are exactly those with a value $<0$ in the
mean-payoff game played on the same graph. On the other hand, we can
show that any mean-payoff game can be transformed (in polynomial time)
into an MCR game such that a vertex has value $<0$ in the mean-payoff
game if and only if the value of its corresponding vertex in the MCR
game is $-\infty$. The rest of this section focuses on the proof of
the third and fourth items. We start by explaining how to compute the
values in pseudo-polynomial, and we discuss optimal strategies
afterward. 

\begin{algorithm}[tbp]

  \DontPrintSemicolon%
  \KwIn{MCR game
    $\gameEx[\MCR]$, $W$ largest weight in absolute
    value}%
  \SetKw{value}{\ensuremath{\mathsf{X}}}
  \SetKw{prevvalue}{\ensuremath{\mathsf{X}_{pre}}}
  
  \BlankLine

  $\value(\target) := 0$\;
  \lForEach{\label{line-init}$v\in\vertices\setminus\{\target\}$}{$\value(v):=+\infty$}

  \Repeat{$\value = \prevvalue$}{%
    $\prevvalue := \value$\;%
    \lForEach{$v\in\maxvertices\setminus\{\target\}$}{$\value(v) :=
      \max_{v'\in\edges(v)}
      \big(\edgeweights(v,v')+\prevvalue(v')\big)$} %
    \lForEach{$v\in\minvertices\setminus\{\target\}$}{$\value(v) :=
      \min_{v'\in\edges(v)}
      \big(\edgeweights(v,v')+\prevvalue(v')\big)$}%
    \lForEach{$v\in\vertices\setminus\{\target\}$ \emph{such
        that} $\value(v) < -(|\vertices|-1) 
      W$\label{line-infty-RT}\label{line-infty}}%
    {$\value(v) := -\infty$\label{line-update}}%
  } %
  \Return{$\value$}
  \caption{Value iteration for min-cost reachability
    games}\label{algo:value-iteration-RT}
\end{algorithm}

\subparagraph*{Computing the values.} From now on, we assume, without
loss of generality, that there is exactly one target vertex denoted by
$\target$, and the only outgoing edge from $\target$ is a self loop
with weight $0$: this is reflected by denoting $\MCR$ the payoff
mapping $\MCR[\{\target\}]$. Our value iteration algorithm for MCR
games is given in \algorithmcfname~\ref{algo:value-iteration-RT}. To
establish its correctness, we rely mainly on the operator~$\operator$,
which denotes the function $\Zbar^\vertices \to \Zbar^\vertices$
mapping every vector $x\in\Zbar^\vertices$ to $\operator(x)$ defined
by $\operator(x)(\target)=0$ and
\[\operator(x)(v) =
    \begin{cases}
      \displaystyle{\max_{v'\in \edges(v)}}
      \big(\edgeweights(v,v')+x(v')\big)
      &\textrm{if } v\in \maxvertices\setminus\{\target\}\\
      \displaystyle{\min_{v'\in \edges(v)}}
      \big(\edgeweights(v,v')+x(v')\big) &\textrm{if } v\in
      \minvertices\setminus\{\target\}
    \end{cases}
\]
More precisely, we are interested in the sequence of iterates
$x_i=\operator(x_{i-1})$\label{fi} of $\operator$ from the initial
vector $x_0$ defined by $x_0(v)=+\infty$ for all $v\neq \target$, and
$x_0(\target)=0$. The intuition behind the sequence
$\left(x_i\right)_{i\geq 0}$ is that \emph{$x_i$ is the value of the
  game if we impose that $\MinPl$ must reach the target within $i$
  steps} (and get a payoff of $+\infty$ if \she fails to do
so). Formally, for a play $\pi=v_0 v_1 \cdots \allowbreak v_i \cdots$,
we let $\boundedWeight{i}(\pi)=\MCR(\pi)$ if $v_k=\target$ for some
$k\leq i$, and $\boundedWeight{i}(\pi)=+\infty$ otherwise. We further
let
$\boundeduppervalue{i}(v)=\inf_{\minstrategy} \sup_{\maxstrategy}
\boundedWeight{i}(\outcomes(v,\maxstrategy,\minstrategy))$
(where $\maxstrategy$ and $\minstrategy$ are respectively strategies
of $\MaxPl$ and $\MinPl$). We can show that the operator $\operator$
allows one to compute the sequence
$(\boundeduppervalue{i})_{i\geq 0}$, i.e., for all $i\geq 0$:
$x_i=\boundeduppervalue{i}$.

Let us first show that the algorithm is \emph{correct} when the values
of all nodes are \emph{finite}. Thanks to this characterisation, and
by definition of $\boundeduppervalue{i}$, it is easy to see that, for
all $i\geq 0$: $x_i=\boundeduppervalue{i}\vgeq \uppervalue = \Value$.
Moreover, $\operator$ is a monotonic operator over the complete
lattice $\Zbar^\vertices$. By Knaster-Tarski's theorem, the fixed
points of $\operator$ form a complete lattice and $\operator$ admits a
greatest fixed point. By Kleene's fixed point theorem, using the
Scott-continuity of $\operator$, this greatest fixed point can be
obtained as the limit of the non-increasing sequence of iterates
$(\operator^i(\overline{x}))_{i\geq 0}$ starting in the maximal vector
$\overline{x}$ defined by $\overline{x}(v)=+\infty$ for all
$v\in \vertices$.  As $x_0=\operator(\overline{x})$, the sequence
$(x_i)_{i\geq 0}$ is also non-increasing (i.e., $x_i\vgeq x_{i+1}$,
for all $i\geq 0$) and converges towards the greatest fixed point
of~$\operator$.  We can further show that the value of the game $\Val$
is actually the greatest fixed point of $\operator$. Moreover, we can
bound the number of steps needed to reach that fixed point (when all
values are finite---this is the point where this hypothesis is
crucial), by carefully observing the possible vectors that can be
computed by the algorithm: the sequence $(x_i)_{i\geq 0}$ is
non-increasing, and stabilises after at most
$(2|\vertices|-1) W |\vertices|+|\vertices|$ steps on $\Val$.

Thus, computing the sequence $(x_i)_{i\geq 0}$ up to stabilisation
yields the values of all vertices in an MCR game \emph{if all values
  are finite}. Were it not for line~\ref{line-infty-RT},
\algorithmcfname~\ref{algo:value-iteration-RT} would compute exactly
this sequence. We claim that
\algorithmcfname~\ref{algo:value-iteration-RT} is correct even when
vertices have values in $\{-\infty, +\infty\}$.
Line~\ref{line-infty-RT} allows to cope with vertices whose value is
$-\infty$: when the algorithm detects that $\MinPl$ can secure a value
small enough from a vertex $v$, it sets $v$'s value to
$-\infty$. Intuitively, this is correct because if $\MinPl$ can
guarantee a payoff smaller than $-(|\vertices|-1)\times W$, \she can
force a negative cycle from which \she can reach $\target$ with an
arbitrarily small value. Hence, one can ensure that, after $i$
iterations of the loop, $x_{i-1} \vgeq {\sf X} \vgeq \Val $, and the
sequence still converges to $\Val$, the greatest fixed point of
$\operator$. Finally, if some vertex $v$ has value $+\infty$, one can
check that ${\sf X}(v)=+\infty$ is an invariant of the loop. From that
point, one can prove the correctness of the algorithm.  Thus, the
algorithm executes $O(|V|^2W)$ iterations. Since each iteration can be
performed in $O(|E|)$, the algorithm has a complexity of
$O(|\vertices|^2 |\edges| W)$, as announced in
Theorem~\ref{thm:optimal-strategy}. As an example, consider the
min-cost reachability game of
\figurename~\ref{fig:Weighted-game}$(a)$. The successive values for
vertices $(v_1,v_2)$ (value of the target $v_3$ is always 0) computed
by the value iteration algorithm are the following:
$(+\infty,+\infty)$, $(+\infty,0)$, $(-1,0)$, $(-1,-1)$, $(-2,-1)$,
$(-2,-2), \ldots, (-W,-W+1)$, $(-W, -W)$. This requires $2W$ steps to
converge (hence a pseudo-polynomial time).

\subparagraph*{Computing optimal strategies for both players.} We now
turn to the proof of the third item of
Theorem~\ref{thm:optimal-strategy}, supposing that every vertex $v$ of
the game has a finite value $\Value(v)\in \Z$ (the case where
$\Value(v)=+\infty$ is delt with the attractor construction).

Observe first that, $\MinPl$ may \emph{need memory} to play optimally,
as already shown by the example in
\figurename~\ref{fig:Weighted-game}$(a)$, where the target is $v_3$.
Nevertheless, let us briefly explain why optimal strategies for
$\MinPl$ always exist, with a memory of pseudo-polynomial size. We
extract from the sequence $(x_i)_{i\geq 0}$ defined above (or
equivalently, from the sequence of vectors $\mathsf X$ of
\algorithmcfname~\ref{algo:value-iteration-RT}) the optimal strategy
$\minstrategy^*$ as follows. Let $k$ be the first index such that
$x_{k+1}=x_{k}$. Then, for every play $\pi$ ending in vertex
$v\in\minvertices$, we let
$\minstrategy^*(\pi) = \argmin_{v'\in\edges(v)}
\big(\edgeweights(v,v')+x_{k-|\pi|-1}(v')\big)$,
if $|\pi|<k$, and
$\minstrategy^*(\pi) = \argmin_{v'\in\edges(v)}
\big(\edgeweights(v,v')+x_0(v')\big)$
otherwise (those $\argmin$ may not be unique, but we can indifferently
pick any of them). Since $\minstrategy$ only requires to know the last
vertex and the length of the prefix up to $k$, and since
$k\leq (2|\vertices|-1) W |\vertices|+|\vertices|$ as explained above,
$\minstrategy^*$ needs a memory of pseudo-polynomial size
only. Moreover, it can be computed with the sequence of vectors
$\mathsf X$ in \algorithmcfname~\ref{algo:value-iteration-RT}. It is
not difficult to verify by induction that this strategy is optimal for
$\MinPl$. While optimal, this strategy might not be practical, for
instance, in the framework of controller synthesis. Implementing it
would require to store the full sequence $(x_i)_{i\geq 0}$ up to
convergence step $k$ (possibly pseudo-polynomial) in a table, and to
query this large table each time the strategy is called. Instead, an
alternative optimal strategy $\minstrategy'$ can be construct, that
consists in playing successively two \emph{memoryless strategies}
$\minstrategy^1$ and $\minstrategy^2$ ($\minstrategy^2$ being given by
the attractor construction). To determine when to switch from
$\minstrategy^1$ to $\minstrategy^2$, $\minstrategy'$ maintains a
counter that is stored in a \emph{polynomial} number of bits, thus the
memory footprints of $\minstrategy'$ and $\minstrategy^*$ are
comparable. However, $\minstrategy'$ is easier to implement, because
$\minstrategy^1$ and $\minstrategy^2$ can be described by a pair of
tables of linear size, and, apart from querying those tables,
$\minstrategy'$ consists only in incrementing and testing the counter
to determine when to switch. Moreover, this succession of two
memoryless strategies allows us to also get some interesting strategy
in case of vertices with values $-\infty$: indeed, we can still
compute this pair of strategies, and simply modify the switching
policy to run for a sufficiently long time to guarantee a value less
than a given threshold. In the following, we call such a strategy a
\emph{switching strategy}.

Finally, we can show that, contrary to $\MinPl$, $\MaxPl$ always has a
\emph{memoryless optimal strategy} $\maxstrategy^*$ defined by
$\maxstrategy^*(\pi)=\argmax_{v'\in
  E(v)}\left(\edgeweights(v,v')+\Val(v')\right)$
for all finite plays $\pi$ ending in $v\in \maxvertices$.  For
example, in the game of \figurename~\ref{fig:Weighted-game}$(a)$,
$\maxstrategy^*(\pi v_2)=v_3$ for all $\pi$, since $\Val(v_3)=0$ and
$\Val(v_1)=-W$.  Moreover, the previously described optimal strategies
can be computed along the execution of
\algorithmcfname~\ref{algo:value-iteration-RT}. Finally, we can show
that, for all vertices $v$, the pair of optimal strategies we have
just defined yields a play
$\outcomes(v,\maxstrategy^*,\minstrategy^*)$ which is
\emph{non-looping}, i.e., never visits the same vertex twice before
reaching the target. For instance, still in the game of
\figurename~\ref{fig:Weighted-game}$(a)$,
$\outcomes(v_1,\maxstrategy^*,\minstrategy^*)= v_1 v_2 v_3^\omega$.

\section{An efficient algorithm to solve total-payoff games}
\label{sec:solving-total-payoff}

We now turn our attention back to total-payoff games (without
reachability objective), and discuss our main contribution. Building
on the results of the previous section, we introduce the \emph{first}
(as far as we know) \emph{pseudo-polynomial time algorithm} for
solving those games in the presence of arbitrary weights, thanks to a
reduction from total-payoff games to min-cost reachability games.  The
MCR game produced by the reduction has size pseudo-polynomial in the
size of the original total-payoff game. Then, we show how to compute
the values of the total-payoff game without building the entire MCR
game, and explain how to deduce memoryless optimal strategies from the
computation of our algorithm.

\subparagraph*{Reduction to min-cost reachability games.}  We provide
a transformation from a total-payoff game $\game=\gameEx[\TP]$ to a
min-cost reachability game $\game^K$ such that the values of $\game$
can be extracted from the values in $\game^K$ (as formalised
below). Intuitively, $\game^K$ simulates the game where players play
in $\game$; $\MinPl$ may propose to stop playing and reach a fresh
vertex $\target$ acting as the target; $\MaxPl$ can then accept, in
which case we reach the target, or refuse at most $K$ times, in which
case the game continues. Structurally, $\game^K$ consists of a
sequence of copies of $\game$ along with some new states that we now
describe formally. We let $\target$ be a fresh vertex, and, for all
$n\geq 1$, we define the min-cost reachability game
$\game^n=\tuple{\vertices^n,\edges^n,\edgeweights^n,\MCR[\{\target\}]}$
where $\maxvertices^n$ (\resp, $\minvertices^n$) consists of $n$
copies $(v,j)$, with $1\leq j\leq n$, of each vertex
$v\in \maxvertices$ (\resp, $v\in \minvertices$) and some
\emph{exterior vertices} $(\exterior,v,j)$ for all $v\in \vertices$
and $1\leq j\leq n$ (\resp, \emph{interior vertices} $(\interior,v,j)$
for all $v\in \vertices$ and $1\leq j\leq n$).  Moreover,
$\maxvertices^n$ contains the fresh target vertex $\target$. Edges are
given by
\begin{align*}
  E^n &= \left \{ ( \target, \target ) \right \} \uplus \left \{ \big
    ( (v,j) ,
    (\interior,v',j) \big ) \mid (v,v')\in E, 1\leq j\leq n \right \} \\
  &\uplus \left \{ \big ( (\interior,v,j) , (v,j)\big ) \mid v\in \vertices,
    1\leq j\leq n \right \} \uplus \left \{ \big ( (\exterior,v,j),
    \target \big ) \mid v\in \vertices,
    1\leq j\leq n \right \} \\
  &\uplus \left \{ \big ( (\interior,v,j) , (\exterior,v,j)\big )
    \mid v\in \vertices, 1\leq j\leq n \right \}\\
  &\uplus \left \{ \big (
    (\exterior,v,j), (v,j-1) \big ) \mid v\in \vertices, 1<j\leq n\right \}.
\end{align*}
All edge weights are zero, except edges $\big ( (v,j) ,
(\interior,v',j) \big)$ that have weight $\edgeweights(v,v')$.

\begin{figure}[tbp]
\newcommand{\TPMCRexampleCopyNumber}[1]{
\fill[gray!20,rounded corners] (2.6,1.8) -- (2.6,-3.8) -- (-.8,-3.8) --
(-.8,1.8) -- cycle;  
\node[player1] (A#1) at (0.2,0.2) {$v_1,#1$};
\node[player2](intA#1) at (1,1){$\interior,v_1,#1$};
\node[player2] (B#1) at (1.7,0) {$v_2,#1$};
\node[player2](intB#1) at (1,-1){$\interior,v_2,#1$};
\node[player2] (C#1) at (1,-2) {$v_3,#1$};
\node[player2](intC#1) at (1,-3){$\interior,v_3,#1$};
\draw[->] (A#1) to node[right,yshift=-.2mm,xshift=-.5mm]{$-1$}  (intB#1);  
\draw[->] (B#1) to (intA#1);  
\draw[->] (A#1) to[bend right] node[midway,left]{$-W$} (intC#1.north west);  
\draw[->] (B#1) to[bend left]  (intC#1.north east);
\draw [->] (C#1) to[bend right]  (intC#1) ;
\draw [->] (intA#1) to (A#1);
\draw [->] (intB#1) to (B#1);
\draw [->] (intC#1) to[bend right] (C#1)
}
\newcommand{\TPMCRexampleExteriorNumber}[1]{
\node[player1](extA#1) at (0,1){$\exterior,v_1,#1$};
\node[player1](extB#1) at (0,-1){$\exterior,v_2,#1$};
\node[player1](extC#1) at (0,-3){$\exterior,v_3,#1$};

\draw[rounded corners] (extA#1) -- (0,0) -- (-1,0) -- (-1,-4) ; 
\draw[rounded corners] (extB#1) -- (0,-2) -- (-1,-2) -- (-1,-4);
\draw[rounded corners] (extC#1) -- (0,-3.9) -- (-1,-3.9) -- (-1,-4);
}
\centering
\scalebox{.68}{
  \begin{tikzpicture}[>=latex]

    \begin{scope}
      \TPMCRexampleCopyNumber{3};
    \end{scope}

    \begin{scope}[xshift=4cm]
      \TPMCRexampleExteriorNumber{3};
    \end{scope}
    
    \begin{scope}[xshift=6cm]
      \TPMCRexampleCopyNumber{2};
    \end{scope}
    
    \begin{scope}[xshift=10cm]
      \TPMCRexampleExteriorNumber{2};
    \end{scope}
    
    \begin{scope}[xshift=12cm]
      \TPMCRexampleCopyNumber{1};
    \end{scope}
    
    \begin{scope}[xshift=16cm]
      \TPMCRexampleExteriorNumber{1};
    \end{scope}
    
    \node[player1](TARGET) at (9,-5) {\makebox[0mm][c]{$\target$}};
    \path[->] (TARGET) edge[out=-30,in=30,loop] (TARGET);
    
    \draw[->] (intA3) to (extA3); 
    \draw[->] (intB3) to (extB3);
    \draw[->] (intC3) to (extC3); 
    
    \draw[->] (intA2) to (extA2); 
    \draw[->] (intB2) to (extB2);
    \draw[->] (intC2) to (extC2); 
    
    \draw[->] (intA1) to (extA1); 
    \draw[->] (intB1) to (extB1);
    \draw[->] (intC1) to (extC1); 
    
    \draw[->] (extA3) to (A2); 
    \draw[->,bend right=-5] (extB3) to (B2);
    \draw[->] (extC3) to (C2); 
    
    \draw[->] (extA2) to (A1); 
    \draw[->,bend right=-5] (extB2) to (B1);
    \draw[->] (extC2) to (C1); 
    
    \draw[rounded corners] (3,-4) -- (3,-4.25) -- (9,-4.25) -- (TARGET);
    
    \draw[->,rounded corners] (9,-4)  -- (TARGET);
    
    \draw[rounded corners] (15,-4) -- (15,-4.25) -- (9,-4.25) -- (TARGET);

\end{tikzpicture}}
\caption{MCR game $\game^3$ associated with the total-payoff game of
  \figurename~\ref{fig:Weighted-game}$(a)$}
\label{FigureExTPMCR}
\end{figure}
For example, considering the weighted graph of
\figurename~\ref{fig:Weighted-game}$(a)$, the corresponding
reachability total-payoff game $\game^3$ is depicted in
\figurename~\ref{FigureExTPMCR} (where weights $0$ have been
removed). The next proposition formalises the relationship between the
two games.

\begin{proposition}\label{TrueTP2MCR}
  Let $K=|\vertices| (2 (|\vertices|-1) W +1)$. For all $v\in\vertices$
  and $k\geq K$,
  \begin{itemize}
  \item $\Val_\game(v)\neq +\infty$ if and only if
    $\Val_\game(v)=\Val_{\game^k}((v,k))$;
  \item $\Val_\game(v)=+\infty$ if and only if
    $\Val_{\game^k}((v,k))\geq (|\vertices|-1) W+1$.
  \end{itemize}
\end{proposition}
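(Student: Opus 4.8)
The plan is to establish the two bullets by exhibiting, in $\game^k$, strategies built from memoryless optimal strategies of $\game$ and a suitable use of the veto mechanism, matching upper and lower bounds on $\Val_{\game^k}((v,k))$. I rely on the following facts about total-payoff games (from \cite{GimZie04}): whenever $\Val_\game(v)$ is finite it satisfies the optimality equations $\Val_\game(v)=\max_{v'\in\edges(v)}\big(\edgeweights(v,v')+\Val_\game(v')\big)$ for $v\in\maxvertices$ and the dual with $\min$ for $v\in\minvertices$; both players own memoryless optimal strategies $\maxstrategy^*,\minstrategy^*$; and $|\Val_\game(v)|\le(|\vertices|-1)W$ whenever it is finite. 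I first record a \emph{projection lemma}: erasing copy indices and the $\interior/\exterior$ vertices, any play of $\game^k$ reaching $\target$ becomes a finite play $\pi[m]$ of $\game$ whose accumulated weight equals $\MCR(\cdot)=\TP(\pi[m])$; conversely a play of $\game$ together with chosen stopping positions lifts to $\game^k$, an \emph{accepted} stop costing nothing and a \emph{refused} stop decrementing the copy index. Thus from $(v,k)$ player $\MaxPl$ may refuse at most $k-1$ times, after which (in copy~$1$) \she is forced to accept.

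Assume first $\Val_\game(v)\in\Z$. For $\Val_{\game^k}((v,k))\le\Val_\game(v)$, let $\MinPl$ play $\minstrategy^*$ in-game and propose to stop at every position whose partial sum is $\le\Val_\game(v)$. Optimality of $\minstrategy^*$ forces every projected play to satisfy $\TP\le\Val_\game(v)$, so by the $\liminf$ definition infinitely many prefixes have partial sum $\le\Val_\game(v)$; since $\MaxPl$ can refuse only $k-1$ of them, some such proposal is eventually accepted, giving payoff $\le\Val_\game(v)$ (this holds for every $k\ge 1$). For the reverse inequality $\Val_{\game^k}((v,k))\ge\Val_\game(v)$, let $\MaxPl$ play $\maxstrategy^*$ in-game and refuse exactly the proposals made at a position with partial sum $<\Val_\game(v)$ (the \emph{bad} positions). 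If $\MinPl$ never reaches $\target$ the payoff is $+\infty$, and every accepted proposal sits at partial sum $\ge\Val_\game(v)$; so it suffices to bound the number of bad positions along any $\maxstrategy^*$-play by less than $k$.

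This counting is the crux, and it is exactly what pins down $K$. Set $\bar\edgeweights(u,u')=\edgeweights(u,u')+\Val_\game(u')-\Val_\game(u)$; the optimality equations give $\bar\edgeweights\le0$ on $\MaxPl$-edges (with equality on those chosen by $\maxstrategy^*$) and $\bar\edgeweights\ge0$ on $\MinPl$-edges, so along a $\maxstrategy^*$-play $S_m:=\sum_{i<m}\bar\edgeweights(v_i,v_{i+1})\ge0$ is non-decreasing and $\TP(\pi[m])=\Val_\game(v_0)-\Val_\game(v_m)+S_m$. Hence a position is bad iff $S_m<\Val_\game(v_m)\le(|\vertices|-1)W$, forcing $S_m\in\{0,\dots,(|\vertices|-1)W-1\}$; and if two bad positions shared the same pair $(v_m,S_m)$ (equivalently the same $(v_m,\TP(\pi[m]))$), the segment between them would be a weight-$0$ cycle consistent with $\maxstrategy^*$, and looping it forever would yield a $\maxstrategy^*$-play of value $<\Val_\game(v_0)$, contradicting optimality. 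Thus distinct bad positions carry distinct pairs $(v_m,S_m)$, of which there are fewer than $K$, so the refusals suffice once $k\ge K$. Combining both inequalities gives $\Val_{\game^k}((v,k))=\Val_\game(v)$, and in particular $\Val_{\game^k}((v,k))\le(|\vertices|-1)W<(|\vertices|-1)W+1$.

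It remains to treat infinite values and assemble the equivalences. If $\Val_\game(v)=-\infty$ then, by item~2 of Theorem~\ref{thm:optimal-strategy}, $\MinPl$ can force a negative cycle that reaches $\target$; looping it arbitrarily often before a single stop makes the accepted partial sum arbitrarily negative, so $\Val_{\game^k}((v,k))=-\infty<(|\vertices|-1)W+1$, matching $\Val_\game(v)$. These two cases give the reverse directions of both bullets (in each, $\Val_{\game^k}((v,k))\neq+\infty$ and lies strictly below the threshold). Finally, if $\Val_\game(v)=+\infty$, let $\MaxPl$ play a memoryless strategy witnessing value $+\infty$ (every cycle it allows has weight $\ge1$) and refuse every proposal at partial sum $\le(|\vertices|-1)W$. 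Decomposing any prefix into a simple path (weight $\ge-(|\vertices|-1)W$) plus the closed positive cycles yields $\TP(\pi[m])\ge-(|\vertices|-1)W+C_m$, with $C_m$ the number of closed cycles; only the first $|\vertices|\,(2(|\vertices|-1)W+1)=K$ positions can then have partial sum $\le(|\vertices|-1)W$, so the refusals again suffice for $k\ge K$, every accepted stop pays $\ge(|\vertices|-1)W+1$, and an infinite play pays $+\infty$. Hence $\Val_{\game^k}((v,k))\ge(|\vertices|-1)W+1$, proving the forward direction of the second bullet and, with the previous paragraphs, both equivalences. I expect the bad-position count (and its cycle-decomposition analogue in the $+\infty$ case) to be the main obstacle, since that is where the $\liminf$ semantics, the optimality equations, and the precise value of $K$ must be reconciled; everything else is bookkeeping on the veto budget via the projection lemma.
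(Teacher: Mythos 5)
Your overall architecture mirrors the paper's proof (which goes through the projection Lemma~\ref{ClaimProjection}, an upper bound Lemma~\ref{lem:upperbound} where $\MinPl$ proposes at every position whose partial sum is at most $m$, and a lower bound Lemma~\ref{lem:lowerbound} where $\MaxPl$ vetoes at low partial sums and one counts the vetoes). Your counting argument for the finite-value lower bound is, however, genuinely different: the paper never invokes the optimality equations along the play; instead it observes that if all $k$ copies are exhausted then some vertex hosts $N=2(|\vertices|-1)W+1$ vetoes, and either an inter-veto cycle has non-positive weight (looping it contradicts $\Val_\game(v,\maxstrategy)\ge m$) or all such cycles weigh at least $1$ and the partial sum must have climbed past the threshold. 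Your reduced-weight potential $S_m$ is more local and yields a sharper count of bad positions, and the observation that two bad positions cannot share a pair $(v_m,S_m)$ is a nice way to see where the bound comes from.

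There is, however, a genuine gap in that potential argument: it tacitly assumes that every vertex visited along a $\maxstrategy^*$-play from $v_0$ has a finite value, since both the identity $\TP(\pi[m])=\Val_\game(v_0)-\Val_\game(v_m)+S_m$ and the bound $S_m<\Val_\game(v_m)\le(|\vertices|-1)W$ are meaningless when $\Val_\game(v_m)=+\infty$. But $\MinPl$'s moves in $\game^k$ are arbitrary, and a $\minvertices$-vertex of finite value can have a successor of value $+\infty$ (the $\min$ in the optimality equation only constrains the best successor); once the play enters the $+\infty$ region it never leaves it under $\maxstrategy^*$, and while the partial sums do eventually climb, bounding the extra bad positions there requires the cycle-decomposition argument of your last paragraph, and adding the two phases overshoots the announced $K$. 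The paper's veto-counting argument is immune to this because it uses only the global guarantee $\Val_\game(v,\maxstrategy)\ge m$ together with Lemma~\ref{LemmaBoundedValue}, never the values of intermediate vertices. Relatedly, in your $+\infty$ case the cycle removal leaves a simple-path remainder of length up to $|\vertices|-1$, so positions up to roughly $K+|\vertices|-2$ can still have partial sum at most $(|\vertices|-1)W$, and as written you only obtain the second bullet for $k$ slightly larger than $K$. (Your appeal to item~2 of Theorem~\ref{thm:optimal-strategy} in the $-\infty$ case is also a mis-citation: that item concerns MCR games; the clean route is simply to apply your upper-bound argument for every threshold $m$, which is what the paper does.) All of this is repairable, by switching to the paper's pigeonhole on veto vertices or by enlarging $K$, but as stated the proof does not establish the proposition for the announced constant.
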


The bound $K$ is found by using the fact (informally described in the
previous section) that if not infinite, the value of a min-cost
reachability game belongs in
$[-(|\vertices|-1)\times W+1, |\vertices|\times W]$, and that after
enough visits of the same vertex, an adequate loop ensures that
$\game^k$ verifies the above properties.

\subparagraph*{Value iteration algorithm for total-payoff games.}  By
Proposition~\ref{TrueTP2MCR}, an immediate way to obtain a value
iteration algorithm for total-payoff games is to build game $\game^K$,
run \algorithmcfname~\ref{algo:value-iteration-RT} on it, and map the
computed values back to $\game$. We take advantage of the structure of
$\game^K$ to provide a better algorithm that avoids building
$\game^K$. We first compute the values of the vertices in the
\emph{last copy of the game} (vertices of the form $(v,1)$,
$(\interior,v,1)$ and $(\exterior,v,1)$), then of those in the
penultimate (vertices of the form $(v,2)$, $(\interior,v,2)$ and
$(\exterior,v,2)$), and so on.

\begin{figure}[tbp]
\centering
\begin{tikzpicture}[>=latex,scale = 0.8, transform shape]
\node[player1] (A) at (0,0) {\makebox[0mm][c]{$v_1$}};
\node[player2](intA) at (1,1){$\interior,v_1$};
\node[player2] (B) at (2,0) {\makebox[0mm][c]{$v_2$}};
\node[player2](intB) at (1,-1){$\interior,v_2$};
\node[player1](tg) at (4,-1) {\makebox[0mm][c]{$\target$}};
\node[player2] (C) at (1,-2) {\makebox[0mm][c]{$v_3$}};
\node[player2](intC) at (1,-3){$\interior,v_3$};
\draw[->] (A) to node[above right]{$-1$}  (intB);  
\draw[->] (B) to (intA);  
\draw[->] (A) to[bend right] node[midway,left]{$-W$} (intC.north west);  
\draw[->] (B) to[bend left]  (intC.north east);
\draw [->] (C) to[bend right]  (intC) ;
\draw [->] (intA) to (A);
\draw [->] (intB) to (B);
\draw [->] (intC) to[bend right] (C);
\draw [->] (intB) -- node[midway,below]{\footnotesize $\qquad \begin{array}{c}\max(0,\\ \;\;Y(v_1))\end{array}$}(tg); 
\draw [rounded corners,->] (intA) -- node[midway,below ]{\footnotesize $\ \ \ \max(0,Y(v_2))$} (4,1) -- (tg);
\draw [rounded corners,->] (intC) -- node[midway,above]{\footnotesize $\ \ \ \max(0,Y(v_3))$} (4,-3) -- (tg); 
\end{tikzpicture}
\caption{MCR game $\game_Y$ associated with the total-payoff game of
  \figurename~\ref{fig:Weighted-game}$(a)$}
\label{fig:gameX}
\end{figure}
We formalise this idea as follows. Let $Z^j$ be a vector mapping each
vertex $v$ of $\game$ to the value $Z^j(v)$ of vertex $(v,j)$ in
$\game^K$. Then, let us define an operator $\operatorBis$ such that
$Z^{j+1}=\operatorBis(Z^j)$.  The intuition behind the definition of
$\operatorBis(Y)$ for some vector $Y$, is to extract from $\game^K$
one copy of the game, and make $Y$ appear in the weights of some edges
as illustrated in \figurename~\ref{fig:gameX}.  This game,
$\game_Y$, \label{GY} simulates a play in $\game$ in which $\MinPl$
can opt for `leaving the game' at each round (by moving to the
target), obtaining $\max(0,Y(v))$, if $v$ is the current vertex. Then
$\operatorBis(Y)(v)$ is defined as the value of $v$ in $\game_Y$. By
construction, it is easy to see that $Z^{j+1}=\operatorBis(Z^j)$ holds
for all $j\geq 1$. Furthermore, we define $Z^0(v)=-\infty$ for all
$v$, and have $Z^1= \operatorBis(Z^0)$. One can prove the following
properties of $\operatorBis$:
\begin{inparaenum}[$(i)$]
\item $\operatorBis$ is monotonic, but may not be Scott-continuous;
\item the sequence $(Z^j)_{j\geq 0}$ converges towards $\Value_\game$.
\end{inparaenum}

We are now ready to introduce
\algorithmcfname~\ref{algo:value-iter-TPO} to solve total-payoff
games. Intuitively, the outer loop computes, in variable
${\mathsf Y}$, a non-decreasing sequence of vectors whose limit is
$\Value_\game$, and that is stationary (this is not necessarily the
case for the sequence $(Z^j)_{j\geq 0}$).  Line~\ref{line:tpo-init}
initialises ${\mathsf Y}$ to $Z^0$. Each iteration of the outer loop
amounts to running \algorithmcfname~\ref{algo:value-iteration-RT} to
compute $\operatorBis({\sf Y}_{pre})$ (lines~\ref{line:begin}
to~\ref{line:end}), then detecting if some vertices have value
$+\infty$, updating ${\mathsf Y}$ accordingly
(line~\ref{line:3-line-infty}, following the second item of
Proposition~\ref{TrueTP2MCR}). One can show that, for all $j> 0$, if
we let $Y^j$ be the value of ${\mathsf Y}$ after the $j$-th iteration
of the main loop, $Z^j\vleq Y^j \vleq \Value_\game$, which ensures the
correctness of the algorithm.

\begin{algorithm}[tbp]
  \DontPrintSemicolon %
  \KwIn{Total-payoff game $\game=\gameEx[\TP]$, $W$ largest weight in absolute
    value}
  \SetKw{value}{\ensuremath{\mathsf{Y}}}%
  \SetKw{prevvalue}{\ensuremath{\mathsf{Y}_{pre}}}%
  \SetKw{valueint}{\ensuremath{\mathsf{X}}}%
  \SetKw{prevvalueint}{\ensuremath{\mathsf{X}_{pre}}}%
  \BlankLine
  
  \lForEach{$v\in\vertices$}{$\value(v) := -\infty$}%
  \label{line:tpo-init}%
  \Repeat{$\value=\prevvalue$}{ %
    \lForEach{$v\in\vertices$}{$\prevvalue(v):=\value(v)$; $\value(v):= \max
      (0,\value(v))$; $\valueint(v):=+\infty$\label{line:begin}\label{line:tpo-init-y-v}}
    \Repeat{$\valueint=\prevvalueint$}{%
      $\prevvalueint:=\valueint$\;%
      \lForEach{$v\in\maxvertices$}{$\valueint(v) := \max_{v'\in \edges(v)}
        \big[\edgeweights(v,v')+\min(\prevvalueint(v'),\value(v'))\big]$\label{line:3-7}\label{line:minvertex}} %
      \lForEach{$v\in\minvertices$}{$\valueint(v) := \min_{v'\in\edges(v)}
        \big[\edgeweights(v,v')+\min(\prevvalueint(v'),\value(v'))\big]$\label{line:maxvertex}} %
      \lForEach{$v\in\vertices$ \emph{such that} $\valueint(v) <
        -(|\vertices|-1) W$}%
      {$\valueint(v) := -\infty$} %
    } %
    $\value:=\valueint$\label{line:end}\;%
    \lForEach{$v\in\vertices$ \emph{such that} $\value(v) >
      (|\vertices|-1) W$\label{line:3-line-infty}}%
    {$\value(v) := +\infty$}%
  }%
  
  \Return{$\value$} \;

  \caption{A value iteration algorithm for total-payoff
    games\label{algo:value-iter-TPO}}
    
\end{algorithm}

\begin{theorem}\label{thm:VI-TP}
  If a total-payoff game $\game=\gameEx[\TP]$ is given as input,
  \algorithmcfname~\ref{algo:value-iter-TPO} outputs the vector
  $\Value_\game$ of optimal values, after at most
  $K=|\vertices| (2 (|\vertices|-1) W+1)$ iterations of the external
  loop. The complexity of the algorithm is
  $O(|\vertices|^4 |\edges| W^2)$.
\end{theorem}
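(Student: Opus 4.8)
The plan is to prove correctness and the running-time bound simultaneously by establishing, for the vector $Y^j$ held in $\mathsf{Y}$ after the $j$-th pass of the external loop, the sandwiching invariant $Z^j \vleq Y^j \vleq \Value_\game$ announced in the text, and then to read off both the output and the iteration count from the convergence of $(Z^j)_{j\geq0}$ guaranteed by property (ii) of $\operatorBis$ and by Proposition~\ref{TrueTP2MCR}. Throughout I write $G$ for the map realised by one external pass, namely $G=u\circ\operatorBis$, where $u$ is the capping step of line~\ref{line:3-line-infty} sending every coordinate exceeding $(|\vertices|-1)W$ to $+\infty$, and $\operatorBis$ is the operator whose value at $Y$ is the vector of MCR-values of $\game_Y$.

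First I would check that the internal \textbf{Repeat} loop (lines~\ref{line:begin}--\ref{line:end}) indeed computes $\operatorBis(\mathsf{Y}_{pre})$. After line~\ref{line:tpo-init-y-v} the vector $\mathsf{Y}$ equals $v\mapsto\max(0,\mathsf{Y}_{pre}(v))$, which is exactly the weight of the ``leave the game'' edge from the interior copy of $v$ to $\target$ in $\game_{\mathsf{Y}_{pre}}$; folding those interior vertices into the update turns the term $\edgeweights(v,v')+\min(\mathsf{X}_{pre}(v'),\mathsf{Y}(v'))$ of lines~\ref{line:minvertex}--\ref{line:maxvertex} into precisely the operator of \algorithmcfname~\ref{algo:value-iteration-RT} applied to $\game_{\mathsf{Y}_{pre}}$, and the $-\infty$-capping line is the one of that algorithm as well. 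Hence, by item~4 of Theorem~\ref{thm:optimal-strategy}, the internal loop terminates with $\mathsf{X}=\operatorBis(\mathsf{Y}_{pre})$ (finite and infinite values alike), so one external pass realises $Y^{j+1}=G(Y^j)$. Since $\operatorBis$ is monotonic (property (i)) and $u$ is readily checked to be monotonic, $G$ is monotonic; as $Y^0=-\infty$ is the bottom element, $(Y^j)_{j\geq0}$ is non-decreasing.

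Next I would prove the two halves of the sandwich by induction on $j$, both trivial at $j=0$ where $Y^0=Z^0=-\infty$. For the lower bound, assuming $Z^j\vleq Y^j$, monotonicity of $\operatorBis$ gives $Z^{j+1}=\operatorBis(Z^j)\vleq\operatorBis(Y^j)\vleq u(\operatorBis(Y^j))=Y^{j+1}$. For the upper bound, assuming $Y^j\vleq\Value_\game$, monotonicity yields $\operatorBis(Y^j)\vleq\operatorBis(\Value_\game)=\Value_\game$, using that $\Value_\game$ is a fixed point of $\operatorBis$ (which follows from property (ii) together with Proposition~\ref{TrueTP2MCR}, since $Z^k=\Value_\game$ on finite coordinates for $k\geq K$). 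It then remains to verify that $u$ does not push $Y^{j+1}$ above $\Value_\game$: the cap sets coordinate $v$ to $+\infty$ only when $\operatorBis(Y^j)(v)>(|\vertices|-1)W$, and since $\operatorBis(Y^j)(v)\leq\Value_\game(v)$ this forces $\Value_\game(v)>(|\vertices|-1)W$; using the established bound that a \emph{finite} total-payoff value has absolute value at most $(|\vertices|-1)W$, we get $\Value_\game(v)=+\infty$, so the capped value $+\infty$ still equals $\Value_\game(v)$. I expect this correctness of the $+\infty$-cap at \emph{intermediate} iterations (as opposed to the final one, where Proposition~\ref{TrueTP2MCR} applies directly) to be the main obstacle, together with the symmetric care needed for the $-\infty$-cap inside the internal loop.

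Finally I would settle termination, the output and the complexity. By the sandwich and Proposition~\ref{TrueTP2MCR}, at $j=K=|\vertices|(2(|\vertices|-1)W+1)$ we have $Z^K(v)=\Value_\game(v)$ on finite coordinates, whence $Y^K(v)=\Value_\game(v)$ there; and on coordinates with $\Value_\game(v)=+\infty$, the chain $\operatorBis(Y^{K-1})(v)\vgeq Z^K(v)\geq(|\vertices|-1)W+1$ triggers the cap, so $Y^K(v)=+\infty$ as well; thus $Y^K=\Value_\game$. Since $G(\Value_\game)=\Value_\game$, the next pass leaves $\mathsf{Y}$ unchanged and the loop stops, so at most $K$ external iterations occur; moreover the monotone sequence $(Y^j)$, squeezed between $Z^j\to\Value_\game$ and $\Value_\game$, cannot stabilise at any value strictly below $\Value_\game$, ruling out premature termination. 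For the running time, one external pass is one execution of \algorithmcfname~\ref{algo:value-iteration-RT} on $\game_{\mathsf{Y}_{pre}}$; the sandwich keeps all finite values computed there within $[-(|\vertices|-1)W,(|\vertices|-1)W]$, so (as in the analysis behind item~4 of Theorem~\ref{thm:optimal-strategy}) the internal loop converges in $O(|\vertices|^2W)$ iterations — the product of the number of vertices and the $O(|\vertices|W)$ width of the value interval — each costing $O(|\edges|)$, i.e. $O(|\vertices|^2|\edges|W)$ per pass. Multiplying by the $K=O(|\vertices|^2W)$ external iterations gives the announced $O(|\vertices|^4|\edges|W^2)$ bound.
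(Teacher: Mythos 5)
Your overall architecture matches the paper's: show the inner loop computes $\operatorBis$, establish the sandwich $Z^j \vleq Y^j \vleq \Value_\game$ by induction using monotonicity of $\operatorBis$, justify the $+\infty$-cap via the bound $|\Value_\game(v)|\leq(|\vertices|-1)W$ on finite values, and conclude with Proposition~\ref{TrueTP2MCR} at $j=K$; the complexity count is also the intended one. However, there is a genuine gap at the single most delicate point: the inequality $\operatorBis(\Value_\game)\vleq\Value_\game$, which your upper-bound induction needs. You derive it by asserting that $\Value_\game$ is a fixed point of $\operatorBis$ ``because $(Z^j)$ converges to $\Value_\game$ and $Z^k=\Value_\game$ on finite coordinates for $k\geq K$''. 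This does not work. On coordinates with $\Value_\game(v)=+\infty$ one only has $Z^k(v)\geq(|\vertices|-1)W+1$ for $k\geq K$; these values are in general finite and grow to $+\infty$ only in the limit, so $Z^k\neq\Value_\game$ at every finite stage, and $\operatorBis(Z^k)$ need not agree with $\operatorBis(\Value_\game)$ even on the finite coordinates (a finite coordinate may have a successor whose coordinate is $+\infty$-valued). More fundamentally, passing the identity $Z^{j+1}=\operatorBis(Z^j)$ to the limit only yields the \emph{post}-fixed-point inequality $\Value_\game\vleq\operatorBis(\Value_\game)$ from monotonicity; extracting the converse inequality from convergence is exactly the Scott-continuity step, and you yourself note that $\operatorBis$ is not Scott-continuous (the paper's one-vertex example with a $-1$ self-loop is a counterexample to precisely this inference).

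The paper closes this gap with a dedicated argument (Lemma~\ref{lem:pre-fixed-point}, resting on the strategy-transfer Lemma~\ref{PropCoolDesTP}): given an optimal memoryless strategy of $\MinPl$ in $\game$ securing $m$ from $v$, one builds a strategy in $\game_{\Value_\game}$ that mimics it and exits to $\target$ as soon as the accumulated weight plus $\max(0,\Value_\game(\cdot))$ drops to $m$, and one shows every resulting play reaches $\target$ with payoff at most $m$. Some such direct construction is unavoidable here; without it your induction step for the upper bound, and hence the correctness claim $Y^j\vleq\Value_\game$, is unsupported. The remainder of your proposal (the lower bound $Z^{j+1}\vleq Y^{j+1}$, the treatment of the cap, termination, and the $O(|\vertices|^4|\edges|W^2)$ count) is sound and coincides with the paper's proof.
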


The number of iterations in each internal loop is controlled by
Theorem~\ref{thm:optimal-strategy}. On the example of
\figurename~\ref{fig:Weighted-game}$(a)$, only 2 external iterations
are necessary, but the number of iterations of each internal loop
would be $2W$. By contrast, for the total-payoff game depicted in
\figurename~\ref{fig:Weighted-game}$(b)$, each internal loop requires
2 iterations to converge, but the external loop takes $W$ iterations
to stabilise. A combination of both examples would experience a
pseudo-polynomial number of iterations to converge in both the
internal and external loops, matching the $W^2$ term of the above
complexity.

\subparagraph*{Optimal strategies.} In
Section~\ref{sec:reachability-objectives}, we have shown, for any
min-cost reachability game, the existence of a pair of memoryless
strategies permitting to reconstruct a \emph{switching} optimal
strategy for $\MinPl$ (if every vertex has value different from
$-\infty$, or a strategy ensuring any possible threshold for vertices
with value $-\infty$). If we apply this construction to the game
$\game_{\Value_\game}$, we obtain a pair
$(\minstrategy^1,\minstrategy^2)$ of strategies (remember that
$\minstrategy^2$ is a strategy obtained by the attractor construction,
hence it will not be useful for us for total-payoff games). Consider
the strategy $\bar\minstrategy$, obtained by projecting
$\minstrategy^1$ on $\vertices$ as follows: for all finite plays $\pi$
and vertex $v\in\minvertices$, let $\bar\minstrategy(\pi v) = v'$ if
$\minstrategy^1(v)=(\interior,v')$. We can show that
$\bar\minstrategy$ is optimal for $\MinPl$ in $\game$. Notice that
$\minstrategy^1$, and hence $\bar\minstrategy$, can be computed during
the last iteration of the value iteration algorithm, as explained in
the case of min-cost reachability. A similar construction can be done
to compute an optimal strategy of $\MaxPl$.

\section{Implementation and heuristics}
\label{sec:experiments}

In this section, we report on a prototype implementation of our
algorithms.\footnote{Source and binary files, as well as some
  examples, can be downloaded from
  \url{http://www.ulb.ac.be/di/verif/monmege/tool/TP-MCR/}.} For
convenience reasons, we have implemented them as an add-on to
PRISM-games \cite{CheFor13}, although we could have chosen to extend
another model-checker as we do not rely on the probabilistic features
of PRISM models (i.e., we use the PRISM syntax of \emph{stochastic
  multi-player games}, allowing arbitrary rewards, and forbidding
probability distributions different of Dirac ones). We then use rPATL
specifications of the form
$\langle\!\langle C \rangle\!\rangle \mathsf R^{\min/\max=?}[\mathsf
F^\infty \varphi]$
and
$\langle\!\langle C \rangle\!\rangle \mathsf R^{\min/\max=?}[\mathsf
F^c \bot]$
to model respectively min-cost reachability games and total-payoff
games, where $C$ represents a coalition of players that want to
minimise/maximise the payoff, and $\varphi$ is another rPATL formula
describing the target set of vertices (for total-payoff games, such a
formula is not necessary). We have tested our implementation on toy
examples. On the parametric one studied after Theorem~\ref{thm:VI-TP},
obtained by mixing the graphs of \figurename~\ref{fig:Weighted-game}
and repeating them for $n$ layers, results obtained by applying our
algorithm for total-payoff games are summarised in
Table~\ref{table:res}, where for each pair $(W,n)$, we give the time
$t$ in seconds, the number $k_e$ of iterations in the external loop,
and the total number $k_i$ of iterations in the internal loop.

\begin{table}[tbp]
  \caption{Results of value iteration on a parametric
    example\label{table:res}}
  \centering
  {\setlength{\tabcolsep}{.5em}
  \begin{tabular}{|c|c||c|c|c||c|c|c|}
    \hline
    \multicolumn{2}{|c||}{}& \multicolumn{3}{c||}{without heuristics} 
             & \multicolumn{3}{c|}{with heuristics} \\ %
    $W$ & $n$ & $t$ & $k_e$ & $k_i$ & $t$ & $k_e$ & $k_i$ \\\hhline{|=|=#=|=|=#=|=|=|} %
    50 & 100 & 0.52s & 151 & 12,603 & 0.01s & 402 & 1,404 \\ \hline %
    50 & 500 & 9.83s & 551 & 53,003 & 0.42s & 2,002 & 7,004 \\\hline %
    200 & 100 & 2.96s & 301 & 80,103 & 0.02s & 402 & 1,404 \\\hline %
    200 & 500 & 45.64s & 701 & 240,503 & 0.47s & 2,002 & 7,004 \\\hline %
    500 & 1,000 & 536s & 1,501 & 1,251,003 & 2.37s & 4,002 & 14,004 \\\hline %
  \end{tabular}}
\end{table}

We close this section by sketching two techniques that can be used to
speed up the computation of the fixed point in
\algorithmcfname{s}~\ref{algo:value-iteration-RT} and
\ref{algo:value-iter-TPO}. We fix a weighted graph $\graphEx$. Both
accelerations rely on a topological order of the strongly connected
components (SCC for short) of the graph, given as a function
$\decomposition\colon \vertices\to \N$, mapping each vertex to its
\emph{component}, verifying that
\begin{inparaenum}[$(i)$]
  \item $\decomposition(\vertices)=\{0,\ldots,p\}$ for some $p\geq 0$,
  \item $\decomposition^{-1}(q)$ is a maximal SCC for all $q$, 
  \item and $\decomposition(v)\geq \decomposition(v')$ for all
    $(v,v')\in\edges$.\footnote{Such a mapping is computable in
      linear time, e.g., by Tarjan's algorithm \cite{Tar72}.}
\end{inparaenum}
In case of an MRC game with $\target$ the unique target,
$\decomposition^{-1}(0) = \{\target\}$.  Intuitively, $\decomposition$
induces a directed acyclic graph whose vertices are the sets
$\decomposition^{-1}(q)$ for all $q\in\decomposition(\vertices)$, and
with an edge $(S_1,S_2)$ if and only if there are
$v_1\in S_1, v_2\in S_2$ such that $(v_1,v_2)\in\edges$.

The \emph{first acceleration heuristic} is a divide-and-conquer
technique that consists in applying
\algorithmcfname~\ref{algo:value-iteration-RT} (or the inner loop of
\algorithmcfname~\ref{algo:value-iter-TPO}) iteratively on each
$\decomposition^{-1}(q)$ for $q=0,1,2,\ldots,p$, using at each step
the information computed during steps $j<q$ (since the value of a
vertex $v$ depends only on the values of the vertices $v'$ such that
$\decomposition(v')\leq\decomposition(v)$).  The \emph{second
  acceleration heuristic} consists in studying more precisely each
component $\decomposition^{-1}(q)$. Having already computed the
optimal values $\Value(v)$ of vertices
$v\in \decomposition^{-1}(\{0,\ldots,q-1\})$, we ask an oracle to
precompute a finite set $S_v\subseteq\Zbar$ of possible optimal values
for each vertex $v\in\decomposition^{-1}(q)$. For MCR games and the
inner iteration of the algorithm for total-payoff games, one way to
construct such a set $S_v$ is to consider that possible optimal values
are the one of non-looping paths inside the component exiting it,
since, in MCR games, there exist optimal strategies for both players
whose outcome is a non-looping path (see
Section~\ref{sec:reachability-objectives}).

We can identify classes of weighted graphs for which there exists an
oracle that runs in polynomial time and returns, for all vertices~$v$,
a set $S_v$ of polynomial size. On such classes,
\algorithmcfname{s}~\ref{algo:value-iteration-RT} and
\ref{algo:value-iter-TPO}, enhanced with our two acceleration
techniques, \emph{run in polynomial time}. For instance, for all fixed
positive integers $L$, the class of weighted graphs where every
component $\decomposition^{-1}(q)$ uses at most $L$ distinct weights
(that can be arbitrarily large in absolute value) satisfies this
criterion. Table~\ref{table:res} contains the results obtained with
the heuristics on the parametric example presented before. Observe
that the acceleration technique permits here to decrease drastically
the execution time, the number of iterations in both loops depending
not even anymore on $W$.  Even though the number of iterations in the
external loop increases with heuristics, due to the decomposition,
less computation is required in each internal loop since we only apply
the computation for the active component.

\label{EndOfArticle}

\clearpage
\appendix

\changepage{4cm}{3cm}{-1.5cm}{-1.5cm}{}{-1cm}{}{}{}

\section{Determinacy of total-payoff and min-cost reachability
  games\label{sec:determinacy}}

Consider a quantitative game $\game=\gameEx$ and a vertex
$v\in\vertices$. We will prove the determinacy result by using the
Borel determinacy result of \cite{Mar75}. Indeed, for an integer $M$,
consider $\Win_M$ to be the set of infinite plays with a payoff less
than or equal to $M$. Payoff mapping $\MCR[\finalvertices]$ is easily
shown to be Borel measurable since the set of plays with finite payoff
is then a countable union of open sets of plays. For $\TP$, it is
shown by considering a pointwise limit of Borel measurable
functions. Therefore, for payoff $\Payoff$ representing total-payoff
and min-cost reachability, we know that $\Win_M$ is a Borel set,
so that the qualitative game defined over the graph $\graphEx$ with
winning condition $\Win_M$ is determined. We now use this preliminary
result to show our determinacy result.

We first consider cases where lower or upper values are
infinite. Suppose first that $\lowervalue(v)=-\infty$. We have to show
that $\uppervalue(v)=-\infty$ too. Let $M$ be an integer. From
$\lowervalue(v)<M$, we know that for all strategy $\maxstrategy$ of
$\MaxPl$, there exists a strategy $\minstrategy$ for $\MinPl$, such that
$\Payoff(\outcomes(v,\maxstrategy,\minstrategy))\leq M$. In
particular, $\MaxPl$ has no winning strategy in the qualitative game
equipped with $\Win_M$ as a winning condition. By the previous
determinacy result, we know that $\MinPl$ has a winning strategy in
that case, i.e., a strategy $\minstrategy$ such that every strategy
$\maxstrategy$ of $\MaxPl$ verifies
$\Payoff(\outcomes(v,\maxstrategy,\minstrategy))\leq M$. This exactly
means that $\uppervalue(v)\leq M$. Since this holds for every value
$M$, we get that $\uppervalue(v)=-\infty$. The proof goes exactly in a
symetrical way to show that $\uppervalue(v)=+\infty$ implies
$\lowervalue(v)=+\infty$.

Consider then the case where both $\uppervalue(v)$ and
$\lowervalue(v)$ are finite values. Suppose that
$\lowervalue(v)<\uppervalue(v)$ and consider a real number $r$
strictly in-between those two values. From $r<\uppervalue(v)$, we
deduce that $\MinPl$ has no winning strategy from $v$ in the
qualitative game with winning condition $\Win_r$. Identically, from
$\lowervalue(v)<r$, we deduce that $\MaxPl$ has no winning strategy
from $v$ in the same game. This contradicts the determinacy of this
qualitative game. Hence, $\lowervalue(v)=\uppervalue(v)$.

\section{Comparison with the \emph{longest shortest path problem} of
  Bj\"orklund and Vorobyov\label{sec:comparisonLSP}}

Bj\"orklund and Vorobyov have studied games related to our min-cost
reachability games, as the longest shortest path problem (LSP for
short) in \cite{BjoVor07}. To clarify the following discussion, let us
recall the definition of LSP, adapted to our syntax. 

The LSP problem considers a weighted graph $\game$, whose vertices are
partitioned into two players, and equipped with a single target vertex
$\target$. The problem asks to find a \emph{memoryless} strategy
$\maxstrategy$ of $\MaxPl$ such that in the graph
$\game_{\maxstrategy}$, obtained from $\game$ by deleting all outgoing
edges from vertices of $\maxvertices$ except those selected in
$\maxstrategy$, the length of the shortest path from every vertex to
the target $\target$ is as large as possible (over all memoryless
strategies). The definition of paths from a vertex $v$ to the target
is, however, very different from ours: such a path may indeed be a
finite play, without cycle, from $v$ to the target $\target$ (in which
case its length is the sum of the weights of edges). However, any
cycle containing $v$ (even if it cannot reach the target) is also
considered as a path to the target. The \emph{length} of such cycle is
defined as follows:
\begin{enumerate}
\item if the cycle has a negative weight, its length is $-\infty$;
\item else, if it has a positive weight, its length is $+\infty$;
\item else, if it has zero weight, its length is $0$.
\end{enumerate}

Consider, as an example, the weighted graph of Fig.~\ref{fig:LSP},
where, once again, $\MaxPl$ vertices are depicted with
circles. Clearly, vertices $\{v_1,v_2,v_3\}$ and $\{v_4\}$ form
independant subgames. We first consider the status of $v_1$, $v_2$ and
$v_3$. In this subgame, $\MaxPl$ has two possible strategies:
$\sigma_1$ that selects the $(v_1,v_2)$ edge, or $\sigma_2$ that
selects the $(v_1,v_3)$ edge instead.
\begin{enumerate}
\item If we select $\sigma_1$, we obtain the following values for
  $v_1$, $v_2$ and $v_3$ respectively: $0$, $0$ and $1$. Indeed, from
  vertex $v_2$, we can loop in the zero cycle in-between $v_1$ and
  $v_2$, which is better than the simple path to the target (having
  weight 3), so that the distance from $v_1$ and $v_2$ is $0$. From
  vertex $v_3$, only a single path leads to the target with weight
  $1$.\todo{B: I am now wondering if their semantics would consider
    the path from $v_3$ to $v_1$ then looping in the cycle in-between
    $v_1$ and $v_2$... This would not change the value in that case
    anyway but this could be another difference to point out.}
\item If we select $\sigma_2$, we obtain the following values for
  $v_1$, $v_2$ and $v_3$ respectively: $1$, $3$ and $1$. Indeed, from
  vertex $v_3$, the distance is the length of the simple path to the
  target, which is better than looping in-between $v_1$ and $v_3$ with
  a positive weight cycle. Hence, from vertex $v_1$, the distance is
  $1$ too, whereas from $v_2$, the distance is $3$.
\end{enumerate}
Now, let us turn our attention to $v_4$. Again, we must consider two
strategies: strategy $\sigma_3$ selects the $(v_4,v_4)$ edge (the
self-loop on $v_4$) and strategy $\sigma_4$ selects the
$(v_4,\target)$ edge.
\begin{enumerate}
\item If we select $\sigma_3$, we obtain value $-\infty$ for $v_4$,
  because all plays starting in $v_4$ loop in this negatively priced
  cycle. Recall that with our definition of min-cost reachability,
  such a cycle that never reaches the target would yield a
  total-payoff of $+\infty$, regardless of the weights. This is
  coherent with our intuition that the first goal of $\MinPl$ is to
  reach the target.
\item If we select $\sigma_4$, the value of $v_4$ is now $0$, which is
  thus better for $\MaxPl$.
\end{enumerate}
Finally, the optimal strategy for $\MaxPl$ is to choose edges
$(v_1,v_3)$ and $(v_4,\target)$, leading to the LSP distances
$(1,3,1,0)$ for vertices $(v_1,v_2,v_3,v_4)$. With our definition of
min-cost reachability game, the situation is \emph{completely
  different}: the value vector is then $(2,3,1,+\infty)$, associated
with the optimal strategy $\maxstrategy$ selecting edges $(v_1,v_2)$
and $(v_4,v_4)$.

Indeed, two main differences separate our two definitions. The first
one is the treatment of negative weight cycles. Actually, in LSP, the
fact that after selecting strategy $\maxstrategy$ a vertex cannot
reach the target anymore in $\game_{\maxstrategy}$ does not prevent
from mapping the distance $-\infty$ to a vertex contained in a
negative cycle. This is in contrast with our definition, that would
benefit to $\MaxPl$ since in such a situation, the target is not
reachable leading to the value $+\infty$.\footnote{We believe that
  this difference would certainly be eliminated by our precomputation
  of vertices of value $+\infty$ presented in the first item of
  Theorem~\ref{thm:optimal-strategy}.} The second difference consists
in the treatment of zero weight cycle. In LSP, a distance zero is then
computed, which is highly related with the fact that the authors want
to apply the resolution of LSP to mean-payoff games.

\begin{figure}[tbp]
  \centering
  \begin{tikzpicture}[node distance=2.5cm,auto,->,>=latex]
    \node[player1](1){\makebox[0mm][c]{$v_1$}}; 
    \node[player2](2)[above right of=1,yshift=-.7cm]{\makebox[0mm][c]{$v_2$}}; 
    \node[player2](3)[below right of=1,yshift=.7cm]{\makebox[0mm][c]{$v_3$}};
    \node[player1](4)[right of=2]{\makebox[0mm][c]{$v_4$}};
    \node[player1](tg)[right of=3]{\makebox[0mm][c]{$\target$}};
    
    \path 
    (1) edge[bend left=20] node[above left]{$-1$} (2) 
        edge[bend right=20] node[below left]{$0$} (3)
    (2) edge node[below right]{$1$} (1)
        edge node[above right]{$3$} (tg)
    (3) edge node[above right]{$1$} (1)
        edge node[above]{$1$} (tg)
    (4) edge[loop right] node[right]{$-1$} (4)
        edge node[right]{$0$} (tg)
    (tg) edge[loop right] node[right]{$0$} (tg);
  \end{tikzpicture}
  \caption{An instance of the longest shortest path problem}
  \label{fig:LSP}
\end{figure}
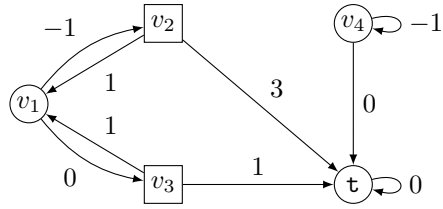

Nevertheless, in Section 9 of \cite{BjoVor07}, the authors
\emph{briefly} study another more natural definition, mapping zero
weight cycle to the distance $+\infty$ (closer to our definition). In
that case, the LSP distances of vertices $v_1$, $v_2$ and $v_3$ then
match our value vector in min-cost reachability games. Unfortunately,
it is not clear how the algorithm that is presented by Bj\"orklund and
Vorobyov can be adapted to accomodate this new definition of the cost
and compute the values of the nodes. Indeed, Proposition~9.1 of
\cite{BjoVor07} proves that the decision version of the LSP (i.e.,
deciding if the LSP of a vertex is positive or not in a given graph)
is in \NPcoNP, and they claim (without formal proof) that their
pseudo-polynomial time algorithm permits to decide this problem. This
requires a first transformation of the problem, so that no zero weight
cycles remain in the weighted graph. This transformation, explained
above Proposition~9.1, is correct but\emph{ does not preserve the LSP
  of the vertices}, so that their algorithm do not compute the LSP
distance of vertices, but only study their positivity.

Hence, the comparison between our definition of min-cost reachability
and the definition of LSP can be summarised as follows:
\begin{enumerate}
\item Either one relies on the first definition of LSP given in the
  paper. In this case, Bj\"orklund and Vorobyov propose a
  pseudo-polynomial time algorithm to compute the LSP, but these
  values \emph{do not match our definition of the min-cost
    reachability} payoff, as demonstrated by the above example.
\item Or one relies on the second definition. In this case, we believe
  that the LSP values may match our definition of min-cost
  reachability but the paper \emph{does not explain formally how to
    compute those new values}.
\end{enumerate}

It is plausible that the pseudo-polynomial time algorithm of
Bj\"orklund and Vorobyov could be adapted to compute the LSP value
according to the second definition. Yet, even in this case, there are
several points of comparison worth mentioning:
\begin{enumerate}
\item The worst-case complexity of the algorithm proposed by
  Bj\"orklund and Vorobyov is $O(|\vertices|^2 |E| W)$, which matches
  ours (see Proposition~\ref{the:rtpo-ca-marche-youpi}). Nevertheless,
  our solution is much easier to describe and to implement: while they
  must make repeated calls to a modified version of the Bellman-Ford
  algorithm (the modification is crucial to obtain their complexity),
  we have a simple fixed point algorithm.
\item Our algorithm exploits the value iteration paradigm (that can be
  seen as a \emph{backward induction}), while theirs is a
  \emph{strategy iteration algorithm}. Because of that, there are
  examples on which our algorithm is more efficient that theirs
  (although the worst-case complexity is the same). As an example, the
  LSP instance presented in Fig.~2 of \cite{BjoVor07}, with $2n+1$
  vertices but a biggest weight $W$ exponential in $n$, requires an
  exponential number $2^n$ of iterations for the strategy iteration
  algorithm, but our value iteration algorithm (based on backward
  induction) would solve it in linear time with respect to $n$. More
  precisely, with our tool, we are able to compute the values of this
  game in less than a millisecond for constant $n=15$ (i.e., a game
  with $32$ vertices and largest weight $W\approx 2.68\cdot 10^8$):
  the value of the initial state obtained is $4.74 \cdot 10^9$, and as
  aforementioned, the number of iterations that our value iteration
  requires is $16$, linear in $n$. Our tool being an add-on of the
  PRISM model-checker, relying on the use of integer rewards, we have
  not been able to build the game for a value $n$ greater than $15$.
\item We propose several acceleration heuristics that perform well on
  several examples we have tried. These accelerations cannot easily be
  incorporated in their algorithm (because it is a strategy iteration
  algorithm).
\item Finally, from the theoretical point of view, Bj\"orklund and Vorobyov do
  not study the strategies of the opponent player $\MinPl$. In
  contrast, our study allows us to produce optimal strategies for both
  players, and in particular, show that memoryless strategies may not
  be sufficient for $\MinPl$, whereas they are enough for $\MaxPl$.
\end{enumerate}

\section{\texorpdfstring{Finding vertices of value $+\infty$ in
    min-cost reachability games}{Finding vertices of positive infinity
    value in min-cost reachability games}\label{sec:+infty}}

Let $\game=\gameEx[\MCR[\finalvertices]]$ be a min-cost reachability
game. Notice that for all plays $\pi=v_0v_1\cdots$,
$\MCR[\finalvertices](\pi)=+\infty$ if and only if
$v_k\notin \finalvertices$ for all $k\geq 0$, i.e., $\pi$ avoids the
target. Then, let us show that the classical attractor technique
\cite{Tho95} allows us to compute the set
$\vertices_{+\infty}=\{v\in\vertices\mid \Value(v)=+\infty\}$. Recall
that the attractor of a set $\finalvertices$ of vertices is obtained
thanks to the sequence
$\Attr_0(\finalvertices), \ldots, \Attr_i(\finalvertices),\ldots$
where: $\Attr_0(\finalvertices) = \finalvertices$; and
for all $i\geq 0$: 
\[\Attr_{i+1}(\finalvertices) =
\Attr_i(\finalvertices) \cup \{v\in\minvertices\mid
\edges(v)\cap\Attr_i(\finalvertices)\neq\emptyset\} \cup
\{v\in\maxvertices\mid \edges(v)\subseteq\Attr_i(\finalvertices)
\}\,.\]
It is well-known that this sequence converges after at most
$|\vertices|$ steps to the set $\Attr(\finalvertices)$ of all vertices
from which $\MinPl$ has a memoryless strategy to ensure reaching
$\finalvertices$. Hence, under our hypothesis,
$\vertices_{+\infty}=\vertices \setminus \Attr(\finalvertices)$. This
proves the first item of Theorem~\ref{thm:optimal-strategy}.

\section{\texorpdfstring{Finding vertices of value $-\infty$ in
    min-cost reachability games}{Finding vertices of negative infinity
    value in min-cost reachability games}\label{sec:-infty}}

We give here the proof of the following
 aiming at computing the
set $\vertices_{-\infty}$ of vertices with a value $-\infty$ in a
min-cost reachability game $\game = \gameEx[\MCR[T]]$, from which we
suppose that there is no more vertices with value $+\infty$.

\begin{proposition}\label{prop:from-wg-to-mp-and-vice-versa}
  For all MCR game $\game = \gameEx[\MCR]$, for all vertices $v$ of
  $\game$, $\Value_\game(v)=-\infty$ if and only if
  $\Value_{\game'}(v)<0$, where $\game'$ is the mean-payoff game
  $\gameEx[\MP]$. Conversely, given a mean-payoff game
  $\game=\gameEx[\MP]$, we can build, in polynomial time, an MCR game
  $\game'$ such that for all vertices $v$ of $\game$:
  $\Value_{\game}(v)<0$ if and only if $\Value_{\game'}(v)=-\infty$.
\end{proposition}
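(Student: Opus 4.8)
The plan is to prove the two implications separately, using throughout that mean-payoff games admit memoryless optimal strategies for both players, that MCR games are determined, and the standing normalisation that $\target$ is the unique target with a single outgoing $0$-weight self loop; since we work under the hypothesis that no vertex has value $+\infty$, every vertex lies in $\MinPl$'s attractor to $\target$. Write $n=|\vertices|$. For the first equivalence I would first establish the contrapositive of the ``only if'': assuming $\Value_{\game'}(v)\ge 0$, fix a memoryless optimal strategy $\maxstrategy^*$ of $\MaxPl$ in the mean-payoff game $\game'$. Optimality gives $\inf_{\minstrategy}\MP(\outcomes(v,\maxstrategy^*,\minstrategy))=\lowervalue_{\game'}(v)\ge 0$, so in the one-player graph $\game_{\maxstrategy^*}$ obtained by fixing $\maxstrategy^*$ no cycle reachable from $v$ can have negative mean (otherwise $\MinPl$ would reach and pump it). Hence every such cycle has non-negative total weight, and decomposing any walk from $v$ into a simple path plus cycles shows that every prefix has total weight at least $-(n-1)W$. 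Therefore $\MCR(\outcomes(v,\maxstrategy^*,\minstrategy))\ge -(n-1)W$ for all $\minstrategy$, whence $\Value_\game(v)\ge -(n-1)W>-\infty$ by determinacy.

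For the converse, assume $\Value_{\game'}(v)<0$ and take a memoryless optimal strategy $\minstrategy^*$ of $\MinPl$ in $\game'$, so every play conforming to $\minstrategy^*$ from $v$ has mean-payoff $\le\Value_{\game'}(v)<0$. Two observations drive the argument. First, $\target$ is \emph{unreachable} from $v$ in $\game_{\minstrategy^*}$, for reaching it would let $\MaxPl$ loop on the $0$-self loop and secure mean-payoff $0$, a contradiction. Second, the symmetric pumping argument shows that every cycle reachable from $v$ in $\game_{\minstrategy^*}$ has total weight $\le -1$; as \emph{all} reachable cycles are negative, the maximal total weight of a length-$m$ walk from $v$ tends to $-\infty$ as $m\to\infty$. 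I would then use a switching strategy for $\MinPl$ in $\game$: given a threshold $N$, play $\minstrategy^*$ for a number $m$ of steps large enough to force the running sum below $-(N+nW)$ --- possible \emph{uniformly} in $\MaxPl$'s behaviour, and without meeting $\target$ --- then switch to an attractor strategy reaching $\target$ within $n$ further steps, adding at most $nW$. The payoff is then $<-N$ against every $\maxstrategy$, so $\uppervalue_\game(v)\le -N$ for all $N$, i.e. $\Value_\game(v)=-\infty$.

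For the reduction, the idea is to manufacture target-reachability without disturbing the sign of the mean-payoff value, so that the first equivalence applies as a black box. I would build $\game'$ from $\game$ by interposing, on every edge $(u,w)\in\edges$, a fresh $\MinPl$ vertex $m_{(u,w)}$ (routing $u\to m_{(u,w)}\to w$ with the original weight on the first half and $0$ on the second), adding a fresh target $\target$ with a $0$-self loop, and giving each $m_{(u,w)}$ an extra $0$-weight edge to $\target$; this is polynomial. Since every cycle of $\game'$ now contains some $m_{(u,w)}$, from any vertex $\MinPl$ can reach $\target$ within two steps, so $\game'$ has no vertex of value $+\infty$ and the first equivalence is available. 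It then remains to verify that, for every original vertex $v$, the mean-payoff value of the graph of $\game'$ is negative if and only if $\Value_\game(v)<0$: subdivision by $0$-weight edges leaves each cycle's total weight unchanged while doubling its length, hence only halves mean-payoffs and preserves their sign, and the only \emph{new} plays enabled by the escapes have mean-payoff exactly $0$. Thus if $\Value_\game(v)<0$ then $\MinPl$ ignores the escapes and keeps a negative value, while if $\Value_\game(v)\ge 0$ then $\MaxPl$'s old optimal strategy still guarantees mean-payoff $\ge 0$ (escaping plays give $0$, non-escaping ones give the halved original value $\ge 0$). Combining sign-preservation with the first equivalence applied to $\game'$ yields $\Value_\game(v)<0 \iff \Value_{\game'}(v)=-\infty$.

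The main obstacle I anticipate is the converse of the first equivalence: upgrading ``$\MinPl$ forces a negative cycle'' to ``$\MinPl$ forces the running sum below an arbitrary threshold \emph{before} reaching the target''. This needs both the \emph{uniform} divergence of walk weights in an all-negative-cycle graph and the observation that the mean-payoff optimal strategy never sees the target, which together make the switching construction go through. In the reduction, the analogously delicate point is the sign-preservation argument, where one must check that neither the subdivision nor the escape edges can flip a non-negative mean-payoff value to a negative one, or vice versa.
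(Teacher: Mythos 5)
Your proof is correct, and its overall architecture matches the paper's: both directions of the first equivalence rest on pumping a cycle of the appropriate sign under a memoryless mean-payoff-optimal strategy, and the implication $\Value_{\game'}(v)<0\Rightarrow\Value_\game(v)=-\infty$ is handled in both texts by the same switching device (follow the mean-payoff-optimal strategy of $\MinPl$ until the running sum is low enough, then switch to an attractor strategy towards $\target$). Where you genuinely diverge is the reverse implication: the paper proves $\Value_\game(v)=-\infty\Rightarrow\Value_{\game'}(v)<0$ directly, by unfolding a $\MinPl$-strategy of value below $-|\vertices|W$ into a finite tree (K\"onig's lemma), locating a repeated vertex with negative in-between weight on the branch followed by $\MaxPl$'s optimal strategy, and concluding via a strategy-modification contradiction; you instead prove the contrapositive by fixing $\MaxPl$'s memoryless optimal strategy in the mean-payoff game, noting that every cycle reachable in the induced one-player graph has non-negative weight, and reading off the explicit bound $\Value_\game(v)\geq -(|\vertices|-1)W$ from a path-plus-cycles decomposition. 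Your route is shorter, avoids the tree construction entirely, and as a by-product re-derives the lower bound on finite MCR values that the paper establishes separately inside the proof of Lemma~\ref{lem:value-stab}. For the reduction, the paper assumes the mean-payoff game bipartite and attaches $0$-weight escape edges from every $\MinPl$ vertex to the fresh target, whereas you subdivide every edge by a fresh $\MinPl$ vertex carrying the escape edge; both constructions are polynomial and serve the identical purpose of ensuring that no vertex of the resulting MCR game has value $+\infty$ so that the first equivalence can be invoked, after which the sign-preservation check for the mean-payoff value goes through exactly as you describe.
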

\begin{proof}
  Consider first a min-cost reachability game
  $\game = \gameEx[\MCR[T]]$ such that $\Value_\game(v)\neq +\infty$
  for all $v\in\vertices$, and $\game'=\gameEx[\MP]$ the same weighted
  graph equipped of a mean-payoff objective.

  If $\Value_{\game'}(v)<0$, the outcome starting in $v$ and following
  a profile $(\maxstrategy^*,\minstrategy^*)$ of optimal memoryless
  strategies necessarily starts with a finite prefix and then loops in
  a cycle with a total weight less than $0$. For every $M>0$, we
  construct a strategy $\minstrategy$ that ensures in $\game$ a cost
  less than or equal to $-M$: this will prove that
  $\Value_\game(v)=-\infty$. Since in every vertex of $\game'$,
  $\MinPl$ has a strategy in $\game$ to reach the final vertices
  (otherwise there would exist a vertex with value $+\infty$), there
  exists $w''$ such that $\MinPl$ can reach from any vertex of
  $\game'$ the target with a cost at most $w''$. The strategy
  $\minstrategy$ of $\MinPl$ is then to follow $\minstrategy^*$ until
  the accumulated cost is less than $-M-w''$ (which we can prove will
  happen), at which point it follows its strategy to reach the target.

  Reciprocally, if $\Value_\game(v)=-\infty$, consider
  $M=|\vertices| W$ and a strategy $\minstrategy^M$ of $\MinPl$
  ensuring a cost less than $-M$, i.e., such that
  $\pricestrategy(v,\minstrategy^M)<-M$. Consider the
  finitely-branching tree built from $\game$ by unfolding the game
  from vertex $v$ and resolving the choices of $\MinPl$ with strategy
  $\minstrategy^M$. Each branch of this tree corresponds to a possible
  strategy of $\MaxPl$. Since this strategy generates a finite cost,
  we are certain that every such branch leads to a vertex of
  $\finalvertices$. If we trim the tree at those vertices, we finally
  obtain a finite tree. Now, for a contradiction, suppose the optimal
  memoryless strategy $\maxstrategy^*$ of $\MaxPl$ ensures a
  non-negative mean-payoff, that is,
  $\Value_{\game'}(v,\maxstrategy^*)\geq 0$.  Consider the branch of
  the previous tree where $\MaxPl$ follows strategy
  $\maxstrategy^*$. Since this finite branch has cost less than
  $-M=-|\vertices| W$, we know for sure that there is two occurrences
  of the same vertex $v'$ with an in-between weight less than $0$:
  otherwise, by removing all nonnegative cycles, we obtain a play
  without repetition of vertices, henceforth of length bounded by
  $|\vertices|$, and therefore of cost at least $-M$.  Suppose that
  $v'\in\maxvertices$. Then, $\MinPl$ has a strategy $\minstrategy$ to
  ensure a negative mean-payoff $\Value_{\game'}(v,\minstrategy)<0$:
  indeed, it simply modifies its strategy so that it always follows
  his choices made in the negative cycle starting in $v'$, ensuring
  that, against the optimal strategy $\maxstrategy^*$ of $\MaxPl$, he
  gets a mean-payoff being the cost of the cycle. This is a
  contradiction since $\MaxPl$ is supposed to have a strategy ensuring
  a non-negative mean-payoff from $v$. Hence, $v'\in\minvertices$. But
  the same contradiction appears in that case since $\MinPl$ can force
  that it always stays in the negative cycle by modifying his
  strategy. Finally, we have proved that $\MaxPl$ cannot have a
  memoryless strategy ensuring a non-negative mean-payoff from $v$. By
  memoryless determinacy of the mean-payoff games, this ensures that
  $\MinPl$ has a memoryless strategy ensuring a negative mean-payoff
  from~$v$.

  Hence, we have shown that $\Value_\game(v)=-\infty$ if and only if
  $\Value_{\game'}(v)<0$, which concludes the first claim of
  Proposition~\ref{prop:from-wg-to-mp-and-vice-versa}.

  \medskip Conversely, we reduce mean-payoff games to min-cost
  reachability games as follows. Let $\game = \gameEx[\MP]$ be a
  mean-payoff game. Without loss of generality, we may suppose that
  the graph of the game is bipartite, in the sense that
  $\edges\subseteq \maxvertices\times \minvertices \cup
  \minvertices\times \maxvertices$.
  The problem we are interested in is to decide whether
  $\Value_\game(v)<0$ for a given vertex~$v$. We now construct a
  min-cost reachability game
  $\game' = \tuple{\vertices',\edges',
    \edgeweights',\MCR[\finalvertices']}$
  from $\game$. The only difference is the presence of a fresh target
  vertex $\target$ on top of vertices of $\vertices$:
  $\vertices'=\vertices\uplus\{\target\}$ with
  $\finalvertices'=\{\target\}$. Edges of $\weightedarena'$ are given
  by
  $\edges' = \edges \cup\{(v,\target)\mid v\in\minvertices\}\cup
  \{(\target,\target)\}$.
  Weights of edges are given by:
  $\edgeweights'(v,v')=\edgeweights(v,v')$ if $(v,v')\in\edges$, and
  $\edgeweights'(v,\target)=\edgeweights'(\target,\target)=0$.  We
  show that $\Value_{\game}(v)<0$ if and only if
  $\Value_{\game'}(v)=-\infty$.

  In $\game'$, all values are different from $+\infty$, since $\MinPl$
  plays at least every two steps, and has the capability to go to the
  target vertex with weight 0. Hence, letting
  $\game''=\tuple{\vertices',\edges', \edgeweights',\MP}$ the
  mean-payoff game on the weighted graph of $\game'$, by the previous
  direction, we have that for every vertex $v\in\vertices'$,
  $\Value_{\game'}(v)=-\infty$ if and only if $\Value_{\game''}(v)<0$.

  To conclude, we prove that for all vertices $v\in\vertices$,
  $\Value_{\game''}(v)<0$ if and only if $\Value_{\game}(v)<0$. If
  $\Value_{\game}(v)<0$, by mapping the memoryless optimal strategies
  of $\game$ into $\game''$, we directly obtain that
  $\Value_{\game''}(v)\leq \Value_{\game}(v)<0$. Reciprocally, if
  $\Value_{\game''}(v)<0$, we can project a profile of memoryless
  optimal strategies over vertices of $\game$: the play obtained from
  $v$ in $\game$ is then the projection of the play obtained from $v$
  in $\game''$, with the same cost. Hence,
  $\Value_{\game}(v)\leq \Value_{\game''}(v)<0$.
\end{proof}

\section{Computing finite values in min-cost reachability
  games\label{sec:finite-values}}

We now prove the correction of
Algorithm~\ref{algo:value-iteration-RT}, as stated below.

\begin{proposition}\label{the:rtpo-ca-marche-youpi}
  If an MCR game $\game=\gameEx[\MCR]$ is given as input (possibly
  with values $+\infty$ or $-\infty$),
  Algorithm~\ref{algo:value-iteration-RT} outputs $\Value_\game$,
  after at most $(2|\vertices|-1) W |\vertices|+2|\vertices|$
  iterations.
\end{proposition}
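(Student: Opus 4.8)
The plan is to prove correctness and the iteration bound in one sweep, by tracking the algorithm's vector, written $\mathsf{X}^{(i)}$ for its value at the end of the $i$-th iteration (so $\mathsf{X}^{(0)}$ is the initialisation of line~\ref{line-init}, with $\mathsf{X}^{(0)}(\target)=0$ and $\mathsf{X}^{(0)}(v)=+\infty$ otherwise), against two reference sequences: the unmodified iterates $x_i=\operator(x_{i-1})$ and the true value $\Value_\game$. From the discussion preceding the statement I may assume that $x_i=\boundeduppervalue{i}$, that $\operator$ is monotone and Scott-continuous, that $\Value_\game$ is the greatest fixed point of $\operator$, and hence (by Kleene) that $x_i$ converges pointwise to $\Value_\game$. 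I shall also use the fact recalled there that every finite value of an MCR game is at least $-(|\vertices|-1)W$.

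The first step is to establish, by a single induction on $i$, the two-sided sandwich $\Value_\game\vleq\mathsf{X}^{(i)}\vleq x_i$. The upper inequality is immediate: by monotonicity $\operator(\mathsf{X}^{(i-1)})\vleq\operator(x_{i-1})=x_i$, and line~\ref{line-infty-RT} can only decrease entries. The lower inequality is the delicate half, because it must be run simultaneously with the soundness of the cutoff: assuming $\mathsf{X}^{(i-1)}\vgeq\Value_\game$, monotonicity gives $\operator(\mathsf{X}^{(i-1)})\vgeq\operator(\Value_\game)=\Value_\game$, so any entry that line~\ref{line-infty-RT} drops below $-(|\vertices|-1)W$ already satisfies $\Value_\game(v)<-(|\vertices|-1)W$, whence $\Value_\game(v)=-\infty$ by the range fact; thus resetting that entry to $-\infty$ preserves $\vgeq\Value_\game$, closing the induction. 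Squeezing $\mathsf{X}^{(i)}$ between $\Value_\game$ and $x_i\to\Value_\game$ then yields pointwise convergence of $\mathsf{X}^{(i)}$ to $\Value_\game$; this coupling of the invariant with the cutoff's soundness is the main obstacle of the proof.

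The second step upgrades this to finite-time stabilisation, handled per vertex according to its value. Vertices with value $+\infty$ are exactly those outside the attractor $\Attr(\{\target\})$, and a short induction shows $\mathsf{X}^{(i)}(v)=+\infty$ is invariant for them (a $\MaxPl$ vertex outside the attractor keeps a successor outside it, a $\MinPl$ vertex outside it has all successors outside), so they are correct throughout and are never touched by the cutoff. Vertices of value $-\infty$ satisfy $\mathsf{X}^{(i)}(v)\vleq x_i(v)\to-\infty$, so after finitely many iterations the value computed by $\operator$ falls below $-(|\vertices|-1)W$ and line~\ref{line-infty-RT} freezes the entry at $-\infty$ (which is absorbing, as $\mathsf{X}^{(i)}$ is non-increasing). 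A vertex of finite value is, by soundness, never cut; moreover such a $\MinPl$ vertex has no $-\infty$-valued successor (else its own value would be $-\infty$), while such a $\MaxPl$ vertex never selects a $-\infty$ successor in its maximum; hence once all $-\infty$ vertices are frozen, the restriction of the dynamics to the finitely-valued vertices is exactly the all-finite value iteration, which stabilises on $\Value_\game$ by the argument recalled in the main text. Combining the three cases, the loop terminates with $\mathsf{X}=\Value_\game$.

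It remains to bound the number of iterations. Since $\mathsf{X}^{(i)}$ is non-increasing and the loop stops as soon as $\mathsf{X}=\mathsf{X}_{pre}$, the number of iterations is at most one more than the total number of strict changes of individual entries over the whole run. Each non-target vertex undergoes: at most one change from $+\infty$ to its first finite value, which is realised (as for the unmodified iterates) by forcing the target within at most $|\vertices|$ steps and is therefore at most $|\vertices|W$; a block of strict integer decreases while finite, which all stay in the window $[-(|\vertices|-1)W,|\vertices|W]$ (the lower bound because any smaller value is cut, the upper bound by the previous sentence), contributing at most $(2|\vertices|-1)W$ changes; and at most one final change to $-\infty$. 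Summing these three contributions over the non-target vertices and adding the final confirming iteration yields the announced bound $(2|\vertices|-1)W|\vertices|+2|\vertices|$.
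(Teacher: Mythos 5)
Your proof is correct and follows essentially the same route as the paper's: sandwich the algorithm's vector between $\Value_\game$ and the unmodified iterates $x_i$, justify the cutoff of line~\ref{line-infty-RT} by the lower bound on finite values, split into $+\infty$ / finite / $-\infty$ vertices, and bound the number of iterations by counting strict integer decreases inside the window $\big[-(|\vertices|-1)W,\,|\vertices|W\big]$ --- this is precisely the content of Lemmas~\ref{lem:value-stab} and~\ref{lem:acceleration} and of the paper's case split. The one step you should not treat as already available is the pointwise convergence $x_i(v)\to-\infty$ at vertices with $\Value_\game(v)=-\infty$, which you need to guarantee that the cutoff eventually fires there: the convergence result recalled before the statement (Lemma~\ref{lem:convergence}) is proved only under the hypothesis that all values are finite (its proof applies K\"onig's lemma to a strategy \emph{realising} a finite value), so your case requires the small extension that for every $M$ there is a $\MinPl$ strategy securing payoff at most $-M$ while reaching the target, hence doing so within some bounded depth $d$ by K\"onig's lemma, whence $x_d(v)\leq -M$; the paper buries the same point inside Lemma~\ref{lem:acceleration}.
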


Observe first that for all $v\in \vertices$, for all $i\geq 0$, and for all
strategies $\maxstrategy$ and $\minstrategy$:
\[ \boundedWeight{i}(\outcomes(v,\maxstrategy,\minstrategy))\geq
\MCR(\outcomes(v,\maxstrategy,\minstrategy))\,.\]%
Indeed, if the target vertex $\target$ is reached within $i$ steps,
then both plays will have the same payoff. Otherwise,
$\boundedWeight{i}(\outcomes(v,\minstrategy,\maxstrategy))
=+\infty$. Thus, for all $i\geq 1$ and $v\in\vertices$:
\[\boundeduppervalue{i}(v)\geq \uppervalue(v)=\Value(v)\]
which can be rewritten as 
\[\boundeduppervalue{i}\vgeq \uppervalue = \Value\,.\]

Let us now consider the sequence $(\bupval{i})_{i\geq 0}$. We first
give an alternative definition of this sequence permitting to show its
convergence.
\begin{lemma}\label{lemma:min-max-charact-of-bupval}
  For all $i\geq 1$, for all $v\in \vertices$:
  \[
    \bupval{i}(v) =
    \begin{cases}
      \displaystyle{\max_{v'\in\edges(v)}}
      \big(\edgeweights(v,v')+\bupval{i-1}(v')\big)
      &\textrm{if } v\in \maxvertices\setminus\{\target\}\\
      \displaystyle{\min_{v'\in\edges(v)}}
      \big(\edgeweights(v,v')+\bupval{i-1}(v')\big)
      &\textrm{if } v\in \minvertices\setminus\{\target\}\\
      0 &\textrm{if } v=\target
    \end{cases}
  \]
\end{lemma}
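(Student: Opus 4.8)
The plan is to recognise that this lemma is just the finite-horizon \emph{dynamic-programming principle} for the bounded value $\bupval{i}$, so that it suffices to establish a one-step (first-move) decomposition rather than to run a genuine induction on $i$. The backbone is a single arithmetic identity on plays: for any $v\neq\target$ and any play $v\,\pi'$ whose suffix $\pi'$ starts at a successor $v'\in\edges(v)$,
\begin{equation*}
  \boundedWeight{i}(v\,\pi') = \edgeweights(v,v') + \boundedWeight{i-1}(\pi').
\end{equation*}
First I would verify this by a case split on whether $\target$ is reached within the horizon: since $v\neq\target$, the first occurrence of $\target$ in $v\,\pi'$ (if any) sits at some position $1\leq k\leq i$, matching position $k-1\leq i-1$ in $\pi'$, and then both sides equal $\edgeweights(v,v')+\TP(\pi'[k-1])$; if $\target$ is not reached within $i$ steps both sides are $+\infty$, the identity holding with the convention $c+(+\infty)=+\infty$ over $\Zbar$. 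The target case of the lemma is immediate, since a play starting in $\target$ reaches it at position $0\leq i$ and hence has payoff $\TP(\pi[0])=0$, giving $\bupval{i}(\target)=0$.

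For $v\in\maxvertices\setminus\{\target\}$, write $m=\max_{v'\in\edges(v)}\big(\edgeweights(v,v')+\bupval{i-1}(v')\big)$, a genuine maximum since $\edges(v)$ is finite, and I would prove $\bupval{i}(v)=m$ by two inequalities, each obtained by splitting every strategy into its first move and the suffix strategy it induces. For $\bupval{i}(v)\leq m$: given $\epsilon>0$, $\MinPl$ fixes for each successor $v'$ an $\epsilon$-optimal strategy witnessing the infimum defining $\bupval{i-1}(v')$, and after $\MaxPl$'s actual first move plays the matching one; by the identity above, against any $\maxstrategy$ the payoff is $\edgeweights(v,v')+\boundedWeight{i-1}(\cdots)\leq\edgeweights(v,v')+\bupval{i-1}(v')+\epsilon\leq m+\epsilon$, whence $\bupval{i}(v)\leq m+\epsilon$ for all $\epsilon$. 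For $\bupval{i}(v)\geq m$: against any $\minstrategy$, $\MaxPl$ moves to a maximising successor $v^*$ and then plays a strategy that is $\epsilon$-good against the suffix strategy $\minstrategy$ induces at $v^*$; as that suffix strategy cannot do better than $\bupval{i-1}(v^*)$, the payoff is $\geq\edgeweights(v,v^*)+\bupval{i-1}(v^*)-\epsilon=m-\epsilon$, so $\sup_{\maxstrategy}\boundedWeight{i}\geq m$ for every $\minstrategy$. The case $v\in\minvertices\setminus\{\target\}$ is symmetric, swapping the roles of $\inf/\sup$ and of $\max/\min$.

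The main obstacle is the bookkeeping around infinite values, since $\bupval{i-1}$ ranges over $\Zbar$ and the $\epsilon$-optimality argument must be reread as an $N$-threshold argument whenever a successor value is $\pm\infty$: if $\bupval{i-1}(v^*)=+\infty$ then by definition $\MaxPl$ can push the suffix payoff above any threshold, so $m=+\infty$ is approached from below; dually, if every successor has value $-\infty$ then for each prescribed $N$ and each first move $\MinPl$ can force the suffix payoff below $-N-W$, hence the whole payoff below $-N$. One also has to observe that the combined $\MinPl$ (resp.\ $\MaxPl$) object is a legitimate history-dependent strategy, which it is because the first move is visible in the play prefix on which the strategy is defined. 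Once these degenerate cases are dispatched, the two inequalities yield $\bupval{i}(v)=m$ together with its symmetric counterpart, establishing the lemma.
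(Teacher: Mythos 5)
Your proof is correct, and it takes a genuinely different (though closely related) route from the paper's. The paper proves the lemma by explicitly unfolding the game into a finite tree $A^i(v)$ of depth $i$, defining a backward-induction labelling $\lambda$ on that tree, and then asserting that $\lambda(A^i(v))=\bupval{i}(v)$ because the branches of $A^i(v)$ are exactly the length-$i$ plays under all strategy profiles; the recursion of the lemma is then the defining recursion of $\lambda$. You instead prove the Bellman recursion directly for the $\inf\sup$ quantity $\bupval{i}$, via the payoff-shift identity $\boundedWeight{i}(v\,\pi')=\edgeweights(v,v')+\boundedWeight{i-1}(\pi')$ and a two-sided strategy-composition argument with $\epsilon$-optimal (or $N$-threshold) suffix strategies. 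What your approach buys is that it makes explicit precisely the step the paper leaves at the level of ``clearly'': that pasting a first move onto suffix strategies is legitimate and that the $\inf\sup$ commutes with the first-move decomposition; it also works directly with the upper value and never needs the (implicit) determinacy of the truncated game that the tree/backward-induction identification silently supplies. What the paper's approach buys is brevity and a reusable object (the trees $A^i(v)$ reappear in the proof of Lemma~\ref{lem:convergence}). Two small remarks on your write-up: the $-\infty$ branch of your bookkeeping is vacuous, since $\boundedWeight{i-1}$ is bounded below by $-(i-1)W$ whenever it is finite, so $\bupval{i-1}(v')$ can never be $-\infty$; and since all payoffs lie in $\Z\cup\{+\infty\}$ and are bounded below, the relevant infima and suprema are in fact attained, so the $\epsilon$-argument could be replaced by exact optimal choices --- but neither point affects correctness.
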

\begin{proof}
  The lemma can be established by showing that $\bupval{i}(v)$ is the
  value in a game played on a finite tree of depth $i$ (i.e., by
  applying a backward induction). We adopt the following notation for
  labeled unordered trees. A leaf is denoted by $(v)$, where
  $v\in \vertices$ is the label of the leaf. A tree with root labeled
  by $v$ and subtrees $A_1,\ldots,A_n$ is denoted by
  $(v, \{A_1,\ldots, A_n\})$. Then, for each $v\in \vertices$ and
  $i\geq 0$, we define $A^i(v)$ as follows:
  \begin{align*}
    A^0(v) &= (v)\\
    \textrm{for all }i\geq 1: A^i(v) &= (v,\{A^{i-1}(v')\mid (v,v')\in
    E\})
  \end{align*}
  Now, let us further label those trees by a value in
  $\Z\cup\{+\infty\}$ thanks to the function $\lambda$. For all tree
  of the form $A^0(v)=(v)$, we let
  \begin{align*}
    \lambda(A^0(v)) &= \left\{
      \begin{array}{ll}
        0&\textrm{if } v=\target\\
        +\infty&\textrm{if } v\neg\target\\
      \end{array}
    \right.
  \end{align*}
  For all tree of the form $A^{i}(v)=(v,\{A^{i-1}(v_1),\ldots,
  A^{i-1}(v_m)\})$ (for some $i\geq 1$), we let
  \begin{equation}
    \lambda(A^{i}(v)) = 
    \begin{cases}
      \displaystyle{\max_{1\leq j \leq m}}
      \left(\edgeweights(v,v_j)+\lambda(A^{i-1}(v_j))\right)
      &\textrm{if }v\in \maxvertices\setminus\{\target\}\\
      \displaystyle{\min_{1\leq j \leq m}}
      \left(\edgeweights(v,v_j)+\lambda(A^{i-1}(v_j))\right)
      &\textrm{if }v\in \minvertices\setminus\{\target\}\\
      0 &\textrm{if } v=\target
    \end{cases}\label{eq:a1}
  \end{equation}

  Clearly, for all $v\in \vertices$, for all $i\geq 0$, the branches of
  $A^i(v)$ correspond to all the possible finite plays
  $\outcomes(v,\maxstrategy,\minstrategy)[i]$, i.e., there is a branch
  for each possible strategy profile $(\maxstrategy$,
  $\minstrategy)$. Thus, $\lambda(A^i(v))=\bupval{i}(v)$ for all
  $i\geq 0$, which permits us to conclude from \eqref{eq:a1}.
\end{proof}

We have just shown that for all $i\geq 1$, $\bupval{i}
=\operator(\bupval{i-1})$, so that $x_i=\bupval{i}$ for all $i\geq 0$
(with $x_i$ defined in page~\pageref{fi}). 

\begin{remark}\label{rem:Kleene-sequence}
  Notice that, at this point, it would not be too difficult to show
  that $\Value$ is a fixed point of operator $\operator$. However, it
  is more difficult to show that it is the greatest fixed point of
  $\operator$, and to deduce directly properties over optimal
  strategies in the min-cost reachability game. Instead, we use
  the sequence $(\bupval{i})_{i\geq 0}$ to obtain more interesting
  results on $\Value$. 
\end{remark}

We now study refined properties of the sequence
$(\bupval{i})_{i\geq 0}$, namely its stationarity and the speed of its
convergence.  We start by characterizing how $\bupval{i}$ evolves over
the first $|\vertices|+1$ steps. The next lemma states that, for each
node $v$, the sequence
$\bupval{0}(v),\bupval{1}(v),\ldots,\allowbreak\bupval{i}(v),\ldots,
\bupval{|\vertices|}(v)$ is of the form
\[
\underbrace{+\infty,+\infty,\ldots,+\infty}_{k\textrm{
    times}},a_{k}, a_{k+1},\ldots,a_{|\vertices|}
\]
where $k$ is the step at which $v$ has been added to the attractor,
and each value $a_i$ is finite and bounded:
\begin{lemma}\label{lem:link-attractor-values}
  Let $v\in \vertices$ be a vertex and let $0\leq k\leq |\vertices|$
  be such that
  $v\in\Attr_k(\{\target\})\setminus \Attr_{k-1}(\{\target\})$
  (assuming $\Attr_{-1}(\{\target\})=\emptyset$). Then, for all
  $0\leq j\leq |\vertices|$:
  \begin{inparaenum}[(i)]
  \item $j<k$ implies $\bupval{j}(v)=+\infty$ and
  \item $j\geq k$ implies $\bupval{j}(v)\leq j  W$.
  \end{inparaenum}
\end{lemma}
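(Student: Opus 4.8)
The plan is to prove a single statement by induction on $j$ that simultaneously captures both items for all vertices at once, rather than fixing the vertex $v$ and its attractor level $k$ from the start. For every vertex $u\in\vertices$, write $r(u)$ for its \emph{attractor rank}, i.e. the least $m$ with $u\in\Attr_m(\{\target\})$ (and $r(u)=+\infty$ if $u$ never enters the attractor). I would prove, by induction on $j\geq 0$, the statement $S(j)$: for every $u\in\vertices$, if $j<r(u)$ then $\bupval{j}(u)=+\infty$, and if $j\geq r(u)$ then $\bupval{j}(u)\leq jW$. The lemma is then exactly the instance of $S(j)$ applied to the vertex $v$, whose rank is the $k$ of the statement. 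The two ingredients are the recursive characterisation of $\bupval{j}$ from Lemma~\ref{lemma:min-max-charact-of-bupval} and the inductive definition of the attractor.

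For the base case $j=0$, the claim is immediate: $\bupval{0}(\target)=0$ with $r(\target)=0$, while every other vertex $u$ has $\bupval{0}(u)=+\infty$ and $r(u)\geq 1>0$. For the inductive step, assuming $S(j-1)$, I would fix $u\neq\target$ (the case $u=\target$ being trivial, since $\bupval{j}(\target)=0$ and $r(\target)=0$) and split on whether $u\in\minvertices$ or $u\in\maxvertices$, using $\bupval{j}(u)=\min_{u'\in\edges(u)}(\edgeweights(u,u')+\bupval{j-1}(u'))$ in the first case and the analogous $\max$ in the second. The crucial observation is that the quantifier structure of the attractor mirrors exactly the $\min$/$\max$ of the recursion: a $\MinPl$ vertex enters $\Attr_m(\{\target\})$ as soon as \emph{one} successor lies in $\Attr_{m-1}(\{\target\})$, matching the $\min$, whereas a $\MaxPl$ vertex enters $\Attr_m(\{\target\})$ only when \emph{all} successors lie in $\Attr_{m-1}(\{\target\})$, matching the $\max$.

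Concretely, in the regime $j<r(u)$ I would argue that no successor can make the value finite. For $u\in\minvertices$, if some $u'$ had $r(u')\leq j-1$ then $u$ would belong to $\Attr_j(\{\target\})$, contradicting $r(u)>j$; hence every successor satisfies $r(u')>j-1$, so $S(j-1)$ gives $\bupval{j-1}(u')=+\infty$ and the $\min$ is $+\infty$. For $u\in\maxvertices$, if all successors had rank $\leq j-1$ then $u\in\Attr_j(\{\target\})$, again a contradiction, so some successor has infinite $\bupval{j-1}$ and the $\max$ is $+\infty$. In the regime $j\geq r(u)$, writing $k=r(u)\geq 1$, a $\MinPl$ vertex has a witnessing successor $u'$ with $r(u')\leq k-1\leq j-1$, and $S(j-1)$ yields $\bupval{j-1}(u')\leq (j-1)W$, so $\bupval{j}(u)\leq \edgeweights(u,u')+\bupval{j-1}(u')\leq W+(j-1)W=jW$; a $\MaxPl$ vertex has all successors of rank $\leq k-1\leq j-1$, so the same bound holds for every term of the $\max$, again giving $\bupval{j}(u)\leq jW$.

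The step I expect to require the most care is not any single computation but phrasing the induction statement so that it is self-strengthening: one must carry the rank of \emph{every} successor, not just of $u$, and handle the infinite-rank (value $+\infty$) vertices inside the very same induction, since the recursion for $u$ refers to successors of arbitrary rank. Once $S(j)$ is stated uniformly over all vertices and both regimes, each case reduces to reading off the appropriate clause of the attractor definition and applying $|\edgeweights(u,u')|\leq W$; the quantitative bound $jW$ then appears simply because the rank-decreasing path towards $\target$ accumulates at most one weight $W$ per step.
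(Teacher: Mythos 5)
Your proof is correct and follows essentially the same route as the paper's: a simultaneous induction on $j$ over all vertices, splitting on whether $j$ is below or at least the attractor rank, and then matching the $\min$/$\max$ of the recursive characterisation of $\bupval{j}$ against the existential/universal clauses of the attractor definition to propagate either $+\infty$ or the bound $(j-1)W + W = jW$. Your explicit uniform statement $S(j)$ including infinite-rank vertices is a minor presentational tidying of the same argument, not a different proof.
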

\begin{proof}
  We prove the property for all vertices $v$, by induction on
  $j$.\smallskip

  \noindent\textbf{Base case}: 
  $j=0$. We consider two cases. Either $v=\target$. In this case,
  $k=0$, and we must show that $\bupval{0}(v)\leq 0\times W=0$, which
  is true by definition of $\bupval{0}$. Or $v\neq \target$.  In this
  case, $k>0$, and we must show that $\bupval{0}(v)=+\infty$, which is
  true again by definition of $\bupval{0}$.  \smallskip

  \noindent\textbf{Inductive case}: 
  $j=\ell\geq 1$. Let us assume that the lemma holds for all $v$, for
  all values of $j$ up to $\ell-1$, and let us show that it holds for
  all $v$, and for $j=\ell$. Let us fix a vertex $v$, and its
  associated value $k$. We consider two cases.
  \begin{enumerate}
  \item First, assume $k>\ell$. In this case, we must show that
    $\bupval{\ell}(v)=+\infty$. We consider again two cases:
    \begin{enumerate}
    \item If $v\in \minvertices$, then none of its successors belong to
      $\Attr_{\ell-1}(\{\target\})$, otherwise, $v$ would be in
      $\Attr_{\ell}(\{\target\})$, by definition of the attractor,
      and we would have $k\leq \ell$. Hence, by induction hypothesis,
      $\bupval{\ell-1}(v')=+\infty$ for all $v'$ such that $(v,v')\in
      E$. Thus:
      \begin{align*}
      \bupval{\ell}(v) &= \min_{(v,v')\in E} \left(\edgeweights(v,v')
        +\bupval{\ell-1}(v')\right)
      &\textrm{(Lemma~\ref{lemma:min-max-charact-of-bupval})}\\
      &=+\infty
    \end{align*}
  \item If $v\in \maxvertices$, then at least one successor of $v$ does not
    belong to $\Attr_{\ell-1}(\{\target\})$, otherwise, $v$ would
    be in $\Attr_{\ell}(\{\target\})$, by definition of the
    attractor, and we would have $k\leq \ell$. Hence, by induction
    hypothesis, there exists $v'$ such that $(v,v')\in E$ and
    $\bupval{\ell-1}(v')=+\infty$. Thus:
    \begin{align*}
      \bupval{\ell}(v) &= \max_{(v,v')\in E} \left(\edgeweights(v,v')+
        \bupval{\ell-1}(v')\right)
      &\textrm{(Lemma~\ref{lemma:min-max-charact-of-bupval})}\\
      &=+\infty
    \end{align*}
    \end{enumerate}
  \item Second, assume $k\leq \ell$. In this case, we must show that
    $\bupval{\ell}(v)\leq \ell  W$. As in the previous item, we
    consider two cases:
    \begin{enumerate}
    \item In the case where $v\in \minvertices$, we let $\vbar$ be a vertex
      such that $\vbar\in \Attr_{k-1}(\{\target\})$ and
      $(v,\vbar)\in E$. Such a vertex exists by definition of the
      attractor. By induction hypothesis, $\bupval{\ell-1}(\vbar)\leq
      \ell  W$. Then:
      \begin{align*}
        \bupval{\ell}(v) &= \min_{(v,v')\in E} \left(\edgeweights(v,v')
          +\bupval{\ell-1}(v')\right)
        &\textrm{(Lemma~\ref{lemma:min-max-charact-of-bupval})}\\
        &\leq \edgeweights(v,\vbar)+\bupval{\ell-1}(\vbar)
        &((v,\vbar)\in E)\\ 
        &\leq\edgeweights(v,\vbar)+(\ell-1)  W
        &\textrm{(Ind. Hyp.)}\\ 
        &\leq W+(\ell-1)  W\\
        &=\ell  W
      \end{align*}
    \item In the case where $v\in \maxvertices$, we know that all successors
      $v'$ of $v$ belong to $\Attr_{k-1}(\{\target\})$ by
      definition of the attractor. By induction hypothesis, for all
      successors $v'$ of $v$: $\bupval{\ell-1}(v')\leq \ell 
      W$.Hence:
      \begin{align*}
        \bupval{\ell}(v) &= \max_{(v,v')\in E}
        \left(\edgeweights(v,v')+ \bupval{\ell-1}(v')\right)
        &\textrm{(Lemma~\ref{lemma:min-max-charact-of-bupval})}\\
        &\leq \max_{(v,v')\in E} \left(W+(\ell-1)  W\right)
        &\textrm{(Ind. Hyp.)}\\ 
        &=\ell  W\,.
      \end{align*}
    \end{enumerate}
  \end{enumerate}
\end{proof}

In particular, this allows us to conclude that, after $|\vertices|$
steps, all values are bounded by $|\vertices|  W$:
\begin{corollary}\label{cor:after-n-steps-no-infty}
   For all $v\in \vertices$: $\bupval{|\vertices|}(v)\leq |\vertices|  W\,$.
\end{corollary}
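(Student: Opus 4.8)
The plan is to obtain this bound as an immediate instance of Lemma~\ref{lem:link-attractor-values}, taking $j=|\vertices|$. The only preliminary fact I would invoke is that, under the standing assumption of this section that no vertex carries value $+\infty$ (the $+\infty$-vertices having been stripped away by the attractor precomputation of the first item of Theorem~\ref{thm:optimal-strategy}), every vertex belongs to the attractor $\Attr(\{\target\})$ of the target. Without this assumption the statement would fail, since a vertex outside the attractor satisfies $\bupval{j}(v)=+\infty$ for every $j$.

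First I would recall the classical convergence bound for attractors: the non-decreasing sequence $\Attr_0(\{\target\})\subseteq \Attr_1(\{\target\})\subseteq\cdots$ starts at the singleton $\{\target\}$ and, at each step before stabilisation, gains at least one new vertex; hence it reaches its limit $\Attr(\{\target\})$ after at most $|\vertices|$ iterations. Consequently, for every $v\in\vertices$ there is a well-defined index $k$ with $0\leq k\leq |\vertices|$ such that $v\in\Attr_k(\{\target\})\setminus\Attr_{k-1}(\{\target\})$, using the convention $\Attr_{-1}(\{\target\})=\emptyset$ already adopted in Lemma~\ref{lem:link-attractor-values}.

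Then I would apply Lemma~\ref{lem:link-attractor-values} to this $v$ and its index $k$ at the value $j=|\vertices|$. Since $k\leq |\vertices|=j$, the inequality $j\geq k$ holds, so item (ii) of the lemma fires and yields $\bupval{|\vertices|}(v)\leq |\vertices|\,W$. As $v$ was arbitrary, the corollary follows.

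I do not anticipate any real obstacle: all of the substance is packed into Lemma~\ref{lem:link-attractor-values}, and this corollary is pure bookkeeping. The single point deserving a glance is keeping the index $k$ inside the range $[0,|\vertices|]$ for which the lemma was proved; this is exactly what the $|\vertices|$-step convergence of the attractor guarantees, and the choice $j=|\vertices|\geq k$ then lands precisely in case (ii) rather than case (i).
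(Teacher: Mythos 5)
Your proof is correct and matches the paper's intended argument: the corollary is stated as an immediate consequence of Lemma~\ref{lem:link-attractor-values} with $j=|\vertices|$, exactly as you do, using the standing assumption of finite values to place every vertex in some $\Attr_k(\{\target\})$ with $k\leq|\vertices|$ so that case (ii) applies. Nothing is missing.
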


The next step is to show that the sequence stabilises after a bounded
number of steps:
\begin{lemma}\label{lem:value-stab}
  The sequence $\bupval{0},\ldots,\bupval{i},\ldots$ stabilises after
  at most $(2|\vertices|-1)  W   |\vertices|+|\vertices|$
  steps.
\end{lemma}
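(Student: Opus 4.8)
The plan is to combine the two facts already established for the sequence $(\bupval{i})_{i\geq 0}$: it is non-increasing and converges to $\Value$, the greatest fixed point of $\operator$, and by Corollary~\ref{cor:after-n-steps-no-infty} we have $\bupval{|\vertices|}(v)\leq |\vertices| W$ for every $v$. Since a non-increasing sequence sits above its limit, $\bupval{i}\vgeq \Value$ for all $i$, and because all values are assumed finite, $\Value(v)\in\Z$ for every $v$. Thus from step $|\vertices|$ onwards every coordinate $\bupval{i}(v)$ is a \emph{finite} integer, bounded above by $|\vertices| W$. The whole argument then reduces to producing a matching uniform lower bound and to counting how many times such an integer vector can strictly decrease.

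First I would prove the uniform lower bound $\Value(v)\geq -(|\vertices|-1) W$ for all $v$; this is the crux. As $\Value(v)\neq -\infty$, Proposition~\ref{prop:from-wg-to-mp-and-vice-versa} gives that the mean-payoff value of $v$ on the same graph is non-negative, so (by memoryless determinacy of mean-payoff games) $\MaxPl$ has a memoryless strategy $\maxstrategy^*$ securing a non-negative mean-payoff from \emph{every} vertex. Consequently every cycle of the one-player graph $\game_{\maxstrategy^*}$ obtained by fixing $\maxstrategy^*$ has non-negative weight, since otherwise $\MinPl$ could loop in it forever and force a negative mean-payoff. Now fix any $\minstrategy$ and consider $\outcomes(v,\maxstrategy^*,\minstrategy)$: if it never reaches $\target$ its payoff is $+\infty$; otherwise take its prefix up to the first visit of $\target$, and decompose this walk into a simple path from $v$ to $\target$ together with a collection of cycles. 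Each excised cycle lives in $\game_{\maxstrategy^*}$, hence has non-negative weight, so the walk's weight is at least that of the simple path, which uses at most $|\vertices|-1$ edges and is therefore $\geq -(|\vertices|-1) W$. Hence $\MCR(\outcomes(v,\maxstrategy^*,\minstrategy))\geq -(|\vertices|-1) W$ in all cases, giving $\lowervalue(v)\geq -(|\vertices|-1)W$ and, by determinacy, $\Value(v)\geq -(|\vertices|-1)W$. With $\bupval{i}\vgeq \Value$ this yields $\bupval{i}(v)\geq -(|\vertices|-1)W$ for all $i$ and $v$.

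The remaining counting is routine. Let $i^*$ be the least index with $\bupval{i^*}=\bupval{i^*+1}$; as $\bupval{i^*+1}=\operator(\bupval{i^*})$, the sequence is then constant from $i^*$ on, so $i^*$ is exactly the stabilisation time, and we may assume $i^*\geq |\vertices|$ (otherwise the bound is immediate). For every $i$ with $|\vertices|\leq i<i^*$, minimality of $i^*$ forces $\bupval{i}\neq\bupval{i+1}$; since on these indices all coordinates are finite integers and the sequence is non-increasing, the sum $\sum_{v}\bupval{i}(v)$ strictly decreases by at least $1$ at each such step. Therefore
\[
  i^*-|\vertices| \;\leq\; \sum_{v\in\vertices}\bigl(\bupval{|\vertices|}(v)-\bupval{i^*}(v)\bigr)
  \;\leq\; \sum_{v\in\vertices}\bigl(|\vertices| W-(-(|\vertices|-1)W)\bigr)
  \;=\; |\vertices|\,(2|\vertices|-1)\,W ,
\]
using the step-$|\vertices|$ upper bound and the lower bound just proved. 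This gives $i^*\leq (2|\vertices|-1) W |\vertices|+|\vertices|$, as claimed.

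The main obstacle is the lower bound of the second paragraph: the bare iteration $(\bupval{i})$ genuinely diverges to $-\infty$ on vertices of value $-\infty$, so the finiteness hypothesis must be used essentially, and the cleanest route is the mean-payoff characterisation of Proposition~\ref{prop:from-wg-to-mp-and-vice-versa} together with the non-negative-cycle property of mean-payoff-optimal memoryless strategies. One could alternatively argue that an optimal play for $\MaxPl$ may be taken non-looping, so that its value is the weight of a simple path and is again at least $-(|\vertices|-1) W$; but since that non-looping property is part of the later optimal-strategy analysis, I would favour the mean-payoff argument here to avoid a circular dependency.
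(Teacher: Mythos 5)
Your proof is correct and follows essentially the same route as the paper's: a uniform lower bound $\Value \vgeq -(|\vertices|-1)W$ obtained through the mean-payoff connection of Proposition~\ref{prop:from-wg-to-mp-and-vice-versa}, the upper bound $|\vertices| W$ after $|\vertices|$ steps from Corollary~\ref{cor:after-n-steps-no-infty}, and the same integer-potential counting of strict decreases. The only real difference is in the lower-bound subargument, and it is minor: the paper argues contrapositively (a $\MinPl$ strategy achieving less than $-(|\vertices|-1)W$ forces a negative cycle in the induced play, hence a negative mean-payoff and value $-\infty$), whereas you argue directly from $\MaxPl$'s mean-payoff-optimal memoryless strategy and a cycle decomposition of the reaching prefix; both rest on the same two facts.
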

\begin{proof}
  We first show that if $\MinPl$ can secure, from some vertex $v$, a
  payoff less than $-(|\vertices|-1) W$, i.e.,
  $\Value(v)<-(|\vertices|-1) W$, then it can secure an arbitrarily
  small payoff from that vertex, i.e., $\Value(v)=-\infty$, which
  contradicts our hypothesis that the value is finite. Hence, let us
  suppose that there exists a strategy $\minstrategy$ for $\MinPl$
  such that $\pricestrategy(v,\minstrategy)<-(|\vertices|-1) W$. Let
  $\game'$ be the mean-payoff game studied in
  Proposition~\ref{prop:from-wg-to-mp-and-vice-versa}. We will show
  that $\Value_{\game'}(v)<0$, which permits to conclude that
  $\Value_\game(v)=-\infty$. Let $\maxstrategy$ be a memoryless
  strategy of $\MaxPl$. By hypothesis, we know that
  $\MCR(\outcomes(v,\maxstrategy,\minstrategy)) < -(|\vertices|-1) W$.
  This ensures the existence of a cycle with negative cost in the play
  $\outcomes(v,\maxstrategy,\minstrategy)$: otherwise, we could
  iteratively remove every possible nonnegative cycle of the finite play
  before reaching $\target$ (hence reducing the cost of the play) and
  obtain a play without cycles before reaching $\target$ with a cost
  less than $-(|\vertices|-1) W$, which is impossible (since it should
  be of length at most $|\vertices|-1$ to cross at most one occurrence
  of each vertex). Consider the first negative cycle in the
  play. After the first occurrence of the cycle, we let $\MinPl$
  choose its actions like in the cycle. By this way, we can construct
  another strategy $\minstrategy'$ for $\MinPl$, verifying that for
  every memoryless strategy $\maxstrategy$ of $\MaxPl$, we have
  $\MP(\outcomes(v,\maxstrategy,\minstrategy'))$ being the weight of
  the negative cycle in which the play finishes. Since for mean-payoff
  games, memoryless strategies are sufficient for $\MaxPl$, we deduce
  that $\Value_{\game'}(v)<0$.

  This reasoning permits to prove that at every step $i$,
  $\bupval{i}(v)\geq \Value(v)\geq -(|\vertices|-1)  W+1$ for all
  vertices~$v$. Recall from Corollary~\ref{cor:after-n-steps-no-infty}
  that, after $|\vertices|$ steps in the sequence, all vertices are
  assigned a value smaller that $|\vertices|  W$. Moreover, we
  know that the sequence is non-increasing. In summary, for all
  $k\geq 0$ and for all vertices $v$:
  \[-(|\vertices|-1)  W+1 \leq \bupval{|\vertices|+k}(v)\leq
  |\vertices|  W\]%
  Hence, in the worst case a strictly decreasing sequence will need
  $(2|\vertices|-1)  W   |\vertices|$ steps to reach the
  lowest possible value where all vertices are assigned
  $-(|\vertices|-1)  W+1$ from the highest possible value where
  all vertices are assigned $|\vertices|  W$. Thus, taking into
  account the $|\vertices|$ steps to reach a finite value on all
  vertices, the sequence stabilises in at most $(2|\vertices|-1) 
  W   |\vertices|+|\vertices|$ steps.
\end{proof}

Let us thus denote by $\bupval{}$ the value obtained when the sequence
$(\bupval{i})_{i\geq 0}$ stabilizes. We know from a previous
discussion that $\bupval{}$ is the greatest fixed point of operator
$\operator$. We are now ready to prove that this value is the actual
value of the game:
\begin{lemma}\label{lem:convergence}
  For all min-cost reachability game: $\bupval{}=\Value\,.$
\end{lemma}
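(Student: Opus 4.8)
The plan is to prove the two inequalities $\bupval{}\vgeq\Value$ and $\bupval{}\vleq\Value$ separately; the first is already in hand. Indeed we established $\bupval{i}\vgeq\Value$ for every $i$, so passing to the stationary limit (Lemma~\ref{lem:value-stab}) gives $\bupval{}\vgeq\Value$. All the remaining work is in the reverse inequality $\bupval{}\vleq\Value$. Throughout I use that, under the standing hypothesis of this section that every value is finite, all entries of $\bupval{}$ are finite integers: $\bupval{}(v)\leq|\vertices|W$ by Corollary~\ref{cor:after-n-steps-no-infty} together with monotonicity of the sequence, and $\bupval{}(v)\geq\Value(v)>-\infty$. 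Hence every $\min$ and $\max$ below ranges over a finite set of finite values.

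Since the game is determined, to get $\bupval{}\vleq\Value$ it suffices to prove $\lowervalue(v)\geq\bupval{}(v)$ for every $v$, i.e.\ to exhibit one strategy of $\MaxPl$ securing at least $\bupval{}(v)$ from $v$ against \emph{every} strategy of $\MinPl$. I would use that $\bupval{}$ is the greatest fixed point of $\operator$ (recalled just above), so it satisfies $\bupval{}(\target)=0$ and the fixed-point equations $\bupval{}(v)=\max_{v'\in\edges(v)}(\edgeweights(v,v')+\bupval{}(v'))$ on $\maxvertices\setminus\{\target\}$, and the analogous $\min$ on $\minvertices\setminus\{\target\}$. Define the memoryless $\maxstrategy^*$ by $\maxstrategy^*(v)=\argmax_{v'\in\edges(v)}(\edgeweights(v,v')+\bupval{}(v'))$, choosing any maximiser.

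The key step is a telescoping argument. Fix $v$ and any $\minstrategy$, and let $\pi=v_0v_1\cdots$ be $\outcomes(v,\maxstrategy^*,\minstrategy)$ with $v_0=v$. A two-case analysis on whether $v_i$ belongs to $\maxvertices$ or to $\minvertices$ shows the one-step inequality $\edgeweights(v_i,v_{i+1})+\bupval{}(v_{i+1})\geq\bupval{}(v_i)$ for every $i$ with $v_i\neq\target$: it is an \emph{equality} at $\MaxPl$ vertices, because $\maxstrategy^*$ realises the $\argmax$, and a $\geq$ at $\MinPl$ vertices, because whatever successor $\minstrategy$ picks is at least the minimising one. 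If $\pi$ reaches $\target$, let $m$ be the first such position; summing the one-step inequalities for $i=0,\dots,m-1$ and using $\bupval{}(v_m)=\bupval{}(\target)=0$ yields $\TP(\pi[m])\geq\bupval{}(v)$, so $\MCR(\pi)\geq\bupval{}(v)$. If $\pi$ never reaches $\target$, then $\MCR(\pi)=+\infty\geq\bupval{}(v)$ trivially. In both cases $\MCR(\outcomes(v,\maxstrategy^*,\minstrategy))\geq\bupval{}(v)$, so $\Value_\game(v,\maxstrategy^*)\geq\bupval{}(v)$ and thus $\lowervalue(v)\geq\bupval{}(v)$. By determinacy $\Value(v)=\lowervalue(v)\geq\bupval{}(v)$, the missing inequality; combined with $\bupval{}\vgeq\Value$ it gives $\bupval{}=\Value$. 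As a bonus $\maxstrategy^*$ is a memoryless optimal strategy for $\MaxPl$, i.e.\ the memoryless half of the third item of Theorem~\ref{thm:optimal-strategy}.

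The point I expect to require the most care is the asymmetry between the two players: this memoryless $\maxstrategy^*$ secures $\bupval{}(v)$ even though $\MinPl$ has \emph{no} memoryless optimal strategy in general (Figure~\ref{fig:Weighted-game}$(a)$). This is exactly why I prove the reverse inequality on the $\MaxPl$ side: there, a play that avoids $\target$ only improves the bound ($+\infty\geq\bupval{}(v)$), so no reachability guarantee is needed and the telescoping alone suffices. The genuinely harder, memory-requiring object---the finite-memory counting strategy of $\MinPl$, read off from the step-indexed vectors $(\bupval{i})_{i\geq 0}$, that actually \emph{achieves} $\bupval{}(v)$---is not needed for this convergence statement and is treated separately when proving the existence of optimal strategies.
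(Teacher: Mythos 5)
Your proof is correct, but it takes a genuinely different route from the paper's. The paper proves $\bupval{}\vleq\Value$ from the $\MinPl$ side: it picks an optimal strategy $\minstrategy$ for $\MinPl$ (which exists because the finite values are integers, so the infimum is attained), unfolds all plays against $\minstrategy$ into a tree, invokes K\"onig's lemma to show this tree is finite of some depth $m$ (an infinite branch would avoid $\target$ and contradict finiteness of the value), and then compares backward-induction values of this tree with those of the full depth-$m$ unfolding to get $\Value(v)\geq\bupval{m}(v)\geq\bupval{}(v)$. You instead argue from the $\MaxPl$ side: you read a memoryless strategy $\maxstrategy^*$ off the fixed point $\bupval{}$ and telescope the one-step inequalities along any conforming play up to the first visit of $\target$, the non-reaching case being free since it yields $+\infty$. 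Your argument is sound and non-circular --- it uses only that $\bupval{}$ is \emph{a} fixed point of $\operator$ with finite entries, which follows from the stationarity of the Kleene iteration established before the lemma --- and it avoids K\"onig's lemma and the tree machinery entirely. What it buys is exactly what you point out: the memoryless optimal strategy for $\MaxPl$ comes for free. Amusingly, the paper proves that statement separately (Proposition~\ref{prop:optimal-player2}) with essentially your telescoping argument, but justified there by Corollary~\ref{lem:min-max-charact-of-value} ($\Value$ is the greatest fixed point of $\operator$), which is itself a corollary of the present lemma; your ordering short-circuits that detour. One cosmetic remark: you do not actually need to invoke determinacy, since your chain $\lowervalue\vgeq\bupval{}\vgeq\uppervalue\vgeq\lowervalue$ closes on its own and re-derives determinacy for the finite-value case as a byproduct.
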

\begin{proof}
  We already know that $\bupval{}\vgeq \Value$. Let us show that
  $\bupval{}\vleq \Value$. Let $v\in\vertices$ be a vertex. Since
  $\Value(v)$ is finite integer, there exists a strategy
  $\minstrategy$ for $\MinPl$ that realises this value, i.e.,
  \[\Value(v)=\sup_{\maxstrategy}
  \MCR(\outcomes(v,\maxstrategy,\minstrategy))\,.\] Notice that
  this holds because the values are integers, enducing that the
  infimum in the definition of $\uppervalue(v)=\Value(v)$ is indeed
  reached.

  Let us build a tree $A_{\minstrategy}$ unfolding all possible plays
  from $v$ against $\minstrategy$. $A_{\minstrategy}$ has a root
  labeled by $v$. If a tree node is labeled by a vertex $v$ of
  $\MinPl$, this tree node has a unique child labeled by
  $\minstrategy(v)$. If a tree node is labeled by a vertex $v$ of
  $\MaxPl$, this tree node has one child per successor $v'$ of $v$ in
  the graph, labeled by $v'$. We proceed this way until we encounter a
  node labeled by a vertex from $\target$ in which case this node is a
  leaf. $A_{\minstrategy}$ is necessarily finite. Otherwise, by
  K\"onig's Lemma, it has one infinite branch that never reaches
  $\target$. From that infinite branch, one can extract a strategy
  $\maxstrategy$ for $\MaxPl$ such that
  $\MCR(\outcomes(v,\maxstrategy,\minstrategy))=+\infty$, hence
  $\Value(v)=+\infty$, which contradicts the hypothesis. Assume the
  tree has depth $m$. Then, $A_{\minstrategy}$ is a subtree of the
  tree $A$ obtained by unfolding all possible plays up to length $m$
  (as in the proof of Lemma~\ref{lemma:min-max-charact-of-bupval}). In
  this case, it is easy to check that the value labeling the root of
  $A_{\minstrategy}$ after applying backward induction is larger than
  or equal to the value labeling the root of $A$ after applying
  backward induction. The latter is $\Value(v)$ while the former is
  $\bupval{m}(v)$, by Lemma~\ref{lemma:min-max-charact-of-bupval}, so
  that $\Value(v)\geq\bupval{m}(v)$. Since the sequence is
  non-increasing, we finally obtain $\Value(v)\geq \bupval{}(v)$. 
\end{proof}

As a corollary of this lemma, we obtain:
\begin{corollary}\label{lem:min-max-charact-of-value}
  $\Value$ is the greatest fixed point of $\operator$.
\end{corollary}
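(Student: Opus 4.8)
The plan is simply to chain together two facts that are already at our disposal: first, that the stabilised value $\bupval{}$ of the sequence $(\bupval{i})_{i\geq 0}$ is the greatest fixed point of $\operator$; and second, that Lemma~\ref{lem:convergence} identifies $\bupval{}$ with the game value $\Value$. The corollary is then immediate by substitution.

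First I would recall why $\bupval{}$ is the greatest fixed point of $\operator$. The space $\Zbar^\vertices$, ordered pointwise by $\vleq$, is a complete lattice, and $\operator$ is monotonic; by Knaster--Tarski the fixed points of $\operator$ form a complete lattice, so $\operator$ admits a greatest fixed point. Using the Scott-continuity of $\operator$, Kleene's fixed point theorem characterises this greatest fixed point as the limit of the non-increasing sequence $\left(\operator^i(\overline{x})\right)_{i\geq 0}$ obtained by iterating $\operator$ from the top element $\overline{x}$ (defined by $\overline{x}(v)=+\infty$ for all $v$). Since $x_0=\operator(\overline{x})$, we have $\bupval{i}=x_i=\operator^{i+1}(\overline{x})$, so the sequence we iterate is exactly this one up to a shift of index; being a tail of a non-increasing convergent sequence, it converges to the same greatest fixed point, which is precisely $\bupval{}$.

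I would then invoke Lemma~\ref{lem:convergence}, which gives $\bupval{}=\Value$. Substituting this equality into the preceding paragraph, $\Value$ coincides with the greatest fixed point of $\operator$, which is exactly the statement of the corollary.

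There is essentially no obstacle at this stage: the entire difficulty has already been absorbed into Lemma~\ref{lem:convergence}, whose hard direction is the inequality $\bupval{}\vleq\Value$ established through the K\"onig's Lemma finiteness argument on the tree $A_{\minstrategy}$. The corollary merely records that the fixed-point characterisation of $\bupval{}$ transfers to $\Value$ via that identification. The only point requiring a moment's care is the index shift between $(x_i)_{i\geq 0}$ and $\left(\operator^i(\overline{x})\right)_{i\geq 0}$, but this does not affect the limit since both are tails of the same non-increasing sequence.
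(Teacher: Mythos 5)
Your proof is correct and follows exactly the route the paper intends: the corollary is obtained by combining the earlier observation (Knaster--Tarski plus Kleene on the non-increasing sequence starting from $\overline{x}$, with $x_0=\operator(\overline{x})$) that the stabilised value $\bupval{}$ is the greatest fixed point of $\operator$ with the identification $\bupval{}=\Value$ from Lemma~\ref{lem:convergence}. Your remark on the index shift is the only point of care, and you handle it as the paper does.
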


This permits to obtain a value iteration algorithm, described in
Algorithm~\ref{algo:value-iteration-RT}, that computes optimal
values. Notice that we do not suppose that every vertex has a finite
value, which is justified in the proof of
Proposition~\ref{the:rtpo-ca-marche-youpi} below proving the
correctness of the algorithm, as well as its complexity. A crucial
argument is given in the following lemma, following from the fact that
$\Value$ is the greatest fixed point of $\operator$:
\begin{lemma}\label{lem:acceleration}
  If the Kleene sequence $(\operator^i(x_0))_{i\geq 0}$ is initiated
  with a vector of values $x_0$ that is greater or equal to the
  optimal value vector $\Value$, then the sequence converges at least
  as fast as before towards the optimal value vector.
\end{lemma}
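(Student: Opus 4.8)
The plan is to prove the statement by sandwiching the warm-started iterates between $\Value$ and the usual top-started Kleene sequence, using only the monotonicity of $\operator$ and the fact that $\Value$ is its greatest fixed point (Corollary~\ref{lem:min-max-charact-of-value}). Write $y_i=\operator^i(x_0)$ for the sequence under consideration, with $x_0\vgeq\Value$, and recall that the reference sequence ``before'' is $(\operator^i(\overline{x}))_{i\geq 0}$ starting from the maximal vector $\overline{x}$ (its tail is exactly the sequence $(\bupval{i})_{i\geq 0}$ computed by the standard value iteration), which is non-increasing and converges to $\Value$.

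First I would establish the lower invariant $y_i\vgeq\Value$ for every $i$, by induction on $i$. The base case is the hypothesis $x_0\vgeq\Value$. For the step, since $\Value$ is a fixed point of $\operator$ and $\operator$ is monotonic, $y_{i+1}=\operator(y_i)\vgeq\operator(\Value)=\Value$. Thus the warm-started sequence never drops below $\Value$.

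Second I would establish the upper bound $y_i\vleq\operator^i(\overline{x})$ for every $i$. This is immediate: $\overline{x}$ is the top element of the complete lattice $\Zbar^\vertices$, so $x_0\vleq\overline{x}$, and applying the monotonic operator $\operator$ repeatedly preserves the inequality. Combining the two bounds gives, for every $i$, the sandwich $\Value\vleq y_i\vleq\operator^i(\overline{x})$. Since the right-hand side stabilises on $\Value$ after finitely many steps (Lemma~\ref{lem:value-stab} and Lemma~\ref{lem:convergence}), as soon as a coordinate $v$ satisfies $\operator^i(\overline{x})(v)=\Value(v)$ the sandwich forces $y_i(v)=\Value(v)$ as well; hence $(y_i)$ reaches $\Value$ no later, coordinatewise, than the top-started sequence, which is the precise meaning of ``converging at least as fast as before.''

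I expect no genuine obstacle here: the whole argument rests on monotonicity together with the fixed-point characterisation of $\Value$. The one point deserving care is that the warm-started sequence $(y_i)$ need not itself be non-increasing---the hypothesis $x_0\vgeq\Value$ does not make $x_0$ a pre-fixed point of $\operator$, so a coordinate may temporarily increase---which is why I would avoid any monotone-convergence shortcut and rely instead on the two-sided sandwich between $\Value$ and $(\operator^i(\overline{x}))$.
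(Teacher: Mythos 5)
Your proof is correct, and it is precisely the argument the paper intends: the paper states this lemma without proof, remarking only that it ``follows from the fact that $\Value$ is the greatest fixed point of $\operator$,'' and your two-sided sandwich $\Value\vleq\operator^i(x_0)\vleq\operator^i(\overline{x})$, obtained from monotonicity together with $\operator(\Value)=\Value$ and $x_0\vleq\overline{x}$, is the natural elaboration of exactly that remark. Your closing caveat that the warm-started sequence need not be non-increasing, so that the sandwich (rather than a monotone-convergence shortcut) is what carries the argument, is well taken.
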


\begin{proof}[Proof of Proposition~\ref{the:rtpo-ca-marche-youpi}]
  Let us first suppose that values of every vertices are finite. Then,
  we can easily prove by induction that at the beginning of the $j$th
  step of the loop, $\mathsf X$ is equal to the vector $\bupval j$,
  and that the condition of line~\ref{line-infty} has never been
  fulfilled. Hence, by Lemma~\ref{lem:value-stab}, after at most
  $(2|\vertices|-1)  W  |\vertices|+|\vertices|$ iterations,
  all values are found correctly in that case.

  Suppose now that there exist vertices with value $+\infty$. Those
  vertices will remain at their initial value $+\infty$ during the
  whole computation, and hence do not interfere with the rest of the
  computation. 

  Finally, consider that the game contains vertices with value
  $-\infty$. We know that optimal values of vertices of values
  different from $-\infty$ are at least $-(|\vertices|-1)  W +1$ so that,
  if the value of a vertex reaches an integer below $-(|\vertices|-1)
    W$, we are sure that its value is indeed $-\infty$, which
  proves correct the line~\ref{line-infty} of the algorithm. This
  update may cost at most one step per vertices, which in total adds
  at most $|\vertices|$ iterations. Moreover, by
  Lemma~\ref{lem:acceleration}, dropping the value to $-\infty$ does
  not harm the correction for the other vertices (it may only speed
  the convergence of their values). 
\end{proof}

\section{Computing optimal strategies in min-cost reachability
  games\label{sec:comp-optim-strat}\label{sec:optimal-strategies}}

This section is devoted to the formal construction of optimal
strategies in MCR games, as explained briefly at the end of
Section~\ref{sec:reachability-objectives}.

\subsection{\texorpdfstring{Strategies of $\MinPl$}{Strategies of Min}}
 
We have already seen an example in Fig.~\ref{fig:Weighted-game} of a
game where $\MinPl$ may need memory in an optimal strategy, i.e., where
$\MinPl$ has an optimal strategy, but no memoryless optimal
strategies. Reciprocally, as a consequence of the previous work, we
first show that, for vertices with finite value, $\MinPl$ has always a
finite-memory optimal strategy.

\begin{proposition}\label{prop-rtp-finite-memory}
  In all min-cost reachability game with only vertices of finite
  values, $\MinPl$ has a finite-memory optimal strategy.
\end{proposition}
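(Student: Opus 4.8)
The plan is to turn the non-increasing, finite, and eventually stationary sequence $(\bupval{i})_{i\ge 0}$ studied above into an explicit optimal strategy for $\MinPl$ whose only memory is a bounded step-counter. Concretely, let $k$ be the first index at which the sequence stabilises, i.e.\ $\bupval{k}=\bupval{k+1}$; by Lemma~\ref{lem:value-stab} such a $k$ exists and satisfies $k\le (2|\vertices|-1)W|\vertices|+|\vertices|$, and by Lemma~\ref{lem:convergence} we have $\bupval{k}=\Value$. I would take as memory the finite set $M=\{0,1,\dots,k\}$, with initial content $0$ and an update function that increments the counter while saturating it at $k$, so that after a prefix $\pi$ the memory stores $\min(|\pi|,k)$. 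The decision function, at a vertex $v\in\minvertices$ reached after $j<k$ steps, picks $\minstrategy^*(\pi)\in\argmin_{v'\in\edges(v)}\big(\edgeweights(v,v')+\bupval{k-j-1}(v')\big)$, and picks an arbitrary successor once $j\ge k$ (this last branch will turn out never to be exercised along a conforming play). Optimality will then mean showing $\Val_\game(v,\minstrategy^*)=\uppervalue(v)=\Value(v)$; since $\Val_\game(v,\minstrategy^*)\ge\uppervalue(v)$ holds by definition of the upper value, only the inequality $\Val_\game(v,\minstrategy^*)\le\Value(v)$ must be established.

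To prove that inequality I would track, along any play $\pi=v_0v_1\cdots$ conforming to $\minstrategy^*$, started in $v=v_0$ and not yet having visited $\target$, the potential $\Phi_j=\TP(\pi[j])+\bupval{k-j}(v_j)$ for $0\le j\le k$. By construction $\Phi_0=\bupval{k}(v)=\Value(v)$, which is finite precisely because of the hypothesis of the proposition. Using the min/max characterisation of Lemma~\ref{lemma:min-max-charact-of-bupval} (applicable since $k-j\ge 1$ whenever $j<k$), a single step keeps the potential non-increasing: at a $\MinPl$ vertex the successor chosen by $\minstrategy^*$ realises the minimum defining $\bupval{k-j}(v_j)$, so $\Phi_{j+1}=\Phi_j$; at a $\MaxPl$ vertex any successor the adversary takes is dominated by the maximum defining $\bupval{k-j}(v_j)$, so $\Phi_{j+1}\le\Phi_j$. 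Hence $\Phi_j\le\Phi_0=\Value(v)<+\infty$ for every $j\le k$ before the target is seen.

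The key step, and the one I expect to be the real point of the argument, is that finiteness of the potential \emph{forces} the target to be reached within the horizon $k$; this is exactly where the finite-value hypothesis is indispensable, and it is what the plain memoryless value-optimal strategy fails to guarantee (cf.\ \figurename~\ref{fig:Weighted-game}$(a)$). Indeed, suppose $\target$ is not visited up to step $k$; then $\Phi_k=\TP(\pi[k])+\bupval{0}(v_k)\le\Value(v)<+\infty$, yet $\bupval{0}(v_k)=+\infty$ since $v_k\neq\target$, a contradiction. Thus every conforming play reaches $\target$, say first at step $m\le k$, and then $\MCR(\pi)=\TP(\pi[m])=\Phi_m\le\Value(v)$ because $\bupval{k-m}(\target)=0$. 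As this bound holds against every strategy of $\MaxPl$, we get $\Val_\game(v,\minstrategy^*)\le\Value(v)$, so $\minstrategy^*$ is optimal. Its memory $M$ has size $k+1$, bounded by a pseudo-polynomial in $|\vertices|$ and $W$ by Lemma~\ref{lem:value-stab}, so $\minstrategy^*$ is finite-memory, proving the proposition; incidentally, the argument also shows the $j\ge k$ branch of the decision function is never reached, so the arbitrary tail is harmless.
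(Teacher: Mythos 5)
Your proof is correct and constructs exactly the same strategy as the paper (play an $\argmin$ with respect to $\bupval{k-j-1}$ after $j<k$ steps, where $k$ is the stabilisation index), relying on the same key ingredient, the min/max characterisation of Lemma~\ref{lemma:min-max-charact-of-bupval}. Where you differ is in how optimality is established. The paper introduces the whole family $(\minstrategy^\ell)_{\ell\leq k}$ and proves by induction on the horizon $\ell$ that $\boundedWeight{\ell}(\outcomes(v,\maxstrategy,\minstrategy^\ell))\leq\bupval{\ell}(v)$ for every $\maxstrategy$, then passes to the limit $\ell\to\infty$, which requires exchanging a limit with a supremum over $\maxstrategy$ (justified by the stationarity of $(\bupval{i})_{i\geq 0}$ and the pointwise convergence of $\boundedWeight{\ell}$ to $\MCR$). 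You instead fix the single strategy $\minstrategy^*=\minstrategy^k$ and run a forward potential argument $\Phi_j=\TP(\pi[j])+\bupval{k-j}(v_j)$ along an arbitrary conforming play; the telescoping step $\Phi_{j+1}\leq\Phi_j$ is precisely the unrolled induction step of the paper, but your packaging avoids the limit/supremum exchange entirely and makes explicit, via the contradiction $\Phi_k=+\infty>\Value(v)$, that every conforming play reaches $\target$ within $k$ steps --- a useful fact that the paper's version of this proof leaves implicit. The only point worth stating carefully is the arithmetic over $\Zbar$ when an intermediate $\bupval{k-j}(v_j)$ could a priori be $+\infty$: your chain of inequalities handles this correctly, since each step is non-increasing in $\Zbar$ and $\Phi_0=\Value(v)$ is finite by the hypothesis of the proposition. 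Both arguments yield the same pseudo-polynomial memory bound $k\leq(2|\vertices|-1)W|\vertices|+|\vertices|$ from Lemma~\ref{lem:value-stab}, so your proof is a valid, and arguably cleaner, alternative.
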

\begin{proof}
  We explain how to reconstruct from the fixpoint computation an
  optimal strategy $\minstrategy^*$ for $\MinPl$. Let $k$ be the
  integer such that $\bupval{k+1}=\bupval{k}$. For every step
  $\ell\leq k$ of the fixpoint computation, we define a strategy
  $\minstrategy^\ell$. For every finite play $\pi$ ending in a vertex
  $v$ of $\MinPl$, if the length of $\pi$ is $i<\ell$, we let
  \[\minstrategy^\ell(\pi) = \argmin_{v'\in\edges(v)}
  \big(\edgeweights(v,v')+\bupval{\ell-i-1}(v')\big)\,, \]
  and otherwise, we let 
  \[\minstrategy^\ell(\pi) = \argmin_{v'\in\edges(v)}
  \big(\edgeweights(v,v')+\bupval{0}(v')\big)\,. \] Notice that the
  $\argmin$ operator may select any possible vertex as long as it
  selects one with minimum value. We also let
  $\minstrategy^{\ell}=\minstrategy^k$ for every $\ell>k$, as well as
  $\minstrategy^*=\minstrategy^k$.  Notice that $\minstrategy^*$ is a
  finite-memory strategy, since it only requires to know the last
  vertex and the length of the prefix up to $k$.

  We now prove that
  $\Val(v,\minstrategy^*)=\sup_{\maxstrategy}
  \MCR(\outcomes(v,\maxstrategy,\minstrategy^*)) \leq \bupval{}(v)$
  for all vertices~$v$, which proves that $\minstrategy^*$ is an
  optimal strategy since $\bupval{}(v)=\Value(v)$ by
  Lemma~\ref{lem:convergence}. To do so, we first show by induction on
  $\ell$ that
  \begin{equation}
    \boundedWeight{\ell}(\outcomes(v,\maxstrategy,\minstrategy^\ell)) \leq
    \bupval{\ell}(v)\label{eq:inequality-ind}
  \end{equation}
  holds for every strategy $\maxstrategy$ of $\MaxPl$ and
  $v\in\vertices$. This permits to conclude since, from
  $\minstrategy^*=\minstrategy^\ell$ for every $\ell\geq k$, we can
  deduce:
  \begin{align*}
    \lim_{\ell\to\infty} \sup_{\maxstrategy}
    \boundedWeight\ell(\outcomes(v,\maxstrategy,\minstrategy^\ell))
    &=\lim_{\ell\to\infty} \sup_{\maxstrategy}
    \boundedWeight\ell(\outcomes(v,\maxstrategy,\minstrategy^*))\\
    &=\sup_{\maxstrategy}
    \MCR(\outcomes(v,\maxstrategy,\minstrategy^*))
  \end{align*}
  and $\bupval{i}$ is a stationary sequence converging towards
  $\bupval{}$. The proof by induction goes as follows. In case
  $\ell=0$, either $v=\target$ and both terms of
  \eqref{eq:inequality-ind} are equal to $0$, or $v\neq\target$ and
  both terms of \eqref{eq:inequality-ind} are equal to
  $+\infty$. Supposing now that the property holds for an index
  $\ell$, let us prove it for $\ell+1$. For that, we consider a
  strategy $\maxstrategy$ of $\MaxPl$. In case $v=\target$, we have
  \[\boundedWeight{\ell}(\outcomes(v,\maxstrategy,\minstrategy^\ell))=0=
  \bupval{\ell}(v)\,.\]
  We now consider the case $v\neq\target$. Let $v'$ be the second
  vertex in $\outcomes(v,\maxstrategy,\minstrategy^{\ell+1})$. From
  the definition of $\minstrategy^{\ell+1}$,
  $\outcomes(v,\maxstrategy,\minstrategy^{\ell+1})[\ell+1]$ is the
  concatenation of $v$ and
  $\outcomes(v',\maxstrategy,\minstrategy^\ell)[\ell]$. Hence,
  \[\boundedWeight{\ell+1}(\outcomes(v,\maxstrategy,\minstrategy^{\ell+1}))
  = \edgeweights(v,v') +
  \boundedWeight\ell(\outcomes(v',\maxstrategy,\minstrategy^\ell))\,.\]
  By induction hypothesis, we obtain that
  \begin{equation}
    \boundedWeight{\ell+1}(\outcomes(v,\maxstrategy,\minstrategy^{\ell+1}))
    \leq \edgeweights(v,v') + \bupval{\ell}(v')\,.\label{eq:IH}
  \end{equation}
  Now, consider the two following cases.
  \begin{itemize}
  \item If $v\in\minvertices\setminus \{\target\}$, we have
    $v'=\minstrategy^{\ell+1}(v)$, so that, in case $\ell+1\leq k$:
    \[\edgeweights(v,v') +\bupval{\ell}(v')\leq \min_{v''\in
      \vertices\mid (v,v'')\in\edges}
    \big(\edgeweights(v,v'')+\bupval{\ell}(v'')\big)\,.\]
    Using \eqref{eq:IH} and
    Lemma~\ref{lemma:min-max-charact-of-bupval}, we obtain 
    \[\boundedWeight{\ell+1}(\outcomes(v,\maxstrategy,\minstrategy^{\ell+1}))
    \leq \bupval{\ell+1}(v)\,.\] In case $\ell+1>k$, we indeed have
    $\bupval{\ell}=\bupval{\ell+1}= \bupval{k}$, so that we conclude
    similarly.
  \item If $v\in\maxvertices\setminus \{\target\}$, we have
    $v'=\maxstrategy(v)$ and
    \[\edgeweights(v,v') +\bupval{\ell}(v')\leq \max_{v''\in
      \vertices\mid (v,v'')\in\edges}
    \big(\edgeweights(v,v'')+\bupval{\ell}(v'')\big)\,.\] Once again
    using \eqref{eq:IH} and
    Lemma~\ref{lemma:min-max-charact-of-bupval}, we obtain
    \[\boundedWeight{\ell+1}(\outcomes(v,\maxstrategy,\minstrategy^{\ell+1}))
    \leq \bupval{\ell+1}(v)\,.\]
  \end{itemize}
  This concludes the induction proof.
\end{proof}

Notice that the proof of Proposition~\ref{prop-rtp-finite-memory}
together with the statement of Lemma~\ref{lem:value-stab} imply that a
memory of size pseudo-polynomial for the strategy of $\MinPl$ is
sufficient. Before stating the result for $\MaxPl$, we informally
refine this result in order to find a strategy of $\MinPl$ having more
\emph{structural properties}. It will be composed of two memoryless
strategies $\minstrategy^1$ and $\minstrategy^2$: the game will start
with $\MinPl$ following $\minstrategy^1$, and at some point,
determined by the weight of the current finite play, $\MinPl$ will switch
to strategy $\minstrategy^2$ which is an attractor strategy, i.e., a
strategy that reaches the target in less than $|\vertices|$ steps,
regardless of the weights along this path.  Intuitively the strategy
$\minstrategy^1$ ensures either to reach the target with optimal
value, or to go in cycles of negative weights.  The only chance for
$\MinPl$ of having a greater value than the optimal is to go
infinitely through these cycles without reaching the target.  But if
it does so, the total-payoff will decrease and at some point the value
will be so low, that the cost of calling the attractor strategy will
leave the total-payoff smaller than the optimal value. Let us
formalise this construction.

For the sake of exposure, we present the construction when all values
are finite, but such construction can be applied with few changes when
some vertices have value $-\infty$ or $+\infty$.  We start by defining
a memoryless strategy $\minstrategy^1$ that has some good properties
(stated in Proposition~\ref{prop:almostPerfect}).  Let $X^i$ denote
the value of variable $\mathsf{X}$ after $i$ iteration of the loop of
Algorithm~\ref{algo:value-iteration-RT}, and let $X^0(v)=+\infty$ for
all $v\in \vertices$.  We have seen that the sequence
$X^0\vgeq X^1 \vgeq X^2\vgeq\cdots$ is stationary at some point, equal
to $\Value$.  For all vertices
$v\in \maxvertices\setminus\{\target\}$, let $i_v>0$ be the first
index such that $X^i(v)=\Value(v)$.  Fix a vertex $v'\neq \target$
such that $X^i(v) = \edgeweights(v,v') + X^{i-1}(v')$ (that exists by
definition) and define $\minstrategy^1(v)=v'$. The following lemma
states that the vertex $v'$ already reached its final value at step
$i-1$.

\begin{lemma}
  For all vertices $v\in \minvertices\setminus\{\target\}$,
  $ X^{i-1}(\minstrategy^1(v))= \Value(\minstrategy^1(v))$.
\end{lemma}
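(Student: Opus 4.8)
The plan is to exploit the interplay between the definition of $v'=\minstrategy^1(v)$ as a witness of the minimum that computes $X^{i}(v)$, and the fact that $\Value$ is the greatest fixed point of $\operator$ (Corollary~\ref{lem:min-max-charact-of-value}). Throughout I write $i=i_v$ for the first index with $X^{i}(v)=\Value(v)$; in the finite-value case considered here the iterates $X^j$ coincide with $\bupval j=x_j$ (as established in the proof of Proposition~\ref{the:rtpo-ca-marche-youpi}), so I may freely use the monotone domination $X^j\vgeq\Value$ recorded earlier and the $\min/\max$ characterisation of Lemma~\ref{lemma:min-max-charact-of-bupval}. Note also that $i_v$ is well-defined and positive: since $X^0(v)=+\infty\neq\Value(v)$ for $v\neq\target$, the value $\Value(v)$ is first attained at some step $>0$.

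First I would record the easy inequality, which holds for free: every iterate dominates the value, hence $X^{i-1}(v')\geq\Value(v')$. Before using the reverse one I would check that all quantities are finite, so that cancellation is legitimate: from $X^{i}(v)=\edgeweights(v,v')+X^{i-1}(v')=\Value(v)\in\Z$ and $\edgeweights(v,v')\in\Z$, the term $X^{i-1}(v')$ cannot equal $+\infty$, so no $+\infty$-arithmetic sneaks in.

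The heart of the argument is the converse inequality. Because $v\in\minvertices\setminus\{\target\}$ and $\Value$ is a fixed point of $\operator$, we have $\Value(v)=\min_{v''\in\edges(v)}\big(\edgeweights(v,v'')+\Value(v'')\big)\leq \edgeweights(v,v')+\Value(v')$, the minimum being at most the term indexed by $v'$. On the other hand, the defining property of $v'$ together with $i=i_v$ gives $\edgeweights(v,v')+X^{i-1}(v')=X^{i}(v)=\Value(v)$. Substituting the latter equality into the former inequality yields $\edgeweights(v,v')+X^{i-1}(v')\leq \edgeweights(v,v')+\Value(v')$, and cancelling the finite weight $\edgeweights(v,v')$ leaves $X^{i-1}(v')\leq\Value(v')$. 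Combined with the free inequality $X^{i-1}(v')\geq\Value(v')$, this forces $X^{i-1}(v')=\Value(v')$, i.e. $X^{i_v-1}(\minstrategy^1(v))=\Value(\minstrategy^1(v))$, which is exactly the claim.

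I do not anticipate a genuine obstacle here: the entire content is the single use of the fixed-point equation for $\Value$ at the $\MinPl$ vertex $v$, squeezed against the domination $X^{i-1}\vgeq\Value$. The only points demanding a line of care are the well-definedness of $i_v$ and the finiteness observation guaranteeing that cancelling $\edgeweights(v,v')$ is valid, both of which are immediate under the standing hypothesis that all values are finite.
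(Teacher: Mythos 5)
Your proof is correct. It rests on the same three facts as the paper's proof---the domination $X^{i-1}(v')\geq\Value(v')$, the defining equality $\edgeweights(v,v')+X^{i-1}(v')=X^{i}(v)=\Value(v)$, and the inequality $\Value(v)\leq\edgeweights(v,v')+\Value(v')$---but it obtains the third fact and assembles them differently. The paper argues by contradiction: assuming $X^{i-1}(v')>\Value(v')$, it picks a later index $j>i$ at which $X^{j-1}(v')$ has already converged to $\Value(v')$ and chains
$\Value(v)\leq X^{j}(v)\leq\edgeweights(v,v')+X^{j-1}(v')=\edgeweights(v,v')+\Value(v')<\edgeweights(v,v')+X^{i-1}(v')=X^{i}(v)=\Value(v)$.
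You instead invoke the fact that $\Value$ is a fixed point of $\operator$ (Corollary~\ref{lem:min-max-charact-of-value}) to get $\Value(v)\leq\edgeweights(v,v')+\Value(v')$ in one line, then cancel the finite weight $\edgeweights(v,v')$ and squeeze directly, which requires the finiteness check you supply. Your route avoids the stationarity argument (the existence of $j$ with $X^{j-1}(v')=\Value(v')$) at the price of leaning on the fixed-point corollary; both ingredients are available at this point in the paper, so either argument is legitimate, and yours is arguably a little more economical.
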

\begin{proof}
  Let $v'=\minstrategy^1(v)$. By contradiction assume that
  $X^{i-1}(v')> \Value(v')$, Note that there exists $j>i$ such that
  $X^{j-1}(v')= \Value(v')$. By definition,
  \begin{multline*}
    \Value(v) \leq X^j(v) \leq  \edgeweights(v,v') + X^{j-1}(v') =
    \edgeweights(v,v') + \Value(v') \\
    < \edgeweights(v,v') + X^{i-1}(v') = X^i(v) = \Value(v),
  \end{multline*}
  which raises a contradiction.
\end{proof}

We can state the properties of $\minstrategy^1$: intuitively one
can see it as an \emph{almost perfect strategy}, in the sense that it
is memoryless, if it reaches the target, then the value obtained is
optimal, and if it does not reach the target then the total-payoff of
the finite play will decrease as the game goes on. The only problem is one
cannot ensure that we reach the target.

\begin{proposition}\label{prop:almostPerfect}
  For all vertices $v$, and for all plays $\pi=v_1 v_2 \cdots$ starting
  in $v$ and conforming to $\minstrategy^1$,%
  \begin{enumerate}
  \item[(1)] if there exists $i<j$ such that $v_i=v_j$, then
    $\TP(v_i \cdots v_j)<0$,
  \item[(2)] if $\pi$ reaches $\target$ then $\TP(\pi)\leq \Value(v)$.
  \end{enumerate}
\end{proposition}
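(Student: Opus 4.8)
The plan is to exploit two facts established just before the statement: that $\Value$ is the greatest fixed point of $\operator$ (Corollary~\ref{lem:min-max-charact-of-value}), so that $\Value(v)=\max_{v'\in\edges(v)}\big(\edgeweights(v,v')+\Value(v')\big)$ for $v\in\maxvertices\setminus\{\target\}$, the analogous $\min$ for $v\in\minvertices\setminus\{\target\}$, and $\Value(\target)=0$; and the preceding lemma, which gives $X^{i_v-1}(\minstrategy^1(v))=\Value(\minstrategy^1(v))$ for every $v\in\minvertices\setminus\{\target\}$, where $i_v$ is the least index with $X^{i_v}(v)=\Value(v)$. Combined with the definition of $\minstrategy^1$, this yields the crucial equality $\Value(v)=\edgeweights(v,\minstrategy^1(v))+\Value(\minstrategy^1(v))$ at every $\MinPl$ vertex. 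For a play $\pi=v_1v_2\cdots$ conforming to $\minstrategy^1$ I would set $\delta_k=\Value(v_k)-\big(\edgeweights(v_k,v_{k+1})+\Value(v_{k+1})\big)$; the fixed-point equations give $\delta_k\geq 0$ at $\MaxPl$ vertices (by the max-equation) and $\delta_k=0$ at $\MinPl$ vertices (the equality above), so $\delta_k\geq 0$ throughout.

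For item~(2), suppose $\pi$ first reaches $\target$ at position $m$. Telescoping $\sum_{k=1}^{m-1}\delta_k\geq 0$ and using $\Value(\target)=0$ gives $\Value(v)=\Value(v_1)\geq\sum_{k=1}^{m-1}\edgeweights(v_k,v_{k+1})+\Value(v_m)=\TP(v_1\cdots v_m)$. Since the only edge out of $\target$ is its zero self-loop, the partial sums are constant from position $m$ on, so $\TP(\pi)=\TP(v_1\cdots v_m)\leq\Value(v)$, as required.

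For item~(1), consider a repetition $v_i=v_j$ with $i<j$. Because the only edge leaving $\target$ is its zero self-loop, any occurrence of $\target$ forces all later vertices to equal $\target$; hence a cycle with $v_i=v_j\neq\target$ contains no occurrence of $\target$, and every index $i_{v_k}$ ($i\le k\le j$) is well defined and at least $1$. The same telescoping, run around the cycle with $v_i=v_j$, yields $\TP(v_i\cdots v_j)=-\sum_{k=i}^{j-1}\delta_k\leq 0$, and the whole difficulty is to rule out equality. At a $\MinPl$ vertex $v_k$ the preceding lemma already gives $i_{v_{k+1}}\leq i_{v_k}-1$; at a $\MaxPl$ vertex $v_k$ with $\delta_k=0$, from $\Value(v_k)=X^{i_{v_k}}(v_k)\geq\edgeweights(v_k,v_{k+1})+X^{i_{v_k}-1}(v_{k+1})$ together with $X^{i_{v_k}-1}(v_{k+1})\geq\Value(v_{k+1})$ (the sequence $(X^i)$ being non-increasing with limit $\Value$) and $\delta_k=0$, I again obtain $X^{i_{v_k}-1}(v_{k+1})=\Value(v_{k+1})$, i.e. $i_{v_{k+1}}\leq i_{v_k}-1$. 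Thus if every $\delta_k$ vanished along the cycle, the indices would satisfy $i_{v_i}>i_{v_{i+1}}>\cdots>i_{v_j}$, contradicting $i_{v_i}=i_{v_j}$. Therefore some $\delta_k>0$ and $\TP(v_i\cdots v_j)<0$.

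I expect the strictness step in item~(1) to be the main obstacle: the telescoping alone only delivers $\TP\leq 0$, and upgrading it to a strict inequality hinges on the auxiliary implication ``$\delta_k=0\Rightarrow i_{v_{k+1}}<i_{v_k}$'', which in turn relies on the monotone convergence of the value-iteration sequence $(X^i)$ and on the preceding lemma about $\minstrategy^1$. Everything else---the fixed-point equations, the sign of $\delta_k$, and both telescopings---is routine once that implication is in hand.
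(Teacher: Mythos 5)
Your proof is correct. For item~(2) it coincides with the paper's argument: the backward induction there is exactly your telescoping of $\delta_k\geq 0$ down to $\Value(\target)=0$. For item~(1) the two proofs use the same ingredients---the fixed-point characterisation of $\Value$, the monotone convergence of $(X^i)_{i\geq 0}$, and the preceding lemma $X^{i_v-1}(\minstrategy^1(v))=\Value(\minstrategy^1(v))$---but organise them differently. The paper first argues that the cycle must contain a $\MinPl$ vertex (otherwise $\MaxPl$ could loop forever and force value $+\infty$), picks such a vertex $v_i$ whose index $i_{v_i}$ is maximal among the $\MinPl$ vertices of the cycle, and carries the quantity $X^{i_{v_i}-1}(v_\ell)$ around the cycle by induction; the strict inequality comes from the single fact $X^{i_{v_i}}(v_i)<X^{i_{v_i}-1}(v_i)$ at that anchor. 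You instead telescope at the limit to get $\TP(v_i\cdots v_j)=-\sum_k\delta_k\leq 0$ and rule out equality via the descent $i_{v_{k+1}}<i_{v_k}$ whenever $\delta_k=0$, which cannot persist around a cycle. Your version is somewhat more modular and does not need to locate a $\MinPl$ vertex on the cycle; the price is that you must define and control $i_v$ at $\MaxPl$ vertices too, which you do correctly via the squeeze $\Value(v_{k+1})\leq X^{i_{v_k}-1}(v_{k+1})$ forced to equality when $\delta_k=0$. Both proofs silently exclude the degenerate repetition $v_i=v_j=\target$ (where $\TP=0$); that is the only reading under which the proposition is applied later, so this is not a defect of your argument.
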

\begin{proof}
  Let us prove $(1)$, take a cycle $v_i\cdots v_j$ with $v_j=v_i$.
  Notice that at least one vertex of this cycle belongs to $\MinPl$,
  since, otherwise, $\MaxPl$ would have a strategy to obtain a value
  $+\infty$ for vertex $v_1$, which contradicts the hypothesis on the
  game.  Hence, for the sake of the explanation, we suppose that
  $v_i\in\minvertices$.  Let us also suppose that $i_{v_0}$ is maximal
  among $\{i_{v_\ell}\mid i\leq \ell <j, v_{i'}\in\maxvertices\}$.  The
  following extends straightforwardly to the case where this maximal
  vertex of $\MinPl$ is not $v_i$.  We prove by induction over
  $i< \ell \leq j$ that
  \[X^{i_{v_i}}(v_i)\geq \TP(v_i\cdots v_{\ell}) +
  X^{i_{v_i}-1}(v_\ell)\,.\] The base case comes from the fact that
  since $v_i\in \minvertices$ we have $\minstrategy^1(v_i)=v_{i+1}$
  thus $X^{i_{v_0}}=
  \edgeweights(v_i,v_{i+1})+X^{i_{v_0}-1}(v_{i+1})$. For the inductive
  case, let us consider $i< \ell < j$ such that $X^{i_{v_i}}(v_i)\geq
  \TP(v_i\cdots v_{\ell}) + X^{i_{v_i}-1}(v_\ell)$ and let us prove it
  for $\ell+1$.
 
  If $v_\ell\in\minvertices$, by definition of $X^{i_{v_i}}$, we have
  \[X^{i_{v_0}}(v_\ell)=\max_{(v_\ell,v')\in
    \edges}\edgeweights(v_\ell,v')+ X^{i_{v_0}-1}(v')\geq
  \edgeweights(v_\ell,v_{\ell+1})+ X^{i_{v_0}-1}(v_{\ell+1})\,.\] %
  In case $v_j\in\minvertices$, by maximality of $i_{v_0}$, we have
  \begin{align*}
    X^{i_{v_0}}(v_\ell)&=X^{i_{v_\ell}}(v_\ell)=
    \edgeweights(v_\ell,v_{\ell+1}) +  X^{i_{v_\ell}-1}(v_{\ell+1})\\
    &\geq \edgeweights(v_\ell,v_{\ell+1}) + X^{i_{v_0}-1}(v_{\ell+1})
  \end{align*}
  using that the sequence $X^0,X^1,X^2,\ldots$ is non-increasing. 

  Hence, in all cases, we have
  \[X^{i_{v_0}}(v_\ell)\geq
  \edgeweights(v_\ell,v_{\ell+1})+X^{i_{v_0}-1}(v_{\ell+1})\] %
  Using again that $X^0,X^1,X^2,\ldots$ is non-decreasing, we obtain
  \[X^{i_{v_0}-1}(v_\ell)\geq X^{i_{v_0}}(v_\ell)\geq
  \edgeweights(v_\ell,v_{\ell+1}) + X^{i_{v_0}-1}(v_{\ell+1})\,.\] %
  Injecting this into the induction hypothesis, we have
  \begin{align*}
    X^{i_{v_0}}(v_i) &\geq \TP(v_i\cdots v_\ell)
    +\edgeweights(v_\ell,v_{\ell+1}) +X^{i_{v_0}-1}(v_{\ell+1})\\
    & = \TP(v_i\cdots v_{\ell+1})+X^{i_{v_0}-1}(v_{\ell+1})
  \end{align*}
  which concludes the proof by induction. In particular, for
  $\ell=j$, as $v_i=v_j$ we obtain that
  \[X^{i_{v_0}}(v_i)\geq X^{i_{v_0}-1}(v_i)+\TP(v_i\cdots v_j)\,.\]
  and as, by definition of $i_{v_0}$, we have $X^{i_{v_0}}(v_i) <
  X^{i_{v_0}-1}(v_i)$, we necessarily have $\TP(v_1\cdots v_j)<0$.

  To prove $(2)$ we decompose $\pi$ as $\pi = v_1 \cdots v_k
  \target^\omega$ with for all $i$, $v_i\neq \target$.  We prove by
  decreasing induction on $i$ that $\TP(v_i \cdots v_k
  \target^\omega)\leq \Value(v)$.  If $i=k+1$,
  $\TP(\target^\omega)=0=\Value(\target)$.  If $i\leq k$, by induction
  we have $\TP(v_{i+1} \cdots v_k \target^\omega)=\Value(v_{i+1})$
  thus $\TP(v_i \cdots v_k \target^\omega)= \edgeweights(v_i,v_{i+1})+
  \Value(v_{i+1})$.  If $v_i\in \minvertices$ then $v_{i+1} =
  \minstrategy^\star(v)$ and $\Value(v_i)= \edgeweights(v_i,v_{i+1})+
  \Value(v_{i+1})= \TP(v_i \cdots v_k \target^\omega)$.  If $v_i\in
  \maxvertices$, then $\TP(v_i \cdots v_k \target^\omega)=
  \edgeweights(v_i,v_{i+1})+ \Value(v_{i+1}) \leq \max_{(v_i,v')\in E}
  \edgeweights(v_i,v_{i+1})+ \Value(v_{i+1})= \Value(v_i)$.
\end{proof}

Next, let $\minstrategy^2$ be the memoryless strategy induced by
the computation of the attractor: notice that it is possible to
construct it directly from the value iteration computation by mapping
a vertex $v$ to one vertex from which $v$ is first discovered (i.e.,
its value is first set to a real value different from $+\infty$). This
strategy ensures to reach the target after at most $|\vertices|$ steps, thus
for all $v$, $\Value(v,\minstrategy^2)\leq W (|\vertices|-1)$.

Before defining the strategy $\widetilde\minstrategy$, we introduce the
notion of \emph{switchable finite play} as follows. A finite play
$v_1\cdots v_k$ is switchable if $v_k\in \maxvertices$ and
$\TP(v_1\cdots v_k) \leq \Value(v_1)
-\Value(v_k,\minstrategy^2)$.
Intuitively a strategy is switchable, if by switching to the attractor
strategy, we ensures to get an optimal value.

We define the strategy $\minstrategy'$ as follows: for all
finite play $v_1\cdots v_k$ with $v_k\in \maxvertices$:
\[ \minstrategy'(v_1\cdots v_k) =
\begin{cases}
  \minstrategy^2(v_k) &\text{if }v_1 \cdots v_i \text{ is switchable
    for some } i \\ 
  \minstrategy^1(v_k) & \text{otherwise}
\end{cases}\]

One can easily show that for all play $v_1 v_2\cdots $ that conforms
to $\minstrategy'$ either $v_1 v_2 \cdots$ conforms to
$\minstrategy^1$ if no finite play is switchable, or there exists $k$ such
that $v_1\cdots v_k$ conforms to $\minstrategy^1$, $v_1\cdots v_k$ is
a switchable finite play, and $v_k v_{k+1} \cdots$ conforms to
$\minstrategy^2$.  The following proposition states that the strategy
$\minstrategy'$ is optimal.
\begin{proposition}
  For all $v$, $\Value(v,\minstrategy')= \Value(v)$.
\end{proposition}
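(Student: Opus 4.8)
The plan is to prove the two inequalities $\Value(v,\minstrategy')\geq \Value(v)$ and $\Value(v,\minstrategy')\leq \Value(v)$ separately. The first is immediate: since $\Value(v)=\uppervalue(v)=\inf_{\minstrategy}\Value(v,\minstrategy)$, every $\MinPl$ strategy, $\minstrategy'$ included, has value at least $\Value(v)$. All the work lies in the converse bound, i.e.\ in showing that for every strategy $\maxstrategy$ of $\MaxPl$ the play $\pi=\outcomes(v,\maxstrategy,\minstrategy')$ satisfies $\MCR(\pi)\leq \Value(v)$. Here I would invoke the structural decomposition stated just before the proposition: either no prefix of $\pi$ is switchable, in which case $\pi$ conforms to $\minstrategy^1$ throughout; or there is a switchable prefix $v_1\cdots v_k$ (conforming to $\minstrategy^1$) after which $\pi$ conforms to the attractor strategy $\minstrategy^2$.

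The switching case is the easy one. If $v_1\cdots v_k$ is switchable, then $\MCR(\pi)=\TP(v_1\cdots v_k)+\MCR(\outcomes(v_k,\maxstrategy,\minstrategy^2))$; since $\minstrategy^2$ is an attractor strategy reaching $\target$ within $|\vertices|$ steps, the second summand is at most $\Value(v_k,\minstrategy^2)$, and the switchability condition $\TP(v_1\cdots v_k)\leq \Value(v_1)-\Value(v_k,\minstrategy^2)$ yields exactly $\MCR(\pi)\leq \Value(v_1)=\Value(v)$ (recall $v_1=v$).

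For the non-switching case I would split on whether $\pi$ reaches $\target$. If it does, then, since $\pi$ conforms to $\minstrategy^1$, item~(2) of Proposition~\ref{prop:almostPerfect} gives $\MCR(\pi)=\TP(\pi)\leq\Value(v)$ directly. The crux is to rule out the remaining possibility, that $\pi$ conforms to $\minstrategy^1$ forever without reaching $\target$ and without ever triggering a switch. I would derive a contradiction by showing that the partial sums $\TP(\pi[i])$ tend to $-\infty$: decomposing each prefix into a simple (acyclic) path of length $<|\vertices|$ plus a collection of cycles, the acyclic part contributes a bounded quantity in $[-(|\vertices|-1)W,(|\vertices|-1)W]$, while by item~(1) of Proposition~\ref{prop:almostPerfect} every cycle contributes a strictly negative (hence $\leq -1$, as weights are integers) amount, and the number of such cycles grows without bound when $\target$ is never reached. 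Since $\Value(v_k,\minstrategy^2)\leq (|\vertices|-1)W$ is bounded and $\Value(v_1)$ is fixed, the switchability threshold $\Value(v_1)-\Value(v_k,\minstrategy^2)$ is bounded below; so once $\TP(\pi[i])$ has dropped far enough at a $\MaxPl$ vertex, a prefix becomes switchable, contradicting the assumption.

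The main obstacle is precisely guaranteeing that the threshold is crossed \emph{at a $\MaxPl$ vertex}, as required by the definition of switchability: a priori $\pi$ could spend all its time among $\minvertices$ vertices, where the condition is never tested. This is where I would exploit the construction of $\minstrategy^1$ via the value-iteration indices: along any $\minvertices$ edge selected by $\minstrategy^1$, the first index $i_v$ at which $X^{i}(v)=\Value(v)$ strictly decreases, which is the content of the lemma preceding Proposition~\ref{prop:almostPerfect} (since $\minstrategy^1(v)=v'$ satisfies $X^{i_v-1}(v')=\Value(v')$, forcing $i_{v'}<i_v$). Consequently $\minstrategy^1$ cannot produce an infinite run staying within $\minvertices$, so a run that never reaches $\target$ must visit $\maxvertices$ infinitely often; combined with $\TP(\pi[i])\to-\infty$, this provides the switchable $\MaxPl$-prefix needed to close the contradiction, and hence establishes $\Value(v,\minstrategy')\leq\Value(v)$.
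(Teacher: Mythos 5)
Your proposal is correct and follows essentially the same route as the paper's proof: the same case split on the existence of a switchable prefix, the same use of the switchability inequality together with the bound $\MCR(\outcomes(v_k,\maxstrategy,\minstrategy^2))\leq\Value(v_k,\minstrategy^2)$, the same appeal to item~(2) of Proposition~\ref{prop:almostPerfect} when the target is reached, and the same contradiction via unboundedly decreasing partial sums (your cycle-decomposition argument is exactly what the paper packages as the induction $(\star)$ giving $\TP(v_1\cdots v_k)\leq(|\vertices|-1)W-\lfloor k/|\vertices|\rfloor$). The one point of divergence is the justification that the play must keep visiting $\maxvertices$ so that switchability is eventually tested: the paper dismisses this with a one-line appeal to the attractor of $\target$, whereas you derive it from the lemma preceding Proposition~\ref{prop:almostPerfect}, observing that the indices $i_v$ strictly decrease along $\minvertices$-edges chosen by $\minstrategy^1$, so no infinite suffix can stay inside $\minvertices$; this is a more explicit and arguably tighter argument for that step, and otherwise your proof matches the paper's.
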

\begin{proof}
  Let $v\in\vertices$ and
  $\pi=v_1 v_2 \cdots \in \outcomes(v,\minstrategy')$, let us
  show that $\MCR(\pi)\leq\Value(v)$.  Assume first that there exists
  $k$ such that $v_1\cdots v_k$ conforms to $\minstrategy^1$,
  $v_1\cdots v_k$ is a switchable finite play, and $v_k v_{k+1} \cdots$
  conforms to $\minstrategy^2$. Thus $\pi$ reaches the target as
  $v_k v_{k+1} \cdots$ does, and
  $\MCR(\pi) = \TP(v_1\cdots v_k)+\MCR(v_k v_{k+1} \cdots)$. As
  $v_1 \cdots v_k$ is a switchable finite play, we have
  $\TP(v_1\cdots v_k) \leq \Value(v_1)
  -\Value(v_k,\minstrategy^2)$,
  thus
  $\MCR(\pi) \leq \Value(v_1) -\Value(v_k,\minstrategy^2) +
  \MCR(v_k v_{k+1} \cdots) \leq \Value(v_1)$.

  Assume now that $v_1 v_2 \cdots$ does not contains a switchable
  prefix and thus $\pi$ conforms to $\minstrategy^1$. If $\pi$
  reaches $\target$ then by Proposition~\ref{prop:almostPerfect},
  $\MCR(\pi)\leq\Value(v)$.  To conclude, let us prove that $\pi$
  reaches $\target$, and by contradiction assume that this is not the
  case.

  First we prove by induction on $k$, that $(\star)$ for all play
  $v'_1 \cdots v'_k$ that conforms to $\minstrategy^1$ and does
  not reach $\target$,
  $\TP( v_1 \cdots v_k) \leq (|\vertices|-1)W - \left \lfloor
    \frac{k}{|\vertices|}\right \rfloor$.
 
  If $k< |\vertices|$ the results is straightforward. For the
  inductive case, let $k\geq |\vertices|$ and $v'_1 \cdots v'_k$ that
  conforms to $\minstrategy^1$ and does not reach $\target$.  Then as
  $k \geq |\vertices|$, there exists $i<j$ such that $v_i = v_j$.
  Thus as $v'_1 \cdots v'_i v'_{j+1} \cdots v'_k$ is a play of size
  $k-|\vertices|$ that conforms to $\minstrategy^1$ and does not reach
  $\target$, we know by induction hypothesis that
  $\TP(v'_1 \cdots v'_i v'_{j+1} \cdots v'_k) \leq (|\vertices|-1)W -
  \left \lfloor \frac{k-|\vertices|}{|\vertices|} \right \rfloor =
  (|\vertices|-1)W - \left \lfloor \frac{k}{|\vertices|} \right
  \rfloor + 1 $.
  Furthermore by Proposition~\ref{prop:almostPerfect}, we have that
  $\TP(v'_i\cdots v'_j)\leq -1$.  Thus
  \begin{multline*}
    \TP(v'_1\cdots v'_k)
    = \TP (v'_1 \cdots v'_i) + \TP(v'_i \cdots v'_j)+\TP(v'_j \cdots v'_k) \\
    =\TP(v'_1 \cdots v'_i v'_{j+1} \cdots v'_k) +\TP(v'_1\cdots v'_j)
    \leq (|\vertices|-1)W - \left \lfloor \frac{k}{|\vertices|}
    \right\rfloor,
  \end{multline*}
  which concludes the proof of $(\star)$.
 
  Now we can get back to raising a contradiction, and for that we show
  that there exists a switchable finite play in $v_1 v_2 \cdots$.  Take
  $k$ be the least index greater that $3|\vertices|(|\vertices|-1)W$
  such that $v_k\in \maxvertices$ (we know that there exists one,
  otherwise the vertices $v_j$ with
  $j\geq 3|\vertices|(|\vertices|-1)W$ would not be in the attractor
  of $\target$).  We have
  \[\TP(v_1 \cdots v_k) \leq (|\vertices|-1)W - \left \lfloor
    \frac{k}{|\vertices|}\right \rfloor \leq -2(|\vertices|-1)W \leq
  \Value(v_1) + \Value(v_k,\minstrategy^2).\]
 
  Thus $v_1\cdots v_k$ is a switchable prefix, which raises a
  contradiction.  
\end{proof}

Notice that $\minstrategy'$ may be more easily implementable than a
more general finite-memory strategy, in particular, we may encode the
current total-payoff in binary, hence saving some space. We give in
Algorithm~\ref{algo:value-iteration-RT-strategy} a way to compute
strategies $\minstrategy^1$ and $\minstrategy^2$.

\subsection{\texorpdfstring{Strategies of $\MaxPl$}{Strategies of Max}}

While we have already shown that optimal strategies for $\MinPl$ might
require memory, let us show that $\MaxPl$ always has a
\emph{memoryless optimal strategy}. This asymmetry stems directly from
the asymmetric definition of the game -- while $\MinPl$ has the double
objective of reaching $\target$ and minimising its cost, $\MaxPl$ aims
at avoiding $\target$, and if not possible, maximising the cost.

\begin{proposition}\label{prop:optimal-player2}
  In all min-cost reachability game, $\MaxPl$ has a memoryless
  optimal strategy.
\end{proposition}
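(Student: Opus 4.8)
The plan is to exhibit the memoryless strategy $\maxstrategy^*$ already announced in Section~\ref{sec:reachability-objectives} and prove it optimal by a potential argument built on the fixed-point characterisation of $\Value$. Concretely, I would set, for every $v\in\maxvertices\setminus\{\target\}$, $\maxstrategy^*(v)=\argmax_{v'\in\edges(v)}\big(\edgeweights(v,v')+\Value(v')\big)$, breaking ties arbitrarily. Since $\Val(v,\maxstrategy^*)\leq\lowervalue(v)=\Value(v)$ holds for every strategy, it suffices to establish the reverse inequality $\Val(v,\maxstrategy^*)\geq\Value(v)$, that is, that against \emph{every} strategy $\minstrategy$ of $\MinPl$ the play $\outcomes(v,\maxstrategy^*,\minstrategy)$ has payoff at least $\Value(v)$.

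The core step relies on the fact, recorded in Corollary~\ref{lem:min-max-charact-of-value}, that $\Value$ is a fixed point of $\operator$, so that $\Value(\target)=0$ and, for every edge $(v_i,v_{i+1})$ traversed by a play $\pi=v_0v_1\cdots$ conforming to $\maxstrategy^*$, one has $\edgeweights(v_i,v_{i+1})+\Value(v_{i+1})\geq\Value(v_i)$: this is an \emph{equality} by the choice of $\maxstrategy^*$ when $v_i\in\maxvertices$, and an \emph{inequality} from the $\min$ in $\operator$ when $v_i\in\minvertices$, regardless of which successor $\MinPl$ selects. I would then telescope these one-step inequalities to obtain $\TP(\pi[k])+\Value(v_k)\geq\Value(v_0)$ for every $k$; equivalently, the quantity $\TP(\pi[k])+\Value(v_k)$ is non-decreasing along $\pi$ and starts at $\Value(v_0)$.

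Concluding is then immediate in each of the two regimes. If $\pi$ reaches $\target$, letting $k$ be the first such index we have $\Value(v_k)=\Value(\target)=0$, hence $\MCR(\pi)=\TP(\pi[k])\geq\Value(v_0)$. If $\pi$ never reaches $\target$, then $\MCR(\pi)=+\infty\geq\Value(v_0)$ trivially. In both cases $\MCR(\outcomes(v_0,\maxstrategy^*,\minstrategy))\geq\Value(v_0)$, so taking the infimum over $\minstrategy$ yields $\Val(v_0,\maxstrategy^*)\geq\Value(v_0)$; combined with the automatic upper bound, this proves optimality of the memoryless strategy $\maxstrategy^*$.

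The main obstacle is not this finite-value telescoping---which is a clean one-liner once the fixed-point equation $\operator(\Value)=\Value$ is in hand---but the vertices carrying infinite values. For $v$ with $\Value(v)=+\infty$ I would first peel them off using the attractor construction behind the first item of Theorem~\ref{thm:optimal-strategy}: $\MaxPl$ has a memoryless strategy keeping the play outside the attractor of $\target$, thus securing payoff $+\infty$, and this choice does not interfere with the strategy defined on the remaining vertices. Vertices of value $-\infty$ are the genuinely delicate case, since there the very existence of an \emph{optimal} $\MaxPl$ strategy is subtle; I would therefore carry out the argument above under the hypothesis $\Value(v)\neq-\infty$ of Theorem~\ref{thm:optimal-strategy}, where finiteness makes the one-step inequality and its telescoping valid, and handle the $-\infty$ case---if required---separately, via a threshold-ensuring switching strategy rather than a genuinely optimal one.
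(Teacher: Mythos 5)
Your proposal is correct and follows essentially the same route as the paper: the same memoryless strategy $\maxstrategy^*$ built from $\argmax_{v'\in\edges(v)}(\edgeweights(v,v')+\Value(v'))$, the same use of the fixed-point identity $\operator(\Value)=\Value$ to get a one-step inequality (equality at $\maxvertices$ vertices, $\geq$ at $\minvertices$ vertices) that telescopes along any play conforming to $\maxstrategy^*$, and the same attractor argument for $+\infty$ vertices; the paper merely phrases the telescoping as a backward induction from the first visit to $\target$. The only place you over-complicate is the $-\infty$ case: for $\MaxPl$ it is trivial rather than delicate, since $\Val(v,\maxstrategy)\leq\lowervalue(v)=-\infty$ for every strategy $\maxstrategy$, so every (in particular every memoryless) strategy is optimal there — the switching construction you invoke is needed only for $\MinPl$.
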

\begin{proof}
  For vertices with value $+\infty$, we already know a memoryless
  optimal strategy for $\MaxPl$, namely any strategy that remains
  outside the attractor of the target vertices. For vertices with
  value $-\infty$, all strategies are equally bad for $\MaxPl$.

  We now explain how to define a memoryless optimal strategy
  $\maxstrategy^*$ for $\MaxPl$ in case of a graph containing only
  finite values. For every finite play $\pi$ ending in a vertex
  $v\in\maxvertices$ of $\MaxPl$, we let
  \[\maxstrategy^*(\pi)=\argmax_{v'\in \edges(v)}
  \big(\edgeweights(v,v')+ \Value(v')\big)\,.\] This is clearly a
  memoryless strategy. Let us prove that it is optimal for $\MaxPl$,
  that is, for every vertex $v\in \vertices$, and every strategy
  $\minstrategy$ of $\MinPl$
  \[\MCR(\outcomes(v,\maxstrategy^*,\minstrategy)) \geq
  \Value(v)\,.\]
  In case $\outcomes(v,\maxstrategy^*,\minstrategy)$ does not reach
  the target set of vertices, the inequality holds
  trivially. Otherwise, we let
  $\outcomes(v,\maxstrategy^*,\minstrategy)= v_0v_1\cdots
  v_\ell\cdots$
  with $\ell$ the least position such that $v_\ell=\target$. If
  $\ell=0$, i.e., $v=v_0=\target$, we have
  \[\MCR(\outcomes(v,\maxstrategy^*,\minstrategy)) =0=
  \Value(v)\,.\]
  Otherwise, let us prove by induction on $0\leq i\leq \ell$ that 
  \[\MCR(v_{\ell-i}\cdots v_\ell) \geq \Value(v_{\ell-i})\,.\]
  This will permit to conclude since
  \[\MCR(\outcomes(v,\maxstrategy^*,\minstrategy)) =
  \MCR(v_0v_1\cdots v_\ell) \geq \Value(v_0)=\Value(v)\,.\]
  The base case $i=0$ corresponds to the previous case where the
  starting vertex is $\target$. Supposing that the property holds for
  index $i$, let us prove it for $i+1$. We
  have
  \[\MCR(v_{\ell-i-1}\cdots v_\ell) =
  \edgeweights(v_{\ell-i-1},v_{\ell-i}) + \MCR(v_{\ell-i}\cdots
  v_\ell)\,.\] By induction hypothesis, we have
  \begin{equation}
    \MCR(v_{\ell-i-1}\cdots v_\ell) \geq 
    \edgeweights(v_{\ell-i-1},v_{\ell-i}) +
    \Value(v_{\ell-i})\,.\label{eq:IH2}
  \end{equation}
  We now consider two cases:
  \begin{itemize}
  \item If $v_{\ell-i-1}\in\maxvertices\setminus\{\target\}$, then
    $v_{\ell-i}=\maxstrategy^*(v_0v_1\cdots v_{\ell-i-1})$, so that
    by definition of~$\maxstrategy^*$:
    \[\edgeweights(v_{\ell-i-1},v_{\ell-i}) +
    \Value(v_{\ell-i}) = \max_{v'\in\vertices\mid (v_{\ell-i-1},v')\in
      \edges} \big(\edgeweights(v_{\ell-i-1},v')+ \Value(v')\big)\,.\]
    Using Corollary~\ref{lem:min-max-charact-of-value} and
    \eqref{eq:IH2}, we obtain
    \[\MCR(v_{\ell-i-1}\cdots v_\ell) \geq
    \Value(v_{\ell-i-1})\,.\]
  \item If $v_{\ell-i-1}\in\minvertices\setminus\{\target\}$, then 
    \[\edgeweights(v_{\ell-i-1},v_{\ell-i}) +
    \Value(v_{\ell-i}) \geq \min_{v'\in\vertices\mid
      (v_{\ell-i-1},v')\in \edges} \big(\edgeweights(v_{\ell-i-1},v')+
    \Value(v')\big)\,.\] Once again using
    Corollary~\ref{lem:min-max-charact-of-value} and \eqref{eq:IH2},
    we obtain
    \[\MCR(v_{\ell-i-1}\cdots v_\ell) \geq
    \Value(v_{\ell-i-1})\,.\]
  \end{itemize}
  This concludes the proof.
\end{proof}

This strategy $\maxstrategy^\star$ can directly be computed along the
execution of the value iteration algorithm. This is done in
Algorithm~\ref{algo:value-iteration-RT-strategy}.

\begin{algorithm}[H]
  \DontPrintSemicolon%
  \KwIn{min-cost reachability game
    $\gameEx[\MCR]$, $W$ greatest weight in absolute
    value in the arena}%
  \SetKw{value}{\ensuremath{\mathsf{X}}}
  \SetKw{prevvalue}{\ensuremath{\mathsf{X}_{pre}}}
  
  \BlankLine

  $\value(\target) := 0$\;
  \lForEach{$v\in\vertices\setminus\{\target\}$}{$\value(v):=+\infty$}

  \Repeat{$\value = \prevvalue$}{%
    $\prevvalue := \value$\;%
    \ForEach{$v\in\maxvertices\setminus\{\target\}$}%
    {$\value(v) := \max_{v'\in\edges(v)}
      \big(\edgeweights(v,v')+\prevvalue(v')\big)$\;%
      \lIf{$\value(v)\neq \prevvalue(v)$}%
      { $\maxstrategy^*(v) := \argmax_{v'\in\edges(v)}
        \big(\edgeweights(v,v')+\prevvalue(v')\big)$ }%
    } %
    \ForEach{$v\in\minvertices\setminus\{\target\}$}%
    {$\value(v) := \min_{v'\in\edges(v)}
      \big(\edgeweights(v,v')+\prevvalue(v')\big)$\; %
      \If{$\value(v)\neq \prevvalue(v)$}%
      { $\minstrategy^1(v) := \argmax_{v'\in\edges(v)}
        \big(\edgeweights(v,v')+\prevvalue(v')\big)$\; %
        \lIf{$\prevvalue(v)=+\infty$}{$\minstrategy^2(v)=\minstrategy^1(v)$}%
      }%
    }%
    \ForEach{$v\in\vertices\setminus\{\target\}$}{%
      \lIf{$\value(v) < -(|\vertices|-1)   W$}%
      {$\value(v) := -\infty$}%
    } %
  }%
  \Return{$\value$}
  \caption{Computation of optimal strategy for both players in value
    iteration algorithm for min-cost reachability
    games}\label{algo:value-iteration-RT-strategy}
\end{algorithm}

\section{Reduction of total-payoff games to min-cost reachability
  games\label{sec:app-total-payoff}}

This section is devoted to the proof of Proposition~\ref{TrueTP2MCR}. 

For that purpose, we must relate paths in games $\game$ and $\game^n$:
with each finite path in $\game^n$, we associate a finite path in
$\game$, obtained by looking at the sequence of vertices of
$\vertices$ appearing inside the vertices of the finite play. Formally,
the \emph{projection} of a finite path $\pi$ is the sequence
$\proj(\pi)$ of vertices of $\game$ inductively defined by
$\proj(\varepsilon)=\varepsilon$ and for all finite path $\pi$,
$v\in\vertices$ and $1\leq j\leq n$:%
\begin{align*}
  &\proj((\interior,v,j)\pi)=\proj(\pi)\,, \qquad
  \proj((\exterior,v,j)\target\pi)= v\,, \qquad
  \proj((\exterior,v,j))=\varepsilon\,,\\
  & \proj((\exterior,v,j+1)(v,j)\pi)=\proj((v,j)\pi)=v\,\proj(\pi)\,.
\end{align*}
In particular, notice that in the case of a play with prefix
$(\exterior,v,j)\target$, the rest of the play is entirely composed of
target vertices $\target$, since $\target$ is a sink state. For
instance, the projection of the finite play 
$$(v_1,3)(\interior,v_2,3)
(\exterior,v_2,3)(v_2,2)(\interior,v_3,2)(v_3,2)
(\interior,\allowbreak v_3,2)\allowbreak (\exterior,v_3,2)\target$$
 of
the game $\game^3$ of Fig.~\ref{FigureExTPMCR} is given by
$v_1v_2v_3v_3$.

The following lemma relates plays of $\game^n$ with their projection
in $\game$, comparing their total-payoff.
\begin{lemma}\label{ClaimProjection} The projection mapping satisfies
  the following properties.
  \begin{enumerate}
  \item\label{item:partialPlays} If $\pi$ is a finite play in
    $\game^n$ then $\proj(\pi)$ is a finite play in $\game$.
  \item\label{item:plays} If $\pi$ is a play in $\game^n$ that
    does not reach the target, then $\proj(\pi)$ is a play in
    $\game$.
  \item\label{item:payoff} For all finite play $\pi$,
    $\TP(\pi)=\TP(\proj(\pi))$.
  \end{enumerate}
\end{lemma}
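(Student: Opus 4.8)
The plan is to prove Parts~\ref{item:partialPlays} and~\ref{item:payoff} simultaneously by induction on the length of $\pi$, following exactly the case split built into the definition of $\proj$, and then to obtain Part~\ref{item:plays} from the structural observations gathered along the way. The crux of everything is to understand how the ``real'' vertices $(v,j)$ of $\game^n$ are wired together through the auxiliary gadget vertices $(\interior,\cdot,\cdot)$ and $(\exterior,\cdot,\cdot)$. The only edges leaving a real vertex $(v,j)$ go to some $(\interior,v',j)$ with $(v,v')\in\edges$; from $(\interior,v',j)$ the play continues either directly to the real vertex $(v',j)$, or to $(\exterior,v',j)$ and from there either to $\target$ (which terminates the projection, outputting $v'$ as the final letter) or to the real vertex $(v',j-1)$. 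In every branch, whenever a real vertex $(v,j)$ is followed in the projection by another real vertex, that successor has the form $(v',\cdot)$ with $(v,v')\in\edges$.

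First I would run the induction for Part~\ref{item:partialPlays}: feeding the above local analysis into the inductive hypothesis on the shorter suffix shows that any two consecutive letters of $\proj(\pi)$ are joined by an edge of $\game$, so $\proj(\pi)$ is a finite play. Part~\ref{item:payoff} is then read off the \emph{same} induction, tracking weights: all edges of $\game^n$ carry weight $0$ except the edges $\big((v,j),(\interior,v',j)\big)$, whose weight is precisely $\edgeweights(v,v')$, i.e.\ the weight of the edge $(v,v')$ that this step contributes to $\proj(\pi)$. Hence the gadget steps add nothing to the total payoff and the two sums agree letter by letter, giving $\TP(\pi)=\TP(\proj(\pi))$.

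For Part~\ref{item:plays}, edge-preservation is already secured, since every finite prefix of $\proj(\pi)$ is $\proj$ of a finite prefix of $\pi$ and hence a finite play by Part~\ref{item:partialPlays}. The only remaining point---and the main obstacle---is to show that $\proj(\pi)$ is \emph{infinite} when $\pi$ is an infinite play avoiding $\target$; one must rule out that $\pi$ eventually lingers forever among the weight-$0$ gadget vertices, which would truncate $\proj(\pi)$. I would dispel this by the structural remark that no two interior (respectively, exterior) vertices are adjacent, and that gadget detours are short: from $(\interior,v',j)$ a real vertex is reached after at most one intervening exterior vertex, because the only non-target successor of $(\exterior,v',j)$ is the real vertex $(v',j-1)$. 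Crucially this uses $j>1$, whereas a bottom-copy exterior vertex $(\exterior,\cdot,1)$ has $\target$ as its \emph{only} successor and is therefore never visited by a target-avoiding play. Consequently any maximal run through gadget vertices has length at most two, so real vertices $(v,j)$ occur infinitely often along $\pi$, each emitting a letter of $\proj(\pi)$. Combined with edge-preservation, this shows $\proj(\pi)$ is an infinite play of $\game$, completing the proof.
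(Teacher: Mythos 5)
Your proof is correct and follows essentially the same route as the paper's: a simultaneous induction on the length of $\pi$ for items~1 and~3, driven by the case split in the definition of $\proj$ together with the local observation that consecutive emitted vertices are joined by an edge of $\game$ (the paper packages this as an auxiliary invariant stating that the first letter of $\proj(\pi)$ is the underlying vertex of the first letter of $\pi$). Your treatment of item~2 is in fact more explicit than the paper's, which dismisses it as a direct consequence of item~1 without spelling out the infinitude argument (bounded gadget detours, and the fact that $(\exterior,\cdot,1)$ cannot occur on a target-avoiding play) that you correctly supply.
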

\begin{proof} 
  The proof of \ref{item:plays} is a direct consequence of
  \ref{item:partialPlays}. To each vertex
  $w\in\vertices^n\setminus\{\target\}$, we associate a vertex $f(w)$
  as follows:
  \[f(v,j)=f(\interior,v,j)=f(\exterior,v,j)=v\,.\]%
  Then notice that if $(w,w')\in \edges^n$ with $w,w'\neq \target$,
  then either $f(w)=f(w')$ or $(f(w),f(w'))\in \edges$. We now prove
  \ref{item:partialPlays} and \ref{item:payoff} inductively on the
  size of the finite play $\pi=w_1\cdots w_k$ of $\game^n$, along
  with the fact that
  \begin{center}
    $4.$ if $\proj(\pi)\neq \emptyset$ and $w_1\neq \target$ then the
    first vertex of $\proj(\pi)$ is $f(w_1)$.
  \end{center}
  If $k=0$, then $\pi=\proj(\pi)=\varepsilon$ are finite plays with
  the same total-payoff. If $k=1$, either $\proj(\pi)=\varepsilon$ or
  $\pi=(v,j)$ and $\proj(\pi)=v$: in both cases, the properties hold
  trivially. Otherwise, $k\geq 2$ and we distinguish several possible
  prefixes:
  \begin{itemize}
  \item If $\pi=(\interior,v,j)\pi'$, then
    $\proj(\pi)=\proj(\pi')$. Hence, \ref{item:partialPlays} holds by
    induction hypothesis. If $\proj(\pi)$ is non-empty, so is
    $\proj(\pi')$. Moreover, the first vertex of $\pi'$ is either
    $(v,j)$ or $(\exterior,v,j)$, so that we can show $4$ by induction
    hypothesis. Finally, the previous remark shows that the first edge
    of $\pi$ has necessarily weight 0, so that, $\TP(\pi)=\TP(\pi')$,
    and \ref{item:payoff} also holds by induction hypothesis.
  \item If $\pi=(v,j)\pi'$, then $\proj(\pi)=v\,\proj(\pi')$ so that
    $4$ holds directly. Moreover, $\pi'$ is a non-empty finite play so
    that $\pi'=(\interior,v',j)\pi''$ with $(v,v')\in\edges$, and
    $\proj(\pi')=\proj(\pi'')$. By induction, $\proj(\pi')$ is a
    finite play in $\arena$, and it starts with $v'$ (by $4$). Since
    $(v,v')\in\edges$, this shows that $\proj(\pi)$ is a
    finite play. Moreover,
    $\TP(\pi)=\edgeweights^n((v,j),(\interior,v',j))+\TP(\pi') =
    \edgeweights(v,v')+\TP(\pi')$.
    By induction hypothesis, we have
    $\TP(\pi')=\MCR(\proj(\pi'))$. Moreover,
    $\MCR(\proj(\pi)) = \edgeweights(v,v')+\MCR(\proj(\pi'))$ which
    concludes the proof of~\ref{item:payoff}.
  \item If $\pi=(\exterior,v,j)(v,j-1)\pi'$ then
    $\proj(\pi)=v\,\proj(\pi')=\proj((v,j-1)\pi')$: this allows us to
    conclude directly by using the previous case.
  \item Otherwise, $\pi=(\exterior,v,j)\target\pi'$, and then
    $\proj(\pi)=v$ is a finite play with total-payoff $0$, like
    $\pi$, and $4$ holds trivially. 
  \end{itemize}
\end{proof}

The next lemma states that when playing memoryless strategies, one can
bound the total-payoff of all finite plays.

\begin{lemma}\label{LemmaBoundedValue}
  Let $v\in \vertices$, and $\maxstrategy$ (respectively,
  $\minstrategy$) be a memoryless strategy for $\MaxPl$ (respectively,
  $\MinPl$) in the total-payoff game $\game$, such that
  $\Val(v,\maxstrategy)\neq -\infty$ (respectively,
  $\Val(v,\minstrategy)\neq +\infty$). Then for all finite play $\pi$
  conforming to $\maxstrategy$ (respectively, to $\minstrategy$),
  $\TP(\pi) \geq -(|\vertices|-1)   W$ (respectively,
  $\TP(\pi) \leq (|\vertices|-1)   W$).
\end{lemma}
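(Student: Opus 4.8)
The plan is to argue by contradiction, and since the two halves of the statement are exchanged by swapping the roles of $\MaxPl$ and $\MinPl$ (and of $\max/\min$, $-\infty/+\infty$, and the inequality), I would write out only the $\MaxPl$ case. So fix a memoryless strategy $\maxstrategy$ with $\Val(v,\maxstrategy)\neq -\infty$, and suppose there were a finite play $\pi=v_0v_1\cdots v_k$ conforming to $\maxstrategy$ with $v_0=v$ and $\TP(\pi)<-(|\vertices|-1) W$. The whole proof then rests on two ingredients: a combinatorial cycle-extraction, and a pumping argument that genuinely uses the memorylessness hypothesis.

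The first step is the cycle-decomposition (the same one already used in the proof of Lemma~\ref{lem:value-stab}): I claim $\pi$ must contain a \emph{negative} cycle. Indeed, if every cycle occurring in $\pi$ had weight $\geq 0$, then whenever two occurrences of a vertex appear, say $v_i=v_j$ with $i<j$, I could cut out the infix $v_i\cdots v_j$; since $v_i=v_j$ the shortened sequence $v_0\cdots v_i v_{j+1}\cdots v_k$ is still a finite play and its total-payoff is $\TP(\pi)$ minus the (non-negative) removed cycle weight, hence $\leq \TP(\pi)$. Iterating until no vertex repeats yields a simple path using at most $|\vertices|-1$ edges, whose total-payoff is therefore $\geq -(|\vertices|-1) W$; combined with the previous inequalities this forces $\TP(\pi)\geq -(|\vertices|-1) W$, a contradiction. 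Thus $\pi$ contains a negative cycle, and by repeatedly splitting any non-simple negative cycle into two strictly shorter cycles (one of which is again negative, since their weights sum to a negative number) I may assume it is a \emph{simple} negative cycle $C=w_0w_1\cdots w_m$ ($w_0=w_m$, the $w_0,\dots,w_{m-1}$ pairwise distinct, $\TP(C)<0$) appearing as a contiguous block of $\pi$.

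The second, and main, step turns this finite negative cycle into an infinite outcome of payoff $-\infty$. The crucial observation is that, because $\pi$ conforms to the \emph{memoryless} strategy $\maxstrategy$, at every $\MaxPl$-vertex $w_\ell$ of the cycle the move dictated by $\maxstrategy$ depends only on $w_\ell$ and so equals $w_{\ell+1}$; and since $C$ is simple, each $\MinPl$-vertex of $C$ has a single well-defined cycle-successor as well. I would then define a finite-memory strategy $\minstrategy$ for $\MinPl$ that first replays $\MinPl$'s choices along the prefix $v_0\cdots v_i$ reaching $w_0=v_i$ (using the position in this prefix as memory), and thereafter plays the cycle-successor at every $\MinPl$-vertex of $C$ forever. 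By the observation above the unique outcome $\outcomes(v,\maxstrategy,\minstrategy)$ is exactly $v_0\cdots v_i\,(w_1\cdots w_{m-1}w_0)^\omega$, which loops in $C$ indefinitely; each turn around $C$ adds $\TP(C)<0$ to the running sum, so the partial sums tend to $-\infty$ and $\TP(\outcomes(v,\maxstrategy,\minstrategy))=-\infty$. Consequently the infimum $\Val(v,\maxstrategy)=\inf_{\minstrategy'}\TP(\outcomes(v,\maxstrategy,\minstrategy'))$ is witnessed to be $-\infty$, contradicting the hypothesis and settling the $\MaxPl$ case.

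The routine part is the combinatorial cycle extraction; the delicate point—where the hypothesis is actually used—is the pumping step, where one must check that the cycle lifted out of $\pi$ can be re-entered and traversed forever against $\maxstrategy$. This is exactly where memorylessness is indispensable, since it guarantees $\MaxPl$ cannot \emph{escape} the cycle (his move at a cycle vertex is vertex-determined), and where insisting on a \emph{simple} cycle keeps the construction of $\minstrategy$ clean by making every cycle-successor unique. The $\MinPl$ case then follows by the symmetric argument, extracting a simple \emph{positive} cycle and building a $\MaxPl$ strategy that forces payoff $+\infty$, contradicting $\Val(v,\minstrategy)\neq +\infty$.
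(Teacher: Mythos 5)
Your proof is correct and uses the same two ingredients as the paper's --- cycle removal to bound cycle-free plays by $(|\vertices|-1)W$, and pumping a cycle against the memoryless strategy to produce an outcome of infinite payoff that contradicts the hypothesis --- the only organizational difference being that the paper runs a direct induction on the length of $\pi$, showing that \emph{every} contiguous cycle is non-positive (in the $\MinPl$ case) before deleting it, whereas you argue by contradiction and extract a single negative cycle. One small imprecision: the simple negative cycle your extraction produces need not be a contiguous block of $\pi$ (a cycle removed or split at a later stage is contiguous only in an already-shortened sequence), but this is harmless, since all your pumping step actually uses is that the cycle consists of edges occurring in $\pi$, which already forces it to conform to the memoryless $\maxstrategy$ at every $\MaxPl$-vertex and to start at a vertex reachable from $v$ under $\maxstrategy$.
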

\begin{proof} 
  We prove the part for $\MinPl$, the other case is similar.  The
  proof proceeds by induction on the size of a partial play
  $\pi=v_1\cdots v_k$ with $v_1=v$. If $k\leq|\vertices|$ then $\TP(\pi) =
  \sum_{i=1}^{k-1} \edgeweights(v_i,v_{i+1}) \leq (k-1)   W \leq
  (|\vertices|-1)  W$.  If $k\geq |\vertices|+1$ then there exists $i<j$ such
  that $v_i=v_j$.  Assume by contradiction that $\TP(v_i \cdots v_j) >
  0$.  Then the play $\pi' = v_1 \cdots v_i \cdots v_j (v_{i+1} \cdots
  v_j)^\omega$ conforms to $\minstrategy$ and $\TP(\pi') =+\infty$
  which contradicts $\Val(v,\minstrategy)\neq +\infty$.  Therefore
  $\TP(v_i \cdots v_j) \leq 0$.  We have $\TP(\pi) = \TP(v_1 \cdots
  v_i)+ \TP(v_i \cdots v_j) + \TP(v_{j+1} \cdots v_{k})$, and since
  $v_i=v_j$, $v_1 \cdots v_i v_{j+1} \cdots v_k$ is a finite play starting
  from $v$ that conforms to $\minstrategy$, and by induction
  hypothesis $\TP(v_1 \cdots v_iv_{j+1} \cdots v_{k} ) \leq
  (|\vertices|-1)  W$.  Then $\TP(\pi) = \TP(v_1 \cdots v_iv_{j+1} \cdots
  v_{k} )+ \TP(v_i \cdots v_j)\leq \TP(v_1 \cdots v_iv_{j+1} \cdots
  v_{k} )\leq (|\vertices|-1)  W$.  
\end{proof}

This permits to bound the finite values $\Val(v)$ of vertices $v$ of
the game:
\begin{corollary}\label{cor:bornesValeursTP}
  For all $v\in \vertices$, $\Val(v)\in \big[-(|\vertices|-1) 
  W,(|\vertices|-1)  W\big] \uplus\{-\infty,+\infty\}$.
\end{corollary}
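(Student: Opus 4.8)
The plan is to treat separately the case where $\Val(v)$ is infinite---for which there is nothing to prove, since $\{-\infty,+\infty\}$ is explicitly allowed in the statement---and the case where $\Val(v)$ is a finite integer. In the finite case I would establish the two bounds $\Val(v)\geq -(|\vertices|-1) W$ and $\Val(v)\leq (|\vertices|-1) W$ independently, by a symmetric argument, relying on Lemma~\ref{LemmaBoundedValue} together with the determinacy of total-payoff games and the existence of \emph{memoryless} optimal strategies for both players (recalled in Section~\ref{sec:quant-games} from \cite{ZwiPat96,GimZie04}).

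For the lower bound, first I would fix a memoryless optimal strategy $\maxstrategy^*$ for $\MaxPl$, so that $\Val(v,\maxstrategy^*)=\lowervalue(v)=\Val(v)$. Since this value is finite, in particular $\Val(v,\maxstrategy^*)\neq -\infty$, which is exactly the hypothesis required to invoke Lemma~\ref{LemmaBoundedValue}: every finite play conforming to $\maxstrategy^*$ has total-payoff at least $-(|\vertices|-1) W$. Now I would take an arbitrary strategy $\minstrategy$ of $\MinPl$ and consider the play $\pi=\outcomes(v,\maxstrategy^*,\minstrategy)$; since $\pi$ conforms to $\maxstrategy^*$, so does each of its prefixes $\pi[k]$, whence $\TP(\pi[k])\geq -(|\vertices|-1) W$ for all $k$. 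Passing to the inferior limit, $\TP(\pi)=\liminf_{k\to\infty}\TP(\pi[k])\geq -(|\vertices|-1) W$. As $\minstrategy$ was arbitrary, $\Val(v)=\Val(v,\maxstrategy^*)=\inf_{\minstrategy}\TP(\outcomes(v,\maxstrategy^*,\minstrategy))\geq -(|\vertices|-1) W$.

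The upper bound is obtained symmetrically: I would fix a memoryless optimal strategy $\minstrategy^*$ for $\MinPl$, note that $\Val(v,\minstrategy^*)=\uppervalue(v)=\Val(v)$ is finite and therefore $\neq +\infty$, and apply the $\MinPl$-part of Lemma~\ref{LemmaBoundedValue} to bound every conforming finite play above by $(|\vertices|-1) W$. The same prefix-and-liminf argument then yields $\TP(\outcomes(v,\maxstrategy,\minstrategy^*))\leq (|\vertices|-1) W$ for every $\maxstrategy$, hence $\Val(v)\leq (|\vertices|-1) W$. Combining the two bounds closes the finite case.

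The main obstacle, though routine once spelled out, is the passage from the bound on finite plays (Lemma~\ref{LemmaBoundedValue}) to a bound on the payoff $\TP(\pi)$ of an \emph{infinite} play, which is defined as a $\liminf$ of the partial sums rather than as a single sum: I must check that a uniform lower (resp.\ upper) bound valid for every prefix survives under $\liminf$, which it does because the inferior limit of a sequence bounded below (resp.\ above) by a constant is itself bounded below (resp.\ above) by that constant. A secondary point to justify is the availability of a memoryless optimal strategy from $v$ even when other vertices of $\game$ carry value $\pm\infty$; this is guaranteed by the cited positional determinacy results, and only the value at $v$---which we assumed finite---enters the hypothesis of Lemma~\ref{LemmaBoundedValue}.
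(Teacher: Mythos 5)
Your proof is correct and follows essentially the same route as the paper: positional determinacy of total-payoff games supplies memoryless optimal strategies, Lemma~\ref{LemmaBoundedValue} bounds every conforming finite play, and the bound passes to the $\liminf$ defining $\TP$ of the infinite outcome. (You even apply the two halves of the lemma to the correct players---$\maxstrategy$ for the lower bound, $\minstrategy$ for the upper---whereas the paper's own one-line write-up inadvertently swaps their roles in that step.)
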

\begin{proof}
  From \cite{GimZie04}, we know that total-payoff games are
  positionally determined, i.e., there exists two memoryless
  strategies $\maxstrategy,\minstrategy$ such that for all $v$,
  $\Val(v)= \Val(v,\maxstrategy) = \Val(v,\minstrategy)$. Assume that
  $\Val(v)\notin\{-\infty,+\infty\}$.  Then since
  $\Val(v,\minstrategy)=\Val(v)\neq -\infty$,
  Lemma~\ref{LemmaBoundedValue} shows that all finite play $\pi$ that
  conforms to $\minstrategy$ verifies
  $\TP(\pi) \geq -(|\vertices|-1) W$, therefore
  $\Val(v) \geq -(|\vertices|-1) W$. One can similarly prove that
  $\TP(v) \leq (|\vertices|-1) W$.  
\end{proof}

We now compare values in both games. A first lemma shows, in
particular, that $\Val_{\game^n}(v,n)\leq \Val_\game(v)$, in case
$\Val_\game(v)\neq +\infty$.

\begin{lemma}\label{lem:upperbound}
  For all $m\in \Z$, $v\in\vertices$, and $n\geq 1$, if $\Val_\game(v)
  \leq m$ then $\Val_{\game^n}(v,n)\leq m$.
\end{lemma}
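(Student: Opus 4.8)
The plan is to prove the inequality by producing a single strategy of $\MinPl$ in $\game^n$ that forces $\target$ to be reached from $(v,n)$ with cost at most $m$. Since $\game^n$ is an MCR game it is determined, and $\Val_{\game^n}(v,n)=\inf_{\minstrategy}\sup_{\maxstrategy}\MCR(\outcomes((v,n),\maxstrategy,\minstrategy))$, so it suffices to exhibit one strategy $\minstrategy$ with $\sup_{\maxstrategy}\MCR(\outcomes((v,n),\maxstrategy,\minstrategy))\leq m$. First I would invoke the positional determinacy of total-payoff games \cite{GimZie04} to fix a \emph{memoryless} optimal strategy $\minstrategy^*$ of $\MinPl$ in $\game$. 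As $\Val_\game(v,\minstrategy^*)=\Val_\game(v)\leq m$, for every $\maxstrategy$ the play $\rho=\outcomes(v,\maxstrategy,\minstrategy^*)$ satisfies $\TP(\rho)=\liminf_k\TP(\rho[k])\leq m$, whatever $\MaxPl$ does.

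The strategy $\minstrategy$ in $\game^n$ then mimics $\minstrategy^*$ while exploiting $\MinPl$'s ability to request a stop: at a vertex $(v',j)$ with $v'\in\minvertices$ play the edge towards $(\interior,\minstrategy^*(v'),j)$; at an interior vertex $(\interior,v'',j)$ request a stop, i.e.\ move to $(\exterior,v'',j)$, as soon as the weight accumulated along the current prefix is $\leq m$, and otherwise continue to $(v'',j)$. Two facts drawn from Lemma~\ref{ClaimProjection} make this precise. The projection of any prefix of a play conforming to $\minstrategy$ is a prefix of a play of $\game$ conforming to $\minstrategy^*$ (here memorylessness of $\minstrategy^*$ is essential, since the projection erases the interior/exterior detours and the vetoes), and it carries the same total-payoff; consequently the weight accumulated when an interior vertex $(\interior,v'',j)$ is entered equals $\TP$ of the corresponding projection prefix, so the switching test is well defined, and whenever $\target$ is eventually entered through an edge $((\exterior,v'',j),\target)$ the resulting payoff equals that same accumulated weight, hence is $\leq m$ by construction.

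It remains to show that every play $\pi$ conforming to $\minstrategy$ from $(v,n)$ indeed reaches $\target$, which I expect to be the main obstacle. I would argue by contradiction: if $\pi$ avoids $\target$, then it performs infinitely many moves inside the copies, so its projection is an infinite play of $\game$ conforming to $\minstrategy^*$, and therefore $\liminf_k\TP(\proj(\pi)[k])\leq m$. For an integer-valued sequence this yields infinitely many prefixes of partial sum $\leq m$, hence infinitely many interior vertices at which $\minstrategy$ issues a stop request. At each such request $\MaxPl$ must leave the exterior vertex either towards $\target$ (accept), which is excluded since $\pi$ avoids $\target$, or along a veto edge $((\exterior,v'',j),(v'',j-1))$; but a veto strictly decreases the index $j$ and is available only when $j>1$. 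Since $j$ starts at $n$ and never increases, at most $n-1$ vetoes can occur along the whole of $\pi$, contradicting the fact that infinitely many requests are each vetoed. Hence $\pi$ reaches $\target$, necessarily via an accepted request at a prefix of weight $\leq m$, so $\MCR(\pi)\leq m$; as $\pi$ was arbitrary, $\sup_{\maxstrategy}\MCR(\outcomes((v,n),\maxstrategy,\minstrategy))\leq m$ and thus $\Val_{\game^n}(v,n)\leq m$.
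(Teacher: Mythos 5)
Your proof is correct and follows essentially the same route as the paper: fix a memoryless strategy of $\MinPl$ in $\game$ achieving value at most $m$, lift it to $\game^n$ by requesting a stop exactly when the accumulated weight is $\leq m$, and rule out target-avoiding plays by combining the bound on the number of vetoes with the fact that $\liminf_k\TP(\proj(\pi)[k])\leq m$ over the integers forces infinitely many stop requests. The paper phrases the final contradiction in the opposite direction (finitely many requests would force all late partial sums to exceed $m$), but the two arguments are interchangeable.
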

\begin{proof} 
  By hypothesis and using the memoryless determinacy of
  \cite{GimZie04}, there exists a memoryless strategy $\minstrategy$
  for $\MinPl$ in $\game$ such that
  $\Val_\game(v,\minstrategy) \leq m$. Let $\minstrategy^m$ be the
  strategy in $\game^n$ defined, for all finite play $\pi$, vertex
  $v'$ and $1\leq j\leq n$, by
  \begin{align*}
    \minstrategy^m(\pi (v',j)) &= (\interior,\minstrategy(\proj(\pi)v'),j)\,,\\
    \minstrategy^m(\pi(\interior,v',j))&=
    \begin{cases}
      (v',j) & \text{if }\TP(\pi(\interior,v',j))\geq m+1\,,\\
      (\exterior,v',j) & \text{if } \TP(\pi(\interior,v',j)) \leq m \,.
    \end{cases}
  \end{align*} 
  Intuitively $\minstrategy^m$ simulates $\minstrategy$, and asks to
  leave the copy when the current total-payoff is less than or equal
  to $m$. Notice that, by construction of $\minstrategy^m$,
  $\proj(\pi)$ conforms to $\minstrategy$, if $\pi$ conforms to
  $\minstrategy^m$.

  As a first step, if a play $\pi$ starting in $(v,n)$ and conforming
  to $\minstrategy^m$ encounters the target then its value is at most
  $m$. Indeed, it is of the form
  $\pi = \pi' (\interior,v',j) (\exterior,v',j) \target^\omega$, and
  since it conforms to $\minstrategy^m$ we have
  \[\MCR(\pi) = \TP(\pi' (\interior,v',j) (\exterior,v',j)
  \target) = \TP(\pi'(\interior,v',j)) \leq m.\]
  
  Then, assume, by contradiction, that there exists a play $\pi$
  starting in $(v,n)$ and conforming to $\minstrategy^m$, that does
  not encounter the target. Then, this means that $\MinPl$ does not ask
  $n+1$ times the ability to exit in $\pi$ (since on the $(n+1)$th
  time that we jump in an exterior vertex, $\MaxPl$ is forced to go to
  the target). In particular, there exists $0\leq j\leq n$ such that
  $\pi$ is of the form
  $\pi' (v_1,j) (\interior,v_2,j) (v_2,j) (\interior,v_3,j) \cdots
  (v_k,j)(\interior,v_k,\allowbreak j)\cdots$.
  Since for all $i$,
  $\minstrategy^m(\pi' (v_1,j) (\interior,v_2,j) \cdots \allowbreak
  (\interior, v_i,j))=(v_i,j)$,
  we have that
  $\TP(\pi'\allowbreak (v_1,j) (\interior,v_2,j) \cdots
  (\interior,v_i,j))\geq m+1$.
  Therefore, since any prefix of $\proj(\pi)$ is the projection of a
  prefix of $\pi$, Lemma~\ref{ClaimProjection} shows that
  $\TP(\proj(\pi))\geq m+1>m$, which raises a contradiction since
  $\proj(\pi)$ conforms to $\minstrategy$ and
  $\Val(v,\minstrategy)\leq m$. Hence every play that conforms to
  $\minstrategy^m$ encounters the target, and, hence, has value at
  most~$m$. This implies that $\Val_{\game^n}(v,n)\leq m$. 
\end{proof}

We now turn to the other comparison between $\Val_{\game^n}(v,n)$
and $\Val_\game(v)$. Since $\Val_\game(v)$ can be infinite in case the
target is not reachable, we have to be more careful. In particular, we
show that $\Val_{\game^n}(v,n)\geq \min\big(\Val_\game(v),(|\vertices|-1)  
W +1\big)$ holds for large values of $n$. In the following, we let $K=|\vertices|
  (2 (|\vertices|-1)  W +1)$.

\begin{lemma}\label{lem:lowerbound}
  For all $m\leq (|\vertices|-1)W+1$, $k\geq K$, and vertex $v$, if $\Val_\game(v) \geq m$
  then $\Val_{\game^k}(v,k)\geq m$.
\end{lemma}
\begin{proof} 
  By hypothesis and using the memoryless determinacy of
  \cite{GimZie04}, there exists a memoryless strategy $\maxstrategy$
  for $\MaxPl$ in $\game$ such that
  $\Val_\game(v,\maxstrategy) \geq m$. Let $\maxstrategy^m$ be the
  strategy in $\game^K$ defined, for all finite play $\pi$, vertex
  $v'$ and $1\leq j\leq n$, by:
  \begin{align*}
    \maxstrategy^m(\pi (v,j)) &= (\interior,\maxstrategy(\proj(\pi)v),j)\,,\\
    \maxstrategy^m(\pi(\exterior,v,j))&=
    \begin{cases}
      (v,j-1)  &\text{if }\TP(\pi)\leq m-1 \text{ and } j>1\,,\\
      \target &\text{otherwise}\,.
    \end{cases}
  \end{align*}

  Intuitively $\maxstrategy^m$ simulates $\maxstrategy$, and accepts
  to go to the target when the current total-payoff is greater than or
  equal to $m$.

  By construction of $\maxstrategy^m$, if $\pi$ conforms to
  $\maxstrategy^m$, then $\proj(\pi)$ conforms to $\maxstrategy$.
  From the structure of the weighted graph, we know that for every
  play $\pi$ of $\game^k$, there exists $1\leq j\leq k$ such that
  $\pi$ is of the form
  $\pi_k (\exterior,v_k,k)\pi_{k-1} (\exterior,v_{k-1},k-1) \cdots
  \pi_j (\exterior,v_j,j) \pi'$
  verifying that: there are no occurrences of exterior vertices in
  $\pi_k,\ldots,\pi_j,\pi'$; for all $\ell\leq j$, all vertices in
  $\pi_\ell$ belong to the $\ell$-th copy of $\game$; either
  $\pi'=\target^\omega$ or all vertices of $\pi'$ belong to the
  $(j+1)$th copy of $\game$ (in which case, $j<k$).

  We now show that, in $\game^k$, $\MCR(\pi) \geq m$ for all play
  $\pi$ starting in $(v,k)$ and conforming to $\maxstrategy^m$. There
  are three cases to consider.
  \begin{enumerate}
  \item If $\pi$ does not reach the target, then $\MCR(\pi)=+\infty
    \geq m$.
  \item If $\pi = \pi_k (\exterior,v_k,k) \cdots \pi_j
    (\exterior,v_j,j) \target^\omega$ and $j>1$ then,
    \[\maxstrategy^m (\pi_k (\exterior,v_0,k) \cdots \pi_j
    (\exterior,v_j,j))=\target\,.\] %
    Thus, using Lemma~\ref{ClaimProjection}, 
    \begin{align*}
      \MCR(\pi) &= \TP(\pi_k
                  (\exterior,v_k,k) \cdots \pi_j
                  (\exterior,v_j,j)\target)\\ 
                &=\TP(\proj(\pi_k (\exterior,v_k,k) \cdots
                  \pi_j (\exterior,v_j,j)\target)) \\ 
                &\geq \Value_\game(v,\maxstrategy)
                  \geq m\,.
    \end{align*}
  \item If
    $\pi = \pi_k (\exterior,v_k,k) \cdots \pi_1 (\exterior,v_1,1)
    \target^\omega$, assume by contradiction that
    \[\TP (\pi_k (\exterior,v_k,k) \cdots \pi_1)\leq m-1\,.\]
    Otherwise, we directly obtain $\MCR(\pi)\geq m$.  Let $v^\star$ be
    a vertex that occurs at least
    $N=\left\lceil K/|\vertices|\right\rceil = 2 (|\vertices|-1) W +1$
    times in the sequence $v_1,\ldots,v_k$: such a vertex exists,
    since otherwise $K\leq k \leq (N-1) |\vertices|$ which contradicts
    the fact that $(N-1) |\vertices|<K$. Let $j_1>\cdots>j_N$ be a
    sequence of indices such that $v_{j_i} = v^\star$ for all $i$.  We
    give a new decomposition of $\pi$:
    \[\pi = \pi'_1 (\exterior,v_{j_1},j_1) \cdots \pi'_N
    (\exterior,v_{j_N},j_N) \pi'_{N+1}\,.\]%
    Since $\pi$ conforms to $\maxstrategy^m$ and according to the
    assumption, we have that for all $i$,
    \[\TP(\pi'_1 (\exterior,v_{j_1},j_1) \cdots \pi'_i)\leq m-1\,.\] We
    consider two cases.
    \begin{enumerate}
    \item If there exists $\pi'_i$ such that $\TP(\pi'_i) \leq 0$
      then, let $\proj(\pi'_i) = u_1\cdots u_\ell$ with $u_1 = u_\ell
      = v^\star$, Since $\pi'_i$ conforms to $\maxstrategy^m$,
      $\proj(\pi'_i)$ conforms to $\maxstrategy$.  Therefore the play
      \[\widetilde\pi=\proj(\pi'_1 (\exterior,v_{j_1},j_1) \cdots \pi'_i
      (\exterior,v_{j_i},j_i))(u_1 \cdots u_{\ell-1})^\omega\]
      conforms to $\maxstrategy$. Furthermore, using again
      Lemma~\ref{ClaimProjection},
      \[\hspace{-5mm}\TP(\widetilde\pi) = \liminf_{n\rightarrow +\infty}
        \Big(\TP\big(\pi'_1 (\exterior,v_{j_1},i_1) \cdots \pi'_i
        (\exterior,v_{j_i},j_i)\big)  + n \TP(u_1\cdots
        u_\ell)\Big)\]
      and since $\TP(u_1\cdots u_\ell)= \TP(\pi'_i)\leq 0$, we have
      \[\TP(\widetilde\pi) \leq \TP(\pi'_1 (\exterior,v_{j_1},i_1) \cdots \pi'_i
      (\exterior,v_{j_i},j_i)) \leq m-1.\] Thus $\widetilde\pi$ is a
      play starting from $v$ that conforms to $\maxstrategy$ but whose
      total-payoff is strictly less than $m$, which raises a
      contradiction.
    \item If for all $\pi'_i$, $\TP(\pi'_i)\geq 1$ (notice that it is
      implied by $\TP(\pi'_i)>0$). From Lemma~\ref{LemmaBoundedValue},
      since $\Value_\game(v,\maxstrategy)\geq m\neq-\infty$, we know
      that $\TP(\proj(\pi'_0))\geq -(|\vertices|-1)  W$. From
      Lemma~\ref{ClaimProjection}, $\TP(\pi'_0)\geq -(|\vertices|-1) 
      W$. Therefore
      \begin{align*}
        \TP(\pi'_1 (\exterior,v_{j_1},i_1) \cdots \pi'_N) &\geq
        -(|\vertices|-1)  W + N \\ &=(|\vertices|-1)   W +1 \geq m
      \end{align*}
      which contradicts the assumption that
      \begin{align*}
        \TP(\pi'_1(\exterior,v_{j_1},i_1) \cdots \pi'_N) < m. 
      \end{align*}
    \end{enumerate}
  \end{enumerate}
  We have shown that $\MCR(\pi) \geq m$ for all play
  $\pi$ starting in $(v,k)$ and conforming to $\maxstrategy^m$, which implies
  $\Value_{\game^K}((v,k),\maxstrategy^m)\geq
  m$. 
\end{proof}

From the two previous lemmas, we are ready to relate precisely values
in $\game$ and $\game^k$.

\begin{proof}[Proof of Proposition~\ref{TrueTP2MCR}]
  Let $v\in\vertices$.
  \begin{itemize}
  \item If $\Val_\game(v)= -\infty$, then for all $m$,
    $\Val_\game(v)\leq m$. Thus, by Lemma~\ref{lem:upperbound},
    $\Val_{\game^K}((v,K)) \leq m$. Therefore $\Val_{\game^K}(v,K) =
    -\infty$.
  \item If $\Val_\game(v)= m \in [-(|\vertices|-1)  W,(|\vertices|-1) 
    W]$. Then, $m\leq \Val_\game(v)\geq m$. Thus, by
    Lemma~\ref{lem:upperbound} and \ref{lem:lowerbound}, $m\leq
    \Val_{\game^K}((v,K)) \geq m$.  Therefore $\Val_{\game^K}((v,K)) =
    m$.
  \item If $\Val_\game(v) = +\infty$, then
    $\Val_\game(v)\geq (|\vertices|-1)  W+1$.  Thus, by
    Lemma~\ref{lem:lowerbound},
    $\Val_{\game^K}((v,K)) \geq (|\vertices|-1)W+1$.  
  \end{itemize}
\end{proof}

\section{Value iteration algorithm for total-payoff
  games\label{app:VI-total-payoff}}

This section is devoted to the study of
Algorithm~\ref{algo:value-iter-TPO}, in particular in the proof of
Theorem~\ref{thm:VI-TP}.

We first define formally the game $\game_Y$ described informally on
page \pageref{GY}. To the original total-payoff game
$\game=\gameEx[\TP]$ and to every vector $Y\in\Zbar^\vertices$, we
associate the min-cost reachability game
$\game_Y=\tuple{\vertices',\edges_Y,\edgeweights_Y,\MCR[\{\target\}]}$
as follows. The sets of vertices are given by
\[\maxvertices' = \maxvertices \uplus 
\{ (\interior,v) \mid v\in \vertices\} \uplus \{ \target\} \quad
\text{and} \quad \minvertices'=\minvertices\,.\]
As in game $\game^j$, vertices of the form $(\interior,v)$ are called
\emph{interior vertices}. Edges are defined by
\begin{align*}
  \edges_Y &= \phantom{{}\uplus{}} \left\{ \big( (v , (\interior,v')
             \big) \mid (v,v')\in \edges \right\} \uplus \left\{ \big(
             (\interior,v) , v\big) \mid v\in \vertices \right\} \\
           &\phantom{{}={}}\uplus \left\{ \big( (\interior,v), \target \big)
             \mid v\in \vertices \land Y(v)\neq +\infty \right\} \uplus \left\{ ( \target, \target )
             \right\}
\end{align*}
while weights of edges are defined, for all $(v,v')\in \edges$, by
\begin{align*}
  \edgeweights_Y \big ( v , (\interior,v') \big) &=
  \edgeweights_Y(v,v')\,, & \edgeweights_Y \big ( (\interior,v),
  \target  \big) &= \max \big ( 0, Y(v) \big ) \,, \\
  \edgeweights_Y \big( (\interior,v) , v \big) &= 0. & & 
\end{align*}

It is easy to see that lines~\ref{line:begin} to~\ref{line:end} are a
rewriting of Algorithm~\ref{algo:value-iteration-RT} in the special
case of game $\game_Y$: in particular, neither the target vertex nor
interior vertices are explicit, but their behavior is taken into
account by the transformation performed in line~\ref{line:begin} and
the operators $\min$ used in the inner computation of
lines~\ref{line:minvertex} and \ref{line:maxvertex}. Hence, if we
define $\operatorBis(Y)(v)=\Value_{\game_Y}(v)$ for all
$v\in\vertices$, we can say that if inside the main loop, at
line~\ref{line:begin} the variable $\mathsf Y$ has value $Y$, then
after line~\ref{line:end}, it has value $\operatorBis(Y)$.

Notice that the game $\game_Y$ resembles a copy of $\game$ in the
game $\game^j$ of the previous section. More, precisely, from the
values $(\Value_{\game^j}(v,j))_{v\in\vertices}$ in the $j$th copy, we
can deduce the values in the $(j+1)$th copy by an application of
operator $\operatorBis$:
\[\big(\Value_{\game^{j+1}}(v,j+1)\big)_{v\in\vertices}=
\operatorBis\big((\Value_{\game^j}(v,j))_{v\in\vertices}\big)\,.\]
Although the $0$th copy is not defined, we abuse the notation and set
$\Value_{\game^0}(v,0)=-\infty$, which still conforms to the above
equality. Furthermore, due to the structure of the game $\game^j$
notice that for all $j\leq j'$, $\Value_{\game^j}(v,j)=
\Value_{\game^{j'}}(v,j)$.

Notice the absence of exterior vertices $(\exterior,v',j)$ in game
$\game_Y$, replaced by the computation of the maximum between $0$ and
$X(v')$ on the edge towards the target.  Before proving the
correctness of Algorithm~\ref{algo:value-iter-TPO}, we prove several
interesting properties of operator $\operatorBis$.

\begin{proposition}
  $\operatorBis$ is a monotonic operator.
\end{proposition}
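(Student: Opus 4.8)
The plan is to prove monotonicity of $\operatorBis$, i.e.\ that $Y\vleq Y'$ implies $\operatorBis(Y)\vleq\operatorBis(Y')$, by comparing the two min-cost reachability games $\game_Y$ and $\game_{Y'}$ directly. The first observation is that these games are built on \emph{the same} vertex set and differ only in the \emph{exit edges} $\big((\interior,v),\target\big)$: such an edge is present exactly when $Y(v)\neq+\infty$ (respectively $Y'(v)\neq+\infty$) and, when present, it carries weight $\max(0,Y(v))$ (respectively $\max(0,Y'(v))$). Crucially, at an interior vertex the choice between continuing and exiting to $\target$ is $\MinPl$'s, as witnessed by the inner $\min$ in Algorithm~\ref{algo:value-iter-TPO} (lines~\ref{line:minvertex}--\ref{line:maxvertex}); hence all the $Y$-dependence is concentrated in edges controlled by $\MinPl$. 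Every other edge --- the moves inside the copy of $\game$ and the self-loop on $\target$ --- is identical in $\game_Y$ and $\game_{Y'}$, and in particular $\MaxPl$ has exactly the same edges in both games.

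From $Y\vleq Y'$ two consequences follow, and the heart of the argument is that both push the value in the \emph{same} (upward) direction. On one hand, whenever an exit edge is present in both games its weight can only increase, since $\max(0,\cdot)$ is non-decreasing. On the other hand, if $Y'(v)=+\infty$ while $Y(v)\neq+\infty$, the exit edge disappears in $\game_{Y'}$; being a $\MinPl$ edge, its removal only \emph{restricts} $\MinPl$. Concretely, the set of $\MinPl$ edges of $\game_{Y'}$ is contained in that of $\game_Y$ (because $Y'(v)\neq+\infty$ forces $Y(v)\neq+\infty$), so every $\MinPl$ strategy $\minstrategy$ legal in $\game_{Y'}$ is also legal in $\game_Y$. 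Fixing such a $\minstrategy$ together with an arbitrary $\maxstrategy$, the outcome $\outcomes(v,\maxstrategy,\minstrategy)$ is the \emph{same} sequence of vertices in both games; if it never meets $\target$ its payoff is $+\infty$ in both, and if it exits at some $(\interior,v)$ the two payoffs differ only by the exit weight, so that $\MCR_{\game_Y}(\outcomes(v,\maxstrategy,\minstrategy))\leq\MCR_{\game_{Y'}}(\outcomes(v,\maxstrategy,\minstrategy))$ using $\max(0,Y(v))\leq\max(0,Y'(v))$.

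I would then conclude by the usual manipulation of the $\inf\!\sup$ defining the value. Taking the supremum over $\maxstrategy$ gives, for every $\minstrategy$ legal in $\game_{Y'}$, the chain $\Value_{\game_Y}(v)\leq\sup_{\maxstrategy}\MCR_{\game_Y}(\outcomes(v,\maxstrategy,\minstrategy))\leq\sup_{\maxstrategy}\MCR_{\game_{Y'}}(\outcomes(v,\maxstrategy,\minstrategy))$, where the first inequality holds because $\minstrategy$ is merely one particular $\MinPl$ strategy of $\game_Y$ and the second is the per-play comparison above. Taking the infimum over all $\minstrategy$ of $\game_{Y'}$ and invoking determinacy of MCR games (so that $\uppervalue_{\game_{Y'}}=\Value_{\game_{Y'}}$) yields $\Value_{\game_Y}(v)\leq\Value_{\game_{Y'}}(v)$, i.e.\ $\operatorBis(Y)(v)\leq\operatorBis(Y')(v)$, for every $v\in\vertices$.

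The main obstacle is conceptual rather than computational: $Y$ enters $\game_Y$ in two distinct ways --- through the \emph{weight} $\max(0,Y(v))$ and through the \emph{presence} of the exit edge --- and one must check that both effects are monotone in the same direction. This succeeds precisely because the exit edges belong to the minimising player, so that both raising their cost and deleting them can only increase the value; treating the $+\infty$ (edge-removal) case uniformly with the weight-increase case is the one point deserving care. An equally short alternative avoids strategies altogether: by Corollary~\ref{lem:min-max-charact-of-value}, $\Value_{\game_Y}$ is the greatest fixed point of the Bellman operator of $\game_Y$; these operators for $\game_Y$ and $\game_{Y'}$ coincide off interior vertices, and at $(\interior,v)$ the update $x\mapsto\min\!\big(x(v),\max(0,Y(v))\big)$ (with the exit term absent, i.e.\ effectively $+\infty$, when $Y(v)=+\infty$) is non-decreasing in $Y(v)$; hence the operator of $\game_Y$ is pointwise dominated by that of $\game_{Y'}$, and monotonicity of the greatest-fixed-point map gives $\Value_{\game_Y}\vleq\Value_{\game_{Y'}}$.
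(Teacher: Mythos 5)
Your proof is correct, but your primary argument is genuinely different from the paper's. The paper works entirely at the level of operators: it introduces the Bellman operator $\operator_Y$ of the MCR game $\game_Y$, observes that $Y\vleq Y'$ implies $\edgeweights_Y\leq\edgeweights_{Y'}$ edgewise and hence $\operator_Y(X)\vleq\operator_{Y'}(X)$ for every $X$, propagates this by induction along the Kleene iterates from the top vector $X_0=+\infty$, and concludes because the two value vectors are the greatest fixed points of the respective operators (Corollary~\ref{lem:min-max-charact-of-value}), obtained as limits of those iterates. That is exactly the ``equally short alternative'' you sketch in your last paragraph. Your main argument instead compares the games directly: the $\MinPl$-strategy sets are nested ($\edges_{Y'}\subseteq\edges_Y$ on exit edges), outcomes coincide as vertex sequences, and payoffs compare per play, after which the $\inf\sup$ manipulation plus determinacy finishes the job. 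What your route buys is that it isolates the one genuinely delicate point --- that the two effects of increasing $Y$ (raising the exit weight and possibly deleting the exit edge) are both favourable to $\MaxPl$ precisely because the exit edges are owned by the minimiser --- whereas the paper's one-line claim $\edgeweights_Y\leq\edgeweights_{Y'}$ silently treats a deleted edge as weight $+\infty$, which is only the right convention at a $\min$-vertex. One caveat: the appendix's formal definition of $\game_Y$ places the interior vertices $(\interior,v)$ in $\maxvertices'$, which would contradict the premise both proofs rely on; this is an inconsistency in the paper (the figure, the inner $\min$ in Algorithm~\ref{algo:value-iter-TPO}, and the correspondence with $\game^n$, where interior vertices belong to $\MinPl$, all show the intended ownership), so your reading is the right one, but it is worth stating explicitly that you are taking $(\interior,v)\in\minvertices'$.
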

\begin{proof}
  For every vector $Y\in\Zbar^\vertices$, let $\operator_Y$ be the
  operator associated with the min-cost reachability game as
  defined in Section~\ref{sec:reachability-objectives}, i.e., for all
  $X\in \Zbar^{\vertices'}$, and for all $v_1\in \vertices'$
  \[\operator_Y(X)(v_1)=
  \begin{cases}
    \displaystyle{\max_{v_2\in \edges_Y(v_1)}}
    \big(\edgeweights_Y(v_1,v_2)+X(v_2)\big) &
    \text{if } v\in \maxvertices'\setminus\{\target\}\\
    \displaystyle{\min_{v_2\in \edges_Y(v_1)}}
    \big(\edgeweights_Y(v_1,v_2)+X(v_2)\big) &
    \text{if } v\in \minvertices\\
    0 & \text{if } v_1=\target\,.
  \end{cases}\]%
  We know from Corollary~\ref{lem:min-max-charact-of-value} that
  $\Value_{\game_Y}$ is the greatest fixed point of
  $\operator_Y$. Consider now two vectors $Y,Y'\in\Zbar^\vertices$
  such that $Y\vleq Y'$.

  First, notice that for all $X\in\Zbar^{\vertices'}$:
  \begin{equation}\label{eqFixPointFxFxp}
    \operator_Y(X)\vleq \operator_{Y'}(X)\,.
  \end{equation}
  Indeed, to get the result it suffices to notice that for all
  $v_1,v_2\in \vertices'$, $\edgeweights_Y(v_1,v_2)\leq
  \edgeweights_{Y'}(v_1,v_2)$.

  Consider then the vector $X_0$ defined by $X_0(v_1)=+\infty$ for all
  $v_1\in \vertices'$. From~\eqref{eqFixPointFxFxp}, we have that
  $\operator_{Y}(X_0)\vleq \operator_{Y'}(X_0)$, then a simple
  induction shows that for all $i$, $\operator^i_{Y}(X_0)\vleq
  \operator^i_{Y'}(X_0)$. Thus, since $\Value_{\game_Y}$
  (respectively, $\Value_{\game_{Y'}}$) is the greatest fixed point of
  $\operator_{Y}$ (respectively, $\operator_{Y'}$), we have
  $\Value_{\game_Y}\vleq\Value_{\game_{Y'}}$.  As a consequence
  $\operatorBis(Y)\vleq \operatorBis(Y')$.
\end{proof}

Notice that $\operatorBis$ may not be Scott-continuous, as shown in
the following example.

\begin{example}
  \begin{figure}[tbp]\centering
    \begin{tikzpicture}[>=latex]
      \begin{scope}
        \node[player1] (v) at (0,0) {\makebox[0mm][c]{$v$}};
        \draw[->] (v) edge[loop below] node [midway, below] {$-1$} (v);
        
        \node at (0,-1.7) {$\game$};
        
      \end{scope}
      
      \begin{scope}[xshift = 3cm]
        \node[player2] (vBis) at (0,0) {\makebox[0mm][c]{$v$}};
        \node[player2] (inv) at (0,-1) {$\interior,v$};
        \node[player2] (tg) at (2,-1) {\makebox[0mm][c]{$\target$}};
        
        \draw[->] (vBis) to[bend right] node[midway, left] {$-1$} (inv);
        \draw[->] (inv) to[bend right] node[midway, right] {$0$} (vBis);
        \draw[->] (inv) to node[above] {$Y$} (tg);
        
        \node at (0,-1.7) {$\game_Y$};
      \end{scope}
      
    \end{tikzpicture}
    \caption{The games $\game$ and $\game_Y$.}
    \label{fig:notContinuous}
  \end{figure}
  Recall that, in our setting, a Scott-continuous operator is a
  mapping $F: \Zbar^\vertices \rightarrow \Zbar^\vertices$ such that
  for any sequence of vectors $(x_i)_{i\geq 0}$ having a limit
  $x_{\omega}$, the sequence $(F(x_i))_{i\geq 0}$ has a limit equal to
  $F(x_{\omega})$.

  We present a total-payoff game whose associated operator
  $\operatorBis$ is not continuous.  Let $\game$ be the total-payoff
  game containing one vertex $v$ of $\MinPl$ and a self loop of weight
  $-1$ (as depicted in Fig.~\ref{fig:notContinuous}).  For all
  $Y\in \Z$, in the min-cost reachability game $\game_Y$, $v$ has
  value $-\infty$, indeed one can take the loop an arbitrary number of
  times before reaching the target, ensuring a value arbitrary
  low. Therefore, if we take an increasing sequence $(Y_i)_{i\geq 0}$
  of integers, $\operatorBis(Y_i)(v)=-\infty$ for all $i$, thus the
  limit of the sequence $(\operatorBis(Y_i))_{i\geq 0}$ is $-\infty$.
  However, the limit of the sequence $(Y_i)_{i\geq 0}$ is $+\infty$
  and $\operatorBis(+\infty)(v)=+\infty$, since the target is not
  reachable anymore (in case the weight of an edge would be $+\infty$,
  it is removed in the definition of $\edges_Y$). Thus, $\operatorBis$
  is not Scott-continuous. \qed
\end{example}

In particular, we may not use the Kleene sequence, as we have done for
min-cost reachability games, to conclude to the correctness of our
algorithm. Anyhow, we will show that the sequence $(Y^j)_{j\geq 0}$
indeed converges towards 
the vector of values of the total-payoff game. We first show that this
vector is a pre-fixed point of $\operatorBis$
starting with a technical lemma that is useful in the subsequent
proof.

\begin{lemma}\label{PropCoolDesTP}
  Let $\minstrategy$ be a strategy for $\MinPl$ in $\game$, and
  $\pi=v_1\cdots v_i$ a finite play that conforms to
  $\minstrategy$. Then:
  \[ \TP(v_1\cdots v_i) + \Val_\game(v_i) \leq
  \Val_\game(v_1,\minstrategy)\,.\]
\end{lemma}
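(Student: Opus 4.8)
The plan is to prove the inequality by exhibiting, from $v_1$, a single strategy for $\MaxPl$ against $\minstrategy$ that already secures a total-payoff of at least $\TP(v_1\cdots v_i)+\Val_\game(v_i)$; since $\Val_\game(v_1,\minstrategy)=\sup_{\maxstrategy}\TP(\outcomes(v_1,\maxstrategy,\minstrategy))$ is the supremum over all such strategies, this immediately yields the claim. The intuition is that $\MaxPl$ can first \emph{force the prefix} $v_1\cdots v_i$ to be played, which is possible precisely because $\pi$ conforms to $\minstrategy$ (at every $\MinPl$-vertex of the prefix $\minstrategy$ already produces the next vertex of $\pi$, and at every $\MaxPl$-vertex it suffices for $\MaxPl$ to copy the move of $\pi$), and then play, from $v_i$, a strategy securing $\Val_\game(v_i)$.

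First I would fix an optimal strategy $\maxstrategy^{*}$ for $\MaxPl$ from $v_i$, which exists by the existence of optimal strategies in total-payoff games; by optimality, every play starting in $v_i$ and conforming to $\maxstrategy^{*}$ has total-payoff at least $\lowervalue_\game(v_i)=\Val_\game(v_i)$. I then build the glued strategy $\maxstrategy$ from $v_1$: on a finite play that is a strict prefix of $v_1\cdots v_i$ ending in a $\MaxPl$-vertex, $\maxstrategy$ plays the corresponding next vertex of $\pi$; on a finite play of the form $v_1\cdots v_i\rho$ ending in a $\MaxPl$-vertex, $\maxstrategy$ consults $\maxstrategy^{*}$ on the suffix $v_i\rho$ (on all other finite plays, which cannot occur in the outcome against $\minstrategy$, $\maxstrategy$ plays arbitrarily). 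Because both players conform to $\pi$ along the prefix, the outcome satisfies $\outcomes(v_1,\maxstrategy,\minstrategy)=v_1\cdots v_{i-1}\cdot\rho^{*}$, where $\rho^{*}$ is the play from $v_i$ obtained by running $\maxstrategy^{*}$ against the residual behaviour of $\minstrategy$ after $\pi$; in particular $\rho^{*}$ conforms to $\maxstrategy^{*}$, hence $\TP(\rho^{*})\geq\Val_\game(v_i)$.

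It then remains to split the total-payoff additively. Writing $\rho^{*}=v_i w_1 w_2\cdots$, for every $n$ the prefix of length $(i-1)+n$ of the outcome is $v_1\cdots v_i w_1\cdots w_n$, and $\TP(v_1\cdots v_i w_1\cdots w_n)=\TP(v_1\cdots v_i)+\TP(v_i w_1\cdots w_n)$. Since $\TP(v_1\cdots v_i)$ is a fixed finite integer and discarding the finitely many shorter prefixes does not change the inferior limit, taking $\liminf$ gives $\TP(\outcomes(v_1,\maxstrategy,\minstrategy))=\TP(v_1\cdots v_i)+\TP(\rho^{*})\geq\TP(v_1\cdots v_i)+\Val_\game(v_i)$, and combining with $\Val_\game(v_1,\minstrategy)\geq\TP(\outcomes(v_1,\maxstrategy,\minstrategy))$ concludes.

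I expect the main obstacle to be the bookkeeping around the gluing of strategies, and, more importantly, making the additive splitting of the $\liminf$-based payoff rigorous when $\Val_\game(v_i)\in\{-\infty,+\infty\}$: when $\Val_\game(v_i)=-\infty$ the claimed inequality is vacuous, while when $\Val_\game(v_i)=+\infty$ one argues that $\rho^{*}$, and hence the whole outcome, has payoff $+\infty$ (using the convention $c+\infty=\infty$ for finite $c$), so that $\Val_\game(v_1,\minstrategy)=+\infty$ as well. Checking that these conventions are applied consistently, and that the forced-prefix construction is well defined as a genuine $\MaxPl$ strategy, is the only delicate point; the rest is routine.
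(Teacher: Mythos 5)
Your proposal is correct and follows essentially the same route as the paper: fix an optimal $\MaxPl$ strategy from $v_i$, play it against (the residual of) $\minstrategy$ after the prefix, and use the additivity of the total-payoff over the concatenation, bounding the result by $\Val_\game(v_1,\minstrategy)$ since the whole play conforms to $\minstrategy$. Your version is in fact slightly more careful than the paper's, which glosses over the gluing of the prefix with the continuation and the treatment of infinite values.
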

\begin{proof}
  Let $\maxstrategy$ be an optimal strategy for $\MinPl$ and
  $v_i v_{i+1} v_{i+2}\cdots$ be the play
  $\outcomes(v_i,\maxstrategy,\minstrategy)$. Since $\maxstrategy$ is
  optimal, $\TP(v_i v_{i+1} v_{i+2}\cdots) \geq \Val_\game(v_i)$.
  Furthermore notice that $v_1 \cdots v_i v_{i+1} \cdots$ conforms to
  the strategy $\minstrategy$, therefore
  $\TP (v_1 v_1 \cdots v_i v_{i+1} \cdots) \leq
  \Val_\game(v_1,\minstrategy)$. Thus:
  \begin{align*}
    \TP(v_1\cdots v_i) + \Val_\game(v_i) & \leq \TP(v_1\cdots v_i) +
    \TP(v_i v_{i+1} \cdots) \\
    & \leq  \TP(v_1 v_1 \cdots v_i v_{i+1} \cdots) \leq
    \Val_\game(v_1,\minstrategy)\,.
  \end{align*}
\end{proof}

\begin{lemma}\label{lem:pre-fixed-point}
  $\Val_\game$ is a pre-fixed point of $\operatorBis$, i.e.,
  $\operatorBis(\Val_\game) \vleq \Val_\game$.
\end{lemma}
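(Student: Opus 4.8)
The plan is to prove the inequality vertex by vertex by exhibiting, for each $v\in\vertices$, a strategy of $\MinPl$ in the min-cost reachability game $\game_{\Val_\game}$ that guarantees a payoff at most $\Val_\game(v)$; since $\operatorBis(\Val_\game)(v)=\Value_{\game_{\Val_\game}}(v)$ is the (upper) value of that determined game, any such strategy witnesses $\operatorBis(\Val_\game)(v)\leq \Val_\game(v)$. The case $\Val_\game(v)=+\infty$ is immediate. For the two remaining cases I fix an integer $m$ together with a (not necessarily memoryless) strategy $\minstrategy_m$ of $\MinPl$ in $\game$ such that $\Val_\game(v,\minstrategy_m)\leq m$: such a pair exists with $m=\Val_\game(v)$ when the value is finite, taking an optimal memoryless strategy (which exists by~\cite{GimZie04}), and for \emph{every} $m\in\Z$ when $\Val_\game(v)=-\infty$. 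Proving $\operatorBis(\Val_\game)(v)\leq m$ for all such $m$ then yields the finite case directly and, since the infimum over $m$ is $-\infty$, the case $\Val_\game(v)=-\infty$ as well.

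The strategy $\tildeminstrategy$ I would use in $\game_{\Val_\game}$ simulates $\minstrategy_m$ on the original vertices and, at the interior vertices where $\MinPl$ may leave the game, postpones leaving until the accumulated weight is small enough. Writing a maximal run as an alternation $v_0,(\interior,v_1),v_1,(\interior,v_2),\dots$ whose projection $v_0v_1v_2\cdots$ conforms to $\minstrategy_m$, the plan is to leave at the first interior vertex $(\interior,v_i)$ with $i\geq 1$ such that $\TP(v_0\cdots v_i)\leq m$. Two facts drive the argument. First, leaving there is always possible and costs exactly $\TP(v_0\cdots v_i)+\max(0,\Val_\game(v_i))$, because the weight accumulated upon reaching $(\interior,v_i)$ is $\TP(v_0\cdots v_i)$ and the exit edge to $\target$ has weight $\max(0,\Val_\game(v_i))$; this exit edge is present since no vertex of value $+\infty$ can lie on a run of $\minstrategy_m$ (otherwise $\MaxPl$ could steer to it and force payoff $+\infty$, contradicting $\Val_\game(v,\minstrategy_m)\leq m$). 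Second, such an index $i\geq 1$ is always reached: if $\MinPl$ played $\minstrategy_m$ forever against a fixed opponent strategy, the resulting infinite play would have total-payoff $\liminf_k\TP(v_0\cdots v_k)\leq \Val_\game(v,\minstrategy_m)\leq m$, and as the partial sums are integers this forces infinitely many prefixes with $\TP(v_0\cdots v_k)\leq m$.

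It then remains to bound the exit cost $\TP(v_0\cdots v_i)+\max(0,\Val_\game(v_i))$ by $m$. If $\Val_\game(v_i)\leq 0$ the second summand vanishes and the bound is the defining property $\TP(v_0\cdots v_i)\leq m$ of the leaving index. If $\Val_\game(v_i)>0$ the summand equals $\Val_\game(v_i)$, and here I would invoke Lemma~\ref{PropCoolDesTP} on the prefix $v_0\cdots v_i$ (which conforms to $\minstrategy_m$): it gives precisely $\TP(v_0\cdots v_i)+\Val_\game(v_i)\leq \Val_\game(v,\minstrategy_m)\leq m$. In both cases the play reaches $\target$ with cost at most $m$, so $\tildeminstrategy$ secures $\sup_{\maxstrategy}\MCR(\outcomes(v,\maxstrategy,\tildeminstrategy))\leq m$, whence $\operatorBis(\Val_\game)(v)\leq m$ in all cases.

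I expect the main obstacle to be the tension between the two competing requirements on the leaving index: it must be chosen so that the target is genuinely reached (ruling out the payoff $+\infty$), yet the accumulated weight must already have dropped to at most $m$. This is exactly where the $\liminf$ definition of the total-payoff is essential—the fact that the partial sums return infinitely often below the value is what guarantees a suitable leaving point—while Lemma~\ref{PropCoolDesTP} is what neutralises the otherwise harmful $\max(0,\cdot)$ contribution of the exit edge. Some care is also needed to justify that runs of $\minstrategy_m$ avoid $+\infty$-valued vertices, so that the exit edges actually used exist in $\game_{\Val_\game}$.
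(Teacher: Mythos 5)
Your proposal is correct and follows essentially the same route as the paper's proof: simulate a $\MinPl$ strategy of value at most $m$ inside $\game_{\Val_\game}$, exit to $\target$ once the partial sum has dropped to $m$, use the $\liminf$ definition to guarantee such an exit point exists, and invoke Lemma~\ref{PropCoolDesTP} to absorb the $\max(0,\Val_\game(v_i))$ exit cost. The only differences are presentational (the paper phrases the exit condition with the exit-edge weight included and argues by contradiction, and it glosses over the existence of the exit edges, which you justify explicitly).
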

\begin{proof}
  To ease the notations, we denote $\Val_\game$ by $Y^\star$ in this
  proof.  To prove this lemma, we just have to show that for all
  $v_1\in \vertices$, the value of $v_1$ in the min-cost reachability
  game $\game_{Y^\star}$ is at most its value in the original
  total-payoff game $\game$, i.e.,
  \[\operatorBis(Y^\star)(v_1)=\Value_{\game_{Y^\star}}(v_1) \leq
  Y^\star(v_1)\,.\]

  Let $\minstrategy$ be a memoryless strategy in $\game$ such that
  $\Val_\game(v_1,\minstrategy)\leq m$ for some
  $m \in \Z \uplus\{+\infty\}$. And let $\minstrategy^m$ be a strategy
  in $\game_X$ defined for all finite play $\pi v'$ with
  $v'\in \minvertices$ by
  $\minstrategy^m(\pi v') = (\interior, \minstrategy(v'))$ and for all
  finite play $\pi (\interior,v')$,
  \[ \minstrategy^m(\interior,v') = \begin{cases}
    \target & \text{if } \TP(\pi (\interior,v') \target)\leq m \\
    v & \text{otherwise}\,.
  \end{cases}\]

  Notice that, by construction, all plays starting in $v_1$,
  conforming to $\minstrategy^m$ and that reach the target have value at
  most $m$. Assume by contradiction that there exists a play $v_1
  (\interior,v_2) v_2 (\interior,v_3) \cdots\in
  \outcomes(v_1,\minstrategy^m)$ that never reaches $\target$. In
  particular, for all~$i$, $\TP(v_1 (\interior,v_2) \cdots
  (\interior,v_i) \target)>m$.

  Again by construction, $v_1 v_2 \cdots$ is a play in $\game$ that
  conforms to $\minstrategy$ and for all~$i$,
  $\TP(v_1 (\interior,v_2) \cdots \allowbreak (\interior,v_i))=
  \TP(v_1\cdots v_i)$.
  \begin{itemize}
  \item If there exists $i\geq 2$ such that
    $Y^\star(v_i)=\Val_\game(v_i)\geq 0$, then
    \begin{align*}
      \TP(v_1 (\interior,v_2) \cdots (\interior,v_i) \target) & = \TP(
      v_1 (\interior,v_2) \cdots (\interior,v_i)) +
      \edgeweights_X((\interior,v_i), \target)\\
      & = \TP(v_1\cdots v_i) + \max(0,Y^\star(v_i))\\
      & = \TP(v_1\cdots v_i) +Y^\star(v_i) \\
      & \leq \Val(v_1,\minstrategy) \qquad\qquad \text{(from
        Lemma~\ref{PropCoolDesTP})}\\ 
      & \leq m
    \end{align*}
    which raises a contradiction.
  \item If for all $i\geq 2$, $Y^\star(v_i)=\Val_\game(v_i)<0$ then
    for all $i\geq 2$,
    \[\TP(v_1 (\interior,v_2) \cdots (\interior,\allowbreak v_i)
    \target) = \TP(v_1 (\interior,v_2) \cdots
    (\interior,v_i))=\TP(v_1\cdots v_i)>m\,.\]
    Thus $\TP(v_1 v_2 \cdots)>m$, which contradicts the fact that
    $v_1v_2\cdots$ conforms to $\minstrategy$ and
    $\Val_\game(v_1,\minstrategy)\leq m$.
  \end{itemize}
  Thus $\Value_{\game_{Y^\star}}(v_1,\minstrategy) \leq m$. As a consequence,
  $\Value_{\game_{Y^\star}}(v_1) \leq \Val_\game(v_1)$.
\end{proof}

\begin{remark}
  Even if it is not necessary for the proof of
  Theorem~\ref{thm:VI-TP}, we can show that $\Value_\game$ is the
  least pre-fixed point of $\operatorBis$. Notice that, by
  monotonicity of $\operatorBis$, this directly implies that
  $\Val_\game$ is the least fixed point of $\operatorBis$. The proof
  that $\Value_\game$ is the least pre-fixed point of $\operatorBis$
  amounts to better understand the convergence of the sequence
  $(\Value_{\game^j}(v,j))_{j\geq 0}$ (remember that we set
  $\Value_{\game^0}(v,0)=-\infty$ for all $v$). Indeed,
  Proposition~\ref{TrueTP2MCR} already shows that for vertices $v$
  such that $\Value_\game(v)<+\infty$,
  $(\Value_{\game^j}(v,j))_{j\geq 1}$ converges towards
  $\Value_\game(v)$. It is also the case for vertices $v$ such that
  $\Value_\game(v)=+\infty$ as we show in Lemma~\ref{lem:limit}
  below. Then, consider a pre-fixed point $Y$ of $\operatorBis$, i.e.,
  $\operatorBis(Y)\vleq Y$. Since
  $\Value_{\game^0}(v,0)=-\infty\leq Y(v)$ and $\operatorBis$ is
  monotonous, we prove by immediate induction that
  $\Value_{\game^j}(v,j)\leq Y(v)$ for all $v$ and $j\geq 0$: indeed,
  if $\Value_{\game^j}(v,j)\leq Y(v)$ for all $v$, we have
  \[(\Value_{\game^{j+1}}(v,j+1))_{v\in
    V}=\operatorBis((\Value_{\game^j}(v,j))_{v\in V})\vleq
  \operatorBis(Y)\vleq Y\,.\]
  This implies that $\Value_{\game}\vleq Y$, showing that
  $\Value_{\game}$ is indeed the least pre-fixed point of
  $\operatorBis$, and hence the least fixed point of $\operatorBis$,
  by the above reasoning.
\end{remark}

Before continuing the proof of Theorem~\ref{thm:VI-TP}, we show the
result used in the previous remark.

\begin{lemma}\label{lem:limit}
  Let $v\in \vertices$ such that $\Val_\game(v)=+\infty$, and
  $\maxstrategy$ a memoryless strategy for $\MaxPl$ in $\game$ such
  that $\Val_\game(v,\maxstrategy)=+\infty$. Then the following holds:
  \begin{enumerate}
  \item[$(i)$] For every finite play $v_1 \cdots v_k$ conforming to
    $\maxstrategy$ starting in $v_1=v$, if there exists $i<j$ such
    that $v_i=v_j$ then $\TP(v_i\cdots v_j) \geq 1$.
  \item[$(ii)$] For every $m\in \N$, $k\geq m|V|+1$ and
    $v_1 \cdots v_k$ a finite play conforming to $\maxstrategy$ and
    starting in $v_1=v$, $\TP( v_1 \cdots v_k) \geq m-(|V|-1)W$.
  \item[$(iii)$] For all $m\in \N$ and $k\geq (m+(|V|-1)W) |V|+1$,
    $\Val_{\game^{k}}(v,k)\geq m$.
  \item[$(iv)$] $\lim_{j\to \infty} \Val_{\game^{j}}(v,j)= +\infty$.
  \end{enumerate}
\end{lemma}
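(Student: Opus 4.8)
The plan is to prove the four items in order, each relying on the previous ones, with the third being the real work and the fourth being a free consequence.

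For $(i)$ I would argue by contradiction. Suppose a finite play $v_1\cdots v_k$ conforming to $\maxstrategy$ contains a cycle $v_i\cdots v_j$ (with $v_i=v_j$, $i<j$) of weight $\TP(v_i\cdots v_j)\leq 0$. Since $\maxstrategy$ is memoryless, the infinite play $\pi=v_1\cdots v_i(v_{i+1}\cdots v_j)^\omega$ still conforms to $\maxstrategy$: every $\MaxPl$ move along the repeated block is exactly the one prescribed by $\maxstrategy$ at the corresponding vertex. As the weight of one turn around the cycle is $\leq 0$, the partial sums $\TP(\pi[n])$ do not tend to $+\infty$, so $\TP(\pi)=\liminf_n \TP(\pi[n])<+\infty$. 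Choosing a $\minstrategy$ that makes the $\MinPl$ choices of $\pi$, we get $\pi=\outcomes(v,\maxstrategy,\minstrategy)$, whence $\Val_\game(v,\maxstrategy)\leq \TP(\pi)<+\infty$, contradicting the hypothesis. Hence $\TP(v_i\cdots v_j)>0$, and integrality of the weights gives $\TP(v_i\cdots v_j)\geq 1$.

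For $(ii)$ I would induct on $m$. The base case $m=0$ is exactly Lemma~\ref{LemmaBoundedValue} applied to $\maxstrategy$ (whose value $+\infty$ is $\neq-\infty$), giving $\TP(v_1\cdots v_k)\geq -(|\vertices|-1)W$ for every conforming finite play. For the step, take $k\geq (m+1)|\vertices|+1$; among the first $|\vertices|+1$ vertices there is a repetition $v_i=v_j$ with $j-i\leq|\vertices|$. Deleting this cycle yields a conforming play $v_1\cdots v_i v_{j+1}\cdots v_k$ of length $\geq k-|\vertices|\geq m|\vertices|+1$ (conformance is preserved because $\maxstrategy$ is memoryless and $v_i=v_j$), to which the induction hypothesis applies; reinserting the cycle adds its weight, which is $\geq 1$ by $(i)$, yielding the bound for $m+1$.

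The crux is $(iii)$. I would exhibit a $\MaxPl$ strategy $\maxstrategy'$ in $\game^k$ that simulates $\maxstrategy$ inside each copy (playing $(\interior,\maxstrategy(v),j)$ from $(v,j)$) and \emph{always refuses to stop}, i.e.\ moves $(\exterior,v,j)\to(v,j-1)$ whenever $j>1$. Against any $\MinPl$ strategy the resulting play $\pi$ either never reaches $\target$, so $\MCR(\pi)=+\infty\geq m$, or reaches $\target$; but since $\MaxPl$ never accepts voluntarily, this can only occur through the forced move at copy $1$, forcing a descent through all $k$ copies. As at least one $\game$-edge is traversed in every visited copy, the projection $\proj(\pi)$ is a finite play of $\game$ that conforms to $\maxstrategy$, starts at $v$, and has length at least $k\geq (m+(|\vertices|-1)W)|\vertices|+1$. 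By Lemma~\ref{ClaimProjection}, $\MCR(\pi)=\TP(\pi)=\TP(\proj(\pi))$, and applying $(ii)$ with parameter $m+(|\vertices|-1)W$ gives $\TP(\proj(\pi))\geq m$. Thus $\maxstrategy'$ secures at least $m$ from $(v,k)$, so $\Val_{\game^k}(v,k)\geq m$. Finally $(iv)$ is immediate: $(iii)$ says that for every $m$ all sufficiently large $k$ satisfy $\Val_{\game^k}(v,k)\geq m$, which is precisely $\lim_{j\to\infty}\Val_{\game^j}(v,j)=+\infty$. The main obstacle I anticipate is the bookkeeping in $(iii)$: checking that the refusing strategy really forces a descent through every copy, that each copy contributes at least one $\game$-edge so that $|\proj(\pi)|\geq k$, and that $\proj(\pi)$ conforms to $\maxstrategy$ so that $(ii)$ can be invoked.
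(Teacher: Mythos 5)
Your proof is correct and follows essentially the same route as the paper's: pumping a nonpositive cycle to contradict $\Val_\game(v,\maxstrategy)=+\infty$ for $(i)$, cycle-removal induction on $m$ using $(i)$ for $(ii)$, the always-refusing strategy $\maxstrategy'$ in $\game^k$ combined with the projection lemma and $(ii)$ for $(iii)$, and $(iv)$ as an immediate consequence. The only cosmetic difference is that you invoke Lemma~\ref{LemmaBoundedValue} explicitly for the base case of $(ii)$, which the paper dismisses as ``straightforward''.
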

\begin{proof}
  We prove $(i)$ by contradiction. Therefore, assume that
  $\TP(v_i\cdots v_j) \leq 0$, then
  $\pi=v_1 \cdots v_{i-1} (v_i \cdots v_{j-1})^\omega$ comforms to
  $\maxstrategy$ and $\TP(\pi)\leq \TP(v_1 \cdots v_{i-1})<+\infty$,
  which contradicts the fact that
  $\Val_\game(v,\maxstrategy)=+\infty$.

  We prove $(ii)$ by induction on $m$. The base case is
  straightforward. For the inductive case, let $m>0$, $k\geq m|V|+1$
  and $v_1 \cdots v_k$ a finite play conforming to $\maxstrategy$ and
  starting in $v_1=v$. Since $k\geq m|V|+1 \geq |V|+1$ there exists
  $i<j\leq |V|+1$ such that $v_i=v_j$. Thus:
  \begin{align*}
    \TP(v_1 \cdots v_k) & = \TP(v_1\cdots v_i)+ \TP(v_i\cdots v_j)+\TP(v_j\cdots v_k)\\
                        & =\TP(v_1\cdots v_i v_{j+1}\cdots v_k)+ \TP(v_i\cdots v_j)\\
                        & \geq \TP(v_1\cdots v_i v_{j+1}\cdots v_k)+
                          1\,. && \text{from $(i)$}
  \end{align*}
  By induction hypothesis, as $v_1\cdots v_i v_{j+1}\cdots v_k$
  conforms to $\maxstrategy$ and has length at least $(m-1)|V|+1$
  (because $i<j\leq |V|+1$, implying that $j-i\leq |V|$), we have
  $\TP(v_1\cdots v_i v_{j+1}\cdots v_k)\geq m-1-(|V|-1)W$, thus
  $\TP( v_1 \cdots v_k) \geq m-(|V|-1)W$.

  To prove $(iii)$, let $\maxstrategy'$ be a strategy of $\MaxPl$ in
  $\game^k$ defined by
  $\maxstrategy'(v,j)=(\interior,\maxstrategy(v),j) $ and
  $\maxstrategy'(\exterior,v,j)= (v,j-1)$ for all $v\in \vertices$ and
  $j\geq k$. Let $\pi$ be a play starting in $(v,k)$ and conforming to
  $\maxstrategy'$. If $\pi$ does not reach $\target$, then
  $\MCR(\pi)=+\infty \geq m$. If $\pi$ reaches the target then
  $\proj(\pi)$ is of the form $v_1\cdots v_\ell\target^\omega$, with
  $\MCR(\pi)=\TP(v_1\cdots v_\ell)$. It is clear by construction of
  $\maxstrategy'$ that $v_1\cdots v_\ell$ is a finite play of $\game$
  that conforms to $\maxstrategy$. Furthermore,
  $\ell \geq k\geq (m+(|V|-1)W) |V|+1$ thus, from $(ii)$, we have that
  $\TP(v_1\cdots v_\ell)\geq m$. This implies $\MCR(\pi)\geq
  m$.
  Hence, every play in $\game^k$ conforming to $\maxstrategy'$ and
  starting in $(v,k)$ has a value at least $m$, which means that
  $\Val_{\game^{k}}(v,k)\geq m$.

  Item $(iv)$ is then a direct consequence of $(iii)$.
\end{proof}

We are now ready to state and prove the inductive invariant allowing us
to show the correctness of Algorithm~\ref{algo:value-iter-TPO}.
\begin{lemma}\label{lem:invariant}
  Before the $j$-th iteration of the external loop of
  Algorithm~\ref{algo:value-iter-TPO}, we have $\Value_{\game^j}(v,j)
  \leq Y^j(v) \leq \Value_\game(v)$ for all vertices $v\in\vertices$.
\end{lemma}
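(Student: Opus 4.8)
The plan is to prove the two inequalities $\Value_{\game^j}(v,j)\leq Y^j(v)$ and $Y^j(v)\leq \Value_\game(v)$ \emph{simultaneously} by induction on $j$, after first extracting the update rule of the outer loop from the semantics of the inner loop. Write $Y^j$ for the contents of $\mathsf Y$ after the $j$-th iteration of the outer loop (so $Y^0$ is the vector $-\infty$ installed at line~\ref{line:tpo-init}, and this is the vector tested at the top of the $(j{+}1)$-th pass). The discussion preceding the lemma, together with the correctness of Algorithm~\ref{algo:value-iteration-RT} stated in Proposition~\ref{the:rtpo-ca-marche-youpi}, shows that lines~\ref{line:begin}--\ref{line:end} replace $\mathsf Y$ by $\operatorBis(Y^{j-1})$, after which line~\ref{line:3-line-infty} applies the rounding map $\rho$ that sends every component exceeding $(|\vertices|-1)W$ to $+\infty$ and leaves the others unchanged. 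Hence $Y^j=\rho(\operatorBis(Y^{j-1}))$. On the other side, setting $Z^j=(\Value_{\game^j}(v,j))_{v\in\vertices}$ with the convention $\Value_{\game^0}(\cdot,0)=-\infty$, we have $Z^0=-\infty=Y^0$ and $Z^j=\operatorBis(Z^{j-1})$, so the claimed invariant reads $Z^j\vleq Y^j\vleq \Value_\game$.

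For the lower bound $Z^j\vleq Y^j$ the argument is routine. The base case $j=0$ is the equality $Z^0=Y^0=-\infty$. For the inductive step, the induction hypothesis $Z^{j-1}\vleq Y^{j-1}$ and the monotonicity of $\operatorBis$ give $Z^j=\operatorBis(Z^{j-1})\vleq \operatorBis(Y^{j-1})$; since $\rho$ can only increase a vector (i.e. $x\vleq \rho(x)$ pointwise), we get $\operatorBis(Y^{j-1})\vleq \rho(\operatorBis(Y^{j-1}))=Y^j$, whence $Z^j\vleq Y^j$.

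The upper bound $Y^j\vleq \Value_\game$ is the delicate point, precisely because $\rho$ moves in the wrong direction for this inequality. The base case is $Y^0=-\infty\vleq \Value_\game$. Assuming $Y^{j-1}\vleq \Value_\game$, monotonicity of $\operatorBis$ followed by the pre-fixed-point property of $\Value_\game$ (Lemma~\ref{lem:pre-fixed-point}) yields $\operatorBis(Y^{j-1})\vleq \operatorBis(\Value_\game)\vleq \Value_\game$. It then remains to check that applying $\rho$ does not break this bound, and this is the one step that leaves the plain monotone-operator framework and uses the a priori bound on finite values. Fix a vertex $v$: if $\operatorBis(Y^{j-1})(v)\leq (|\vertices|-1)W$ then $\rho$ fixes it and $Y^j(v)=\operatorBis(Y^{j-1})(v)\leq \Value_\game(v)$; if instead $\operatorBis(Y^{j-1})(v)>(|\vertices|-1)W$, then from $\operatorBis(Y^{j-1})(v)\leq \Value_\game(v)$ we obtain $\Value_\game(v)>(|\vertices|-1)W$, and since by Corollary~\ref{cor:bornesValeursTP} every finite value lies in $[-(|\vertices|-1)W,(|\vertices|-1)W]$, the only remaining possibility is $\Value_\game(v)=+\infty$, so $Y^j(v)=+\infty=\Value_\game(v)$. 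In both cases $Y^j(v)\leq \Value_\game(v)$, completing the induction.

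This invariant is exactly what Theorem~\ref{thm:VI-TP} needs: $(Y^j)_j$ is non-decreasing (as $\operatorBis$ and $\rho$ are both monotone) and squeezed between $Z^j$ and $\Value_\game$; since $Z^j\to \Value_\game$ (by Proposition~\ref{TrueTP2MCR} for finite and $-\infty$ values, and by Lemma~\ref{lem:limit} for $+\infty$ values), the bounds force $Y^j\to \Value_\game$, and an integer/$\pm\infty$-valued monotone sequence must stabilise, yielding termination at $\Value_\game$. The main obstacle, as indicated, is the upper-bound step: one must argue that the rounding to $+\infty$ in line~\ref{line:3-line-infty} never overshoots the true value, which rests on the boundedness of finite values (Corollary~\ref{cor:bornesValeursTP}) rather than on any order property of $\rho$.
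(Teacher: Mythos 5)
Your proof is correct and follows essentially the same route as the paper's: identify the outer-loop update as $\operatorBis$ followed by the rounding of line~\ref{line:3-line-infty}, get the lower bound from monotonicity of $\operatorBis$ plus the fact that rounding only increases values, and get the upper bound from monotonicity together with Lemma~\ref{lem:pre-fixed-point} and the case analysis via Corollary~\ref{cor:bornesValeursTP}. The only difference is presentational (naming the rounding map $\rho$ and adding a closing paragraph on how the invariant feeds into Theorem~\ref{thm:VI-TP}), not mathematical.
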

\begin{proof}
  For $j=0$, we have $Y^0(v)=-\infty=\Value_{\game^0}(v,0)$ for all
  vertex $v\in \vertices$.  Suppose then that the invariant holds for
  $j\geq 0$. We know that
  $\Value_{\game^{j+1}}(v,j+1)
  =\operatorBis((\Value_{\game^{j}}(v',j))_{v'\in\vertices})$.
  Moreover, after the assignment of line~\ref{line:end}, by definition
  of $\operatorBis$, variable $\mathsf Y$ contains
  $\operatorBis(Y^j)$. The operation performed on
  line~\ref{line:3-line-infty} only increases the values of vector
  $\mathsf Y$, so that at the end of the $j$th iteration, we have
  $\operatorBis(Y^j)\vleq Y^{j+1}$. Since $\operatorBis$ is
  monotonous, and by the invariant at step $j$, we obtain
  \[\Value_{\game^{j+1}}(v,j+1)
  =\operatorBis((\Value_{\game^{j}}(v',j))_{v'\in\vertices}) \leq
  \operatorBis(Y^j)\leq Y^{j+1}\,.\]
  Moreover, using again the monotony of $\operatorBis$ and
  Lemma~\ref{lem:pre-fixed-point}, we have
  \[\operatorBis(Y^j)\vleq \operatorBis(\Value_\game)\vleq \Value_\game\,.\]
  A closer look at line~\ref{line:3-line-infty} shows that
  $\operatorBis(Y^j)$ and $Y^{j+1}$ coincide over vertices $v$ such
  that $\operatorBis(Y^j)(v)\leq (|\vertices|-1)  W$, and
  otherwise $Y^{j+1}(v)=+\infty$. Hence, if $\operatorBis(Y^j)(v)\leq
  (|\vertices|-1)  W$, we directly obtain
  $Y^{j+1}(v)=\operatorBis(Y^j)(v)\leq \Value_\game(v)$. Otherwise, we
  know that $\Value_\game(v)>(|\vertices|-1)  W$. By
  Corollary~\ref{cor:bornesValeursTP}, we know that
  $\Value_\game(v)=+\infty$, so that $Y^{j+1}(v)=+\infty =
  \Value_\game(v)$. In the overall, we have proved
  \begin{equation}
    \Value_{\game^{j+1}}(v,j+1) \leq Y^{j+1}(v) \leq
    \Value_\game(v)
  \end{equation}
\end{proof}

We are now able to prove the correction of the algorithm.
\begin{proof}[Proof of Theorem~\ref{thm:VI-TP}]
  For $j=K$ (remember that $K$ was defined in the previous section),
  the invariant of Lemma~\ref{lem:invariant} becomes
  \[\Value_{\game^{K}}(v,K) \leq Y^{K}(v) \leq
  \Value_\game(v)\] for all vertices $v\in\vertices$. Notice that the
  iteration may have stopped before iteration $K$, in which case the
  sequence $(Y^{j})_{j\geq 0}$ may be considered as stationary. In
  case $\Value_\game(v)\neq +\infty$, Proposition~\ref{TrueTP2MCR}
  proves that $\Value_{\game^{K}}(v,K)=\Value_\game(v)$, so that we
  have $Y^{K}(v) = \Value_\game(v)$. In case $\Value_\game(v)=
  +\infty$, Proposition~\ref{TrueTP2MCR} shows that
  $\Value_{\game^{K}}(v,K)>(|\vertices|-1)  W$: by the
  operation performed at line~\ref{line:3-line-infty}, we obtain that
  $Y^K(v)=+\infty=\Value_\game(v)$.

  Hence, $K=|\vertices|  (2 (|\vertices|-1)   W +1)$ is an upper bound on
  the number of iterations before convergence of
  Algorithm~\ref{algo:value-iter-TPO}, and moreover, at the
  convergence, the algorithm outputs the vector of optimal values of
  the total-payoff game.
\end{proof}

\section{An example of parametric total-payoff
  game\label{sec:monster-example}}

We depict in Fig.~\ref{fig:monster-example} a weighted graph
parametrized with the number $n$ of \emph{layers}, as well as the
greatest weight $W>0$. For both the min-cost reachability objective
(with $\target$ the target) and the total-payoff objectives, the
values of the vertices are as follows: vertices $v_{3k+1}$ and
$v_{3k+2}$ ($k\in\{0,\ldots,n-1\}$) have value $0$, whereas vertices
$v_{3k}$ ($k\in\{1,\ldots,n\}$) have value $W$. In our add-on
prototype of PRISM, we model the min-cost objective with
$\langle\!\langle \MaxPl \rangle\!\rangle \mathsf{R}^{\max=?} [
\mathsf{F}^\infty \target]$
for $\MaxPl$ and
$\langle\!\langle \MinPl\rangle\!\rangle \mathsf{R}^{\min=?} [
\mathsf{F}^\infty \target]$
for $\MinPl$, whereas total-payoff objectives are modelled by
$\langle\!\langle \MaxPl\rangle\!\rangle \mathsf{R}^{\max=?} [
\mathsf{F}^c \bot]$
and
$\langle\!\langle \MinPl\rangle\!\rangle \mathsf{R}^{\min=?} [
\mathsf{F}^c \bot]$.

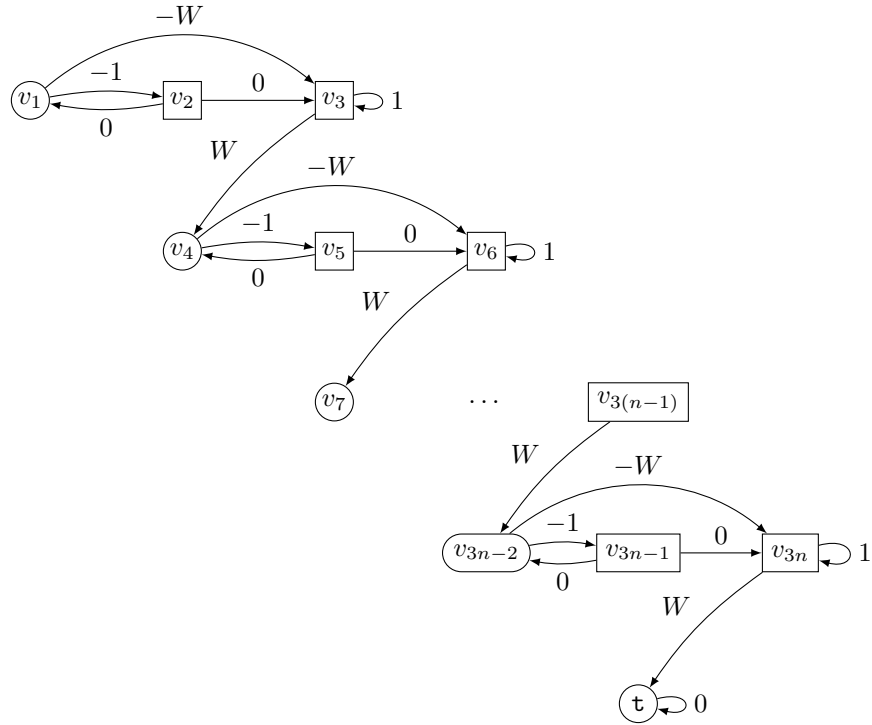
\begin{figure}[tbp]
  \centering
  \begin{tikzpicture}[node distance=2cm,auto,->,>=latex]
    \node[player1](1){\makebox[0mm][c]{$v_1$}}; 
    \node[player2](2)[right of=1]{\makebox[0mm][c]{$v_2$}}; 
    \node[player2](3)[right of=2]{\makebox[0mm][c]{$v_3$}};
    \node[player1](4)[below of=2]{\makebox[0mm][c]{$v_4$}}; 
    \node[player2](5)[right of=4]{\makebox[0mm][c]{$v_5$}}; 
    \node[player2](6)[right of=5]{\makebox[0mm][c]{$v_6$}};
    \node[player1](7)[below of=5]{\makebox[0mm][c]{$v_7$}}; 

    \node()[right of=7]{\ldots};
    \node[player2](3n-3)[right of=7,xshift=2cm]{$v_{3(n-1)}$};
    \node[player1](3n-2)[below of=7, xshift=2cm]{$v_{3n-2}$}; 
    \node[player2](3n-1)[right of=3n-2]{$v_{3n-1}$}; 
    \node[player2](3n)[right of=3n-1]{$v_{3n}$};
    \node[player1](tg)[below of=3n-1]{\makebox[0mm][c]{$\target$}};

    \path 
    (1) edge[bend left=40] node[above]{$-W$} (3)
    (1) edge[bend left=10] node[above]{$-1$} (2) 
    (2) edge[bend left=10] node[below]{$0$} (1)
    edge node[above]{$0$} (3)
    (3) edge[loop right] node[right]{$1$} (3)
    (3) edge[bend right=10] node[above left]{$W$} (4);

    \path 
    (4) edge[bend left=40] node[above]{$-W$} (6)
    (4) edge[bend left=10] node[above]{$-1$} (5) 
    (5) edge[bend left=10] node[below]{$0$} (4)
    edge node[above]{$0$} (6)
    (6) edge[loop right] node[right]{$1$} (6)
    (6) edge[bend right=10] node[above left]{$W$} (7);

    \path 
    (3n-3) edge[bend right=10] node[above left]{$W$} (3n-2)
    (3n-2) edge[bend left=40] node[above]{$-W$} (3n)
    (3n-2) edge[bend left=10] node[above]{$-1$} (3n-1) 
    (3n-1) edge[bend left=10] node[below]{$0$} (3n-2)
    edge node[above]{$0$} (3n)
    (3n) edge[loop right] node[right]{$1$} (3n)
    (3n) edge[bend right=10]
    node[above left]{$W$} (tg)
    (tg) edge[loop right] node[right]{$0$} (tg);
    
  \end{tikzpicture}
  \caption{Parametric weighted graph}
  \label{fig:monster-example}
\end{figure}

We present in the following table the time for resolution (in
seconds), the number of iterations in the external loop, and the total
number of iterations in the internal loops for the total-payoff
resolution, for various values of parameters $W$ and $n$:

\smallskip
{\hspace{-1cm}\smaller[2]
  \begin{tabular}{|c|c|c|c|c|c|c|} \hline
    $W \backslash n$ & 100 & 200 & 300 & 400 & 500 \\\hline %
    50 & 0.52 / 151 / 12603 & 1.90 / 251 / 22703 & 3.84 / 351 / 32803 & 6.05
    / 451 / 42903 & 9.83 / 551 / 53003 \\\hline %
    100 & 1.00 / 201 / 30103 & 3.48 / 301 / 50203 & 8.64 / 401 / 70303 & 13.53 / 501 / 90403 & 22.64 / 601 / 110503 \\\hline %
    150 & 1.89 / 251 / 52603 & 6.02 / 351 / 82703 & 12.88 / 451 / 112803 & 22.13 / 551 / 142903 & 34.16 / 651 / 173003 \\\hline 
    200 & 2.96 / 301 / 80103 & 9.62 / 401 / 120203 & 18.33 / 501 / 160303 & 30.42 / 601 / 200403 & 45.64 / 701 / 240503 \\\hline %
    250 & 3.92 / 351 / 112603 & 13.28 / 451 / 162703 & 25.18 / 551 / 212803 & 46.23 / 651 / 262903 & 71.51 / 751 / 313003 \\\hline
\end{tabular}
}
\medskip

Notice that due to the very little memory consumption of the
algorithm, there is no risk of running out of memory. However, the
execution time can become very large. For instance, in case $W=500$
and $n=1000$, the execution time becomes $536 s$ whereas the total
number of iterations in the internal loop is greater than a million.

On this example, with $n+1$ components, the acceleration heuristics
presented in details in Appendix~\ref{sec:accel-heur-rtp} gives
excellent results. Indeed, by combining both heuristics, we obtain the
following results:

\smallskip
{\smaller[2]
  \begin{tabular}{|c|c|c|c|c|c|c|} \hline
    $W \backslash n$ & 100 & 200 & 300 & 400 & 500 \\\hline %
    50 & 0.01 / 402 / 1404 & 0.08 / 802 / 2804 & 0.22 / 1202 / 4204 & 0.38
    / 1602 / 5604 & 0.42 / 2002 / 7004 \\\hline %
    100 & 0.02 / 402 / 1404 & 0.09 / 802 / 2804 & 0.19 / 1202 / 4204 & 0.33 / 1602 / 5604 & 0.40 / 2002 / 7004 \\\hline %
    150 & 0.03 / 402 / 1404 & 0.09 / 802 / 2804 & 0.18 / 1202 / 4204 & 0.29 / 1602 / 5604 & 0.47 / 2002 / 7004 \\\hline 
    200 & 0.02 / 402 / 1404 & 0.07 / 802 / 2804 & 0.16 / 1202 / 4204 & 0.23 / 1602 / 5604 & 0.47 / 2002 / 7004 \\\hline %
    250 & 0.01 / 402 / 1404 & 0.07 / 802 / 2804 & 0.17 / 1202 / 4204 & 0.29 / 1602 / 5604 & 0.48 / 2002 / 7004 \\\hline
\end{tabular}
}
\medskip

Notice that the number of iterations in both internal and external
loops do no longer depend on the choice of parameter $W$, as well as
the execution time. With respect to the execution time, the decrease
from the case without acceleration is even larger, since the updates
of vector $\mathsf X$ inside the inner loop need only to be performed
on the vertices of the current component. For large instances, the
execution time may again become very large, but in case $W=500$ (as
previously said, this value is independent of the result) and
$n=1000$, it shrinks to $2.3 s$ whereas the total number of iterations
in the internal loop becomes $14 004$, i.e., 5 orders of magnitude
less than for the algorithm without acceleration heuristics.

\section{Acceleration heuristics in MCR games\label{sec:accel-heur-rtp}}
Algorithm~\ref{algo:value-iteration-RT-acc} and
\ref{algo:value-iter-TPO-acc} are enhanced versions of
Algorithm~\ref{algo:value-iteration-RT} and \ref{algo:value-iter-TPO}
respectively, that apply the acceleration heuristics described at the
end of Section~\ref{sec:experiments}.  \medskip

\begin{algorithm}[h]
  \DontPrintSemicolon%
  \KwIn{min-cost reachability game $\gameEx[\MCR[\{\target\}]]$,
    SCC-decomposition
    $\decomposition\colon\vertices\to\{0,1,\ldots,p\}$ and an oracle
    $\mathcal O(q,v)$ outputting sets
    $(S_v)_{v\in\decomposition^{-1}(q)}$}%
  \SetKw{value}{\ensuremath{\mathsf{X}}}
  \SetKw{prevvalue}{\ensuremath{\mathsf{X}_{pre}}}

  \BlankLine

  $\value(\target):=0$\;
  \For{$q=1$ \KwTo $p$}%
  {
    $(S_v)_{v\in\decomposition^{-1}(q)} := \mathcal O(q,\value)$
    \tcc*[r]{Use of the oracle}%
    \lForEach{$v\in\decomposition^{-1}(q)$}{$\value(v):=\max S_v$}
    
    \Repeat{$\value=\prevvalue$}{%
      $\prevvalue := \value$\;%
      \ForEach{$v\in\decomposition^{-1}(q)\cap\maxvertices$}%
      {$\value(v) := \max_{v'\in\edges(v)}
        \big(\edgeweights(v,v')+\prevvalue(v')\big)$}%
      \ForEach{$v\in\decomposition^{-1}(q)\cap\minvertices$}%
      {$\value(v) := \min_{v'\in\edges(v)}
        \big(\edgeweights(v,v')+\prevvalue(v')\big)$}%
      \lForEach{$v\in\decomposition^{-1}(q)$}{$\value(v):=\max \big(S_v \cap
        [-\infty,\value(v)]\big)$}%
    }%
  }%
  
  \Return{$\value$}
  \caption{Accelerated value iteration algorithm for min-cost
    reachability games}\label{algo:value-iteration-RT-acc}
\end{algorithm}

\begin{algorithm}[tbp]
  \DontPrintSemicolon %
  \KwIn{Total-payoff game $\game=\gameEx[\TP]$, SCC-decomposition
    $\decomposition\colon\vertices\to\{0,1,\ldots,p\}$ and an oracle
    $\mathcal O(q,v)$ outputting sets
    $(S_v)_{v\in\decomposition^{-1}(q)}$}
  \SetKw{value}{\ensuremath{\mathsf{Y}}}%
  \SetKw{prevvalue}{\ensuremath{\mathsf{Y}_{pre}}}%
  \SetKw{valueint}{\ensuremath{\mathsf{X}}}%
  \SetKw{prevvalueint}{\ensuremath{\mathsf{X}_{pre}}}%
  \BlankLine
  
  \lForEach{$v\in\vertices$}{$\value(v) := -\infty$}%
  \For{$q=0$ \KwTo $p$}%
  {%
    $(S_v)_{v\in\decomposition^{-1}(q)} := \mathcal
    O(q,\value)$\tcc*[r]{Use of the oracle} %
    \Repeat{$\value=\prevvalue$}{ %
      \ForEach{$v\in\decomposition^{-1}(q)$}{$\prevvalue(v):=\value(v)$;
        $\value(v):= \max (0,\value(v))$; $\valueint(v):=\max S_v$}
      \Repeat{$\valueint=\prevvalueint$}{%
        $\prevvalueint:=\valueint$\;%
        \ForEach{$v\in\maxvertices\cap
          \decomposition^{-1}(q)$}{$\valueint(v)
          := \max_{v'\in \edges(v)}
          \big[\edgeweights(v,v')+\min(\prevvalueint(v'),\value(v'))\big]$} %
        \ForEach{$v\in\minvertices\cap
          \decomposition^{-1}(q)$}{$\valueint(v)
          := \min_{v'\in\edges(v)}
          \big[\edgeweights(v,v')+\min(\prevvalueint(v'),\value(v'))\big]$} %
        \lForEach{$v\in\decomposition^{-1}(q)$}%
        {$\valueint(v) := \max \big(S_v \cap
          [-\infty,\valueint(v)]\big)$} %
      } %
      $\value:=\valueint$\;%
      \lForEach{$v\in\vertices$ \emph{such that}
        $\value(v) > (|\vertices|-1) W$}%
      {$\value(v) := +\infty$}%
    }%
  }%
  
  \Return{$\value$}
  \caption{Accelerated value iteration algorithm for total-payoff
    games\label{algo:value-iter-TPO-acc}}
\end{algorithm}

\end{document}